\def\eqref#1{equation~\ref{#1}}
\def\1{\bm{1}}
\def\rr{{\textnormal{r}}}
\def\rvf{{\mathbf{f}}}
\def\rvs{{\mathbf{s}}}
\def\rvw{{\mathbf{w}}}
\def\rvx{{\mathbf{x}}}
\def\vv{{\bm{v}}}
\DeclareMathAlphabet{\mathsfit}{\encodingdefault}{\sfdefault}{m}{sl}
\SetMathAlphabet{\mathsfit}{bold}{\encodingdefault}{\sfdefault}{bx}{n}
\DeclareMathOperator*{\argmin}{arg\,min}
\DeclareMathOperator{\sign}{sign}
\newcommand{\norm}[1]{\left\lVert#1\right\rVert}
\newcommand{\xx}{\mathbf{x}}
\renewcommand{\rr}{\mathbf{r}}
\newcommand{\yy}{\mathbf{y}}
\newcommand{\bfy}{\mathbf{y}}
\newcommand{\ww}{\mathbf{w}}
\newtheorem{proposition}{Proposition}
\newtheorem*{definition*}{Definition}
\DeclareMathOperator{\rmsd}{RMSD}
\newcommand{\comment}[1]{}
\newcommand*{\@rowstyle}{}
\newcommand*{\rowstyle}[1]{
  \gdef\@rowstyle{#1}%
  \@rowstyle\ignorespaces%
}
\newcolumntype{=}{
  >{\gdef\@rowstyle{}}%
}
\newcolumntype{+}{
  >{\@rowstyle}%
}
\newcommand{\new}[1]{\textcolor{blue}{#1}}
\DeclareMathOperator{\rmsdalign}{RMSDAlign}
\DeclareMathOperator{\uni}{Uni}
\newcommand{\PP}{\mathbb{P}}
\newcommand{\xlig}{\mathbf{x}}
\newcommand{\cc}{\mathbf{c}}
\renewcommand{\ss}{\mathbf{s}}
\newcommand{\dd}{\mathbf{d}}
\newcommand{\pp}{\mathbf{p}}
\newcommand{\bb}{\mathbf{b}}
\renewcommand{\vv}{\mathbf{v}}
\newcommand{\lnorm}{\left|\left|}
\newcommand{\rnorm}{\right|\right|}
\def\all{all}
\all \typeout{Including all files.} \else 
\begin{document}

\title{Modeling Molecular Structures \\with Intrinsic Diffusion Models}

\author{Gabriele Corso}
\prevdegrees{B.A., University of Cambridge (2021)}
\department{Department of Electrical Engineering and Computer Science}

\degree{Masters of Science}

\degreemonth{February}
\degreeyear{2023}
\thesisdate{January 25, 2023}


\supervisor{Tommi S. Jaakkola}{Professor of Electrical Engineering and Computer Science}

\supervisor{Regina Barzilay}{Distinguished Professor for AI and Health}

\chairman{Leslie A. Kolodziejski}{Professor of Electrical Engineering and Computer Science\\Chair, Department Committee on Graduate Students}

\maketitle



\cleardoublepage
\setcounter{savepage}{\thepage}
\begin{abstractpage}
%
%
%
Since its foundations, more than one hundred years ago, the field of structural biology has strived to understand and analyze the properties of molecules and their interactions by studying the structure that they take in 3D space. However, a fundamental challenge with this approach has been the dynamic nature of these particles, which forces us to model not a single but a whole distribution of structures for every molecular system. 

This thesis proposes Intrinsic Diffusion Modeling, a novel approach to this problem based on combining diffusion generative models with scientific knowledge about the flexibility of biological complexes. The knowledge of these degrees of freedom is translated into the definition of a manifold over which the diffusion process is defined. This manifold significantly reduces the dimensionality and increases the smoothness of the generation space allowing for significantly faster and more accurate generative processes.

We demonstrate the effectiveness of this approach on two fundamental tasks at the basis of computational chemistry and biology: molecular conformer generation and molecular docking. In both tasks, we construct the first deep learning method to outperform traditional computational approaches achieving an unprecedented level of accuracy for scalable programs.

\end{abstractpage}


\cleardoublepage

\section*{Acknowledgments}

First, I would like to thank my advisors Tommi Jaakkola and Regina Barzilay without whom this work would have never been possible. They took a chance on me as an undergrad they never met and gave me full freedom from day one to explore my curiosity. I am sure that for the rest of my Ph.D. and life journey, I will continue to be inspired by Regina's strength and enthusiasm and learn from Tommi's incredible technical and creative insight to become the researcher and mentor I aspire to be.

I am also very thankful to all my collaborators and labmates for the incredible help and support they have given me. In particular, to Octavian-Eugen Ganea (1987-2022), dear colleague, mentor, and friend without whom this work would have never been possible; and to Bowen Jing and Hannes St\"ark with whom I have shared over the past year the research journey that has led to the work presented in this thesis and was made of countless whiteboard discussions, failed experiments, and draft rewrites. 

I would also like to thank Professor Pietro Liò and all the mentors without whose kindness and support I would never be where I am and Renato Berlinghieri, Theo Olausson, Sara Pidò, and all the friends with whom I share this period of my journey through life. 

Finally, this thesis is dedicated to my family, in particular, my parents, Luisella and Mariano, and my fiancée Maëlle-Marie. I am deeply grateful for their unwavering love and sacrifices that have always allowed me to follow my passions and dreams.


\pagestyle{plain}
\tableofcontents

\chapter{Introduction} \label{chapter:intro}


Many of the functions that small molecules and proteins have depend on the 3D structures their atoms take in space. Over the past century, since the development of X-ray crystallography by Max Von Laue in 1912, the field of structural biology has flourished and has been the base of many scientific discoveries and biological models such as the double helical structure of DNA \cite{watson1953molecular}. Since the initial development of computers, researchers have been trying to use algorithms to directly model the structure formed by different molecular complexes without the need for crystallography or other expensive experimental methods. 

One of the fundamental tasks in structural biology, referred to as molecular docking, consists of predicting the position, orientation, and conformation of a ligand when bound to a target protein. The development of accurate docking computational methods in this effort would have a huge impact on drug discovery where researchers look for molecules that are able to bind and inhibit certain protein functions.
Traditional approaches for docking \cite{trott2010autodock,halgren2004glide, koes2013smina} rely on scoring functions that estimate the correctness of a proposed structure and an optimization algorithm that searches for the global maximum of the scoring function. However, since the search space is vast and the landscape of the scoring functions rugged, these methods tend to be too slow and inaccurate.


Recently, the deep learning method AlphaFold2 \cite{jumper2021highly} revolutionized the field of structural biology by being able to accurately (median RMSD below 1\AA{}) predict the folded structure of proteins. AlphaFold2 outperformed by a very large margin existing methods, often based on expensive searches, in the CASP14 competition \cite{moult14critical}, and, since then, has had a significant impact on a large number of downstream applications. Researchers have tried to apply similar ideas and methods \cite{equibind, Lu2022TankBind} to molecular docking without, however, achieving any substantial improvement in accuracy over established search-based methods.

In this thesis, we identify the underlying issue with these existing deep learning methods for molecular docking to be their regression-based training paradigm. This approach fails to capture the flexibility present in molecular structures and to account for model uncertainty. To deal with these two factors, the aleatoric and epistemic uncertainty, that characterize most computational structural biology open challenges, we propose to frame structure prediction as a generative problem.

In recent years, the intersection of generative modeling and deep learning has seen tremendous success with large models now able to generate very realistic text \cite{brown2020language} and images \cite{song2021score}. Deep generative models could hold the key to a solution to the problem of modeling molecular flexibility, however, the direct application of the methods developed for images and natural language fails due to the issues of very high dimensionality and data scarcity. 

It is, therefore, crucial to use scientific insights to build the right degrees of freedom into the generative processes and the right symmetries and inductive biases into the models. This thesis presents Intrinsic Diffusion Modeling (IDM), a generative modeling scheme that builds on the diffusion modeling framework. IDM is based on (1) identifying the extrinsic manifold describing the main degrees of freedom of the structure under analysis, (2) defining the diffusion process on a tractable intrinsic space that can be mapped to the extrinsic manifold, and (3) constructing an equivariant extrinsic-to-intrinsic model mapping points from the extrinsic manifold to scores defined in the tangent space of the intrinsic space.

We first apply IDM to molecular conformer generation, the task of determining the set of conformations that a molecule can take in 3D space. In this setting, we develop \textit{torsional diffusion}, a generative model that, intuitively, learns to model the whole distribution of torsion angles of small molecules and can generate conformations by iteratively refining its position over this torsional manifold. On a standard benchmark of drug-like molecules, \textit{torsional diffusion} generates superior conformer ensembles compared to machine learning and cheminformatics methods in terms of both RMSD and chemical properties, and is orders of magnitude faster than previous diffusion-based models.

We then move to the more complex and data-scarce problem of molecular docking, where we identify the main degrees of freedom of a pose as the position of the ligand relative to the protein, its orientation in the pocket, and the torsion angles describing its conformation. We map the resulting pose manifold to the product space of the degrees of freedom (translational, rotational, and torsional) involved in docking and develop an efficient diffusion process on this space. Empirically, \textsc{DiffDock} obtains a 38\% top-1 success rate (RMSD<2A) on PDBBind, significantly outperforming the previous state-of-the-art of traditional docking (23\%) and deep learning (20\%) methods. Moreover, \textsc{DiffDock} has fast inference times and provides confidence estimates with high selective accuracy.

\section{Overview of Thesis}

In Chapter \ref{chapter:general}, we first provide a general introduction to diffusion generative models, this exposition is mainly based on the formalisation provided by Song et al. \cite{song2021score}. Then, we present \textit{subspace diffusion generative models} where, in the setting of image generation, we show that restricting the diffusion via projections onto subspaces can provide improved runtime and image quality. This section summarises the manuscript:

\begin{displayquote}
\textbf{Subspace Diffusion Generative Models.} Bowen Jing*, Gabriele Corso*, Renato Berlinghieri, and Tommi Jaakkola.  17th European Conference on Computer Vision (ECCV 2022). \cite{jing2022subspace}
\end{displayquote}

At the end of Chapter \ref{chapter:general}, we outline the main ideas and components behind the IDM framework at an abstract level. 

In Chapter \ref{chapter:torsional}, we present \textit{torsional diffusion}, the instantiation of our framework for molecular conformer generation. This chapter is based on the manuscript:

\begin{displayquote}
\textbf{Torsional Diffusion for Molecular Conformer Generation.} Bowen Jing*, Gabriele Corso*, Jeffrey Chang, Regina Barzilay, and Tommi Jaakkola.  Advances in Neural Information Processing Systems 36 (NeurIPS 2022). \cite{jing2022torsional}
\end{displayquote}

Chapter \ref{chapter:diffdock} details how we applied and extended the framework for molecular docking to produce \textsc{DiffDock}. This chapter is based on the manuscript:

\begin{displayquote}
\textbf{DiffDock: Diffusion Steps, Twists, and Turns for Molecular Docking.} Gabriele Corso*, Hannes Stärk*, Bowen Jing*, Regina Barzilay, and Tommi Jaakkola.  11th International Conference on Learning Representations (ICLR 2023). \cite{corso2022diffdock}
\end{displayquote}

Finally, in Chapter \ref{chapter:conclusion}, we conclude by summarising the thesis and discussing the avenues for future research that this work opens.

\chapter{Intrinsic Diffusion Models} \label{chapter:general}

\section{Deep Generative Models}

Over the past decade, deep learning methods \cite{goodfellow2016deep} have achieved impressive results in the supervised learning tasks of classification and regression. In image classification, neural networks have been very successful \cite{krizhevsky2017imagenet} in learning to predict $p(y|x)$ the probability of image $x$ being of some label $y$ by minimizing some loss such as a multi-class cross-entropy. This approach of predicting probability values for every class is, however, not feasible in continuous spaces, where regression methods aim to learn an estimator $\hat{y}(x)$ that minimizes some loss function, often the mean squared error $\mathbb{E}_{p(\cdot | x)}[(\hat{y}(x)-y)^2]$. AlphaFold2 \cite{jumper2021highly} is one successful example of this approach learning the protein structure $y$ given its sequence $x$.

However, most problems on real-world continuous domains are not deterministic, therefore an accurate solution to them requires modeling the whole (conditional) probability distribution $p(\cdot|x)$. This is the goal of the field of generative modeling. Although one would often ideally want to obtain an analytical and tractable form of $p(\cdot|x)$, this is unfeasible for most complex real-world distributions, therefore, the goal of generative models is typically that of learning how to sample $y\sim p(\cdot|x)$ and, often, evaluate the likelihood of a given point $p(y|x)$. 

The intersection of the fields of deep learning and generative modeling has been a particularly flourishing one over the past decade. Developments in deep neural networks have provided very powerful function approximators, however, the question of how to use them to learn probability distributions is non-trivial and requires significant ingenuity. To answer this question a wide range of techniques have been proposed including autoregressive models, variational autoencoders \cite{kingma2013auto}, continuous normalizing flows \cite{dinh2016density}, generative adversarial networks \cite{goodfellow2020generative} and diffusion models \cite{sohl2015deep, song2021score}.

\section{Diffusion Generative Models}

Inspired by statistical physics, diffusion generative models\footnote{Also known as score-based generative models, denoising diffusion models or just diffusion models.} are a class of generative models based on the idea that, adding noise to the data distribution, one defines a gradual mapping between the data distribution and an approximate prior distribution that can be easily sampled \cite{sohl2015deep}. A neural network is then trained to reverse small steps of this noise addition process allowing to sample from the data distribution starting from a sample of the prior.

In this thesis, we will mainly follow the stochastic differential equation (SDE) formalization of diffusion models introduced by Song et al. \cite{song2021score}. In this formalization, the data distribution is considered to be the starting distribution $p_0(\rvx)$ of a forward diffusion process described, in Euclidean space, by the Ito SDE:
\begin{equation}
    d\rvx = \rvf(\rvx, t)dt + g(t) d\rvw, \; t \in (0,T)
\end{equation}
where $\rvw$ is the Wiener process and $\rvf(\rvx, t)$ and $g(t)$ are chosen functions referred to as \textit{drift} and \textit{diffusion} coefficients. As $t$ grows, the distribution approaches a Gaussian, therefore, for large enough $T$, we can approximate a sample from the prior $p_T(\rvx)$ by sampling from a Gaussian distribution. A theorem from Anderson \cite{anderson1982reverse} guarantees that the reverse of a diffusion process is also a well-defined diffusion process given by the following reverse-time SDE:
\begin{equation}
    d\rvx = [\rvf(\rvx, t)-g(t)^2 \nabla_{\rvx} \log p_t(\rvx)]dt + g(t) d\rvw
\end{equation}
Therefore, if we know $\nabla_{\rvx} \log p_t(\rvx)$ for all $t\in(0,T)$ we can sample from $p_0(\rvx)$ by sampling from $p_T(\rvx)$ and running the reverse-time SDE.

Moreover, Song et al. \cite{song2021score} also showed that the score $\nabla_{\rvx} \log p_t(\rvx)$ can be used to define the probability flow ODE, a deterministic process whose trajectories have the same marginal probability densities as the SDE:
\begin{equation}
    d\rvx = [\rvf(\rvx, t)-\frac{1}{2} g(t)^2 \nabla_{\rvx} \log p_t(\rvx)]dt
\end{equation}

To obtain estimates of $\nabla_{\rvx} \log p_t(\rvx)$ we train a score model $s_{\theta}(\rvx,t)$ via denoising score matching \cite{song2019generative}:
\begin{equation}
   \mathbf{\theta}^* = \argmin_\mathbf{\theta}
   \mathbb{E}_{t}\Big\{\lambda(t) \mathbb{E}_{\rvx(0)}\mathbb{E}_{\rvx(t) \mid \rvx(0) }
   \big[\norm{\rvs_\mathbf{\theta}(\rvx(t), t) - \nabla_{\rvx(t)}\log p_{0t}(\rvx(t) \mid \rvx(0))}_2^2 \big]\Big\}
\end{equation}

Finally, De Bortoli et al. \cite{de2022riemannian} showed that the framework presented above holds with few modifications on (non-Euclidean) compact Riemannian manifolds, as long as one is able to sample the heat kernel, compute its score and sample from the stationary distribution of these manifolds. Critically, the score is defined in the tangent space of the manifolds. 

\section{Subspace Diffusion Generative Models}

In the dominant formulation of diffusion generative models, the forward diffusion occurs in the full ambient space of the data distribution, destroying its structure but retaining its high dimensionality. It does not seem parsimonious to represent increasingly noisy latent variables---which approach zero mutual information with the original data---in a space with such high dimensionality. The practical implications of this high latent dimensionality are twofold: 

\emph{High-dimensional extrapolation}. The network must learn the score function over the entire support of the high-dimensional latent variable, even in areas very far (relative to the scale of the data) from the data manifold. Due to the curse of dimensionality, much of this support may never be visited during training, and the accuracy of the score model in these regions is called into question by the uncertain extrapolation abilities of neural networks \cite{xu2020neural}. Learning to match a lower-dimensional score function may lead to refined training coverage and further improved performance.

\emph{Computational cost}. Hundreds or even thousands of evaluations of the high-dimensional score model are required to generate an image, making inference with score-based models much slower than with GANs or VAEs \cite{ho2020denoising,song2021score}. A number of recent works aim to address this challenge by reducing the number of steps required for inference \cite{song2020denoising,salimans2021progressive,jolicoeur2021gotta,nichol2021improved,dhariwal2021diffusion,kong2021fast,watson2021learning,san2021noise,lam2021bilateral,bao2022analytic}. However, these methods generally trade-off inference runtime with sample quality. Moreover, the dimensionality of the score function---and thereby the computational cost of a single score evaluation---is an independent and equally important factor to the overall runtime, but this factor has received less attention in existing works.

\begin{figure}[t]
    \centering
    \includegraphics[width=0.9\textwidth]{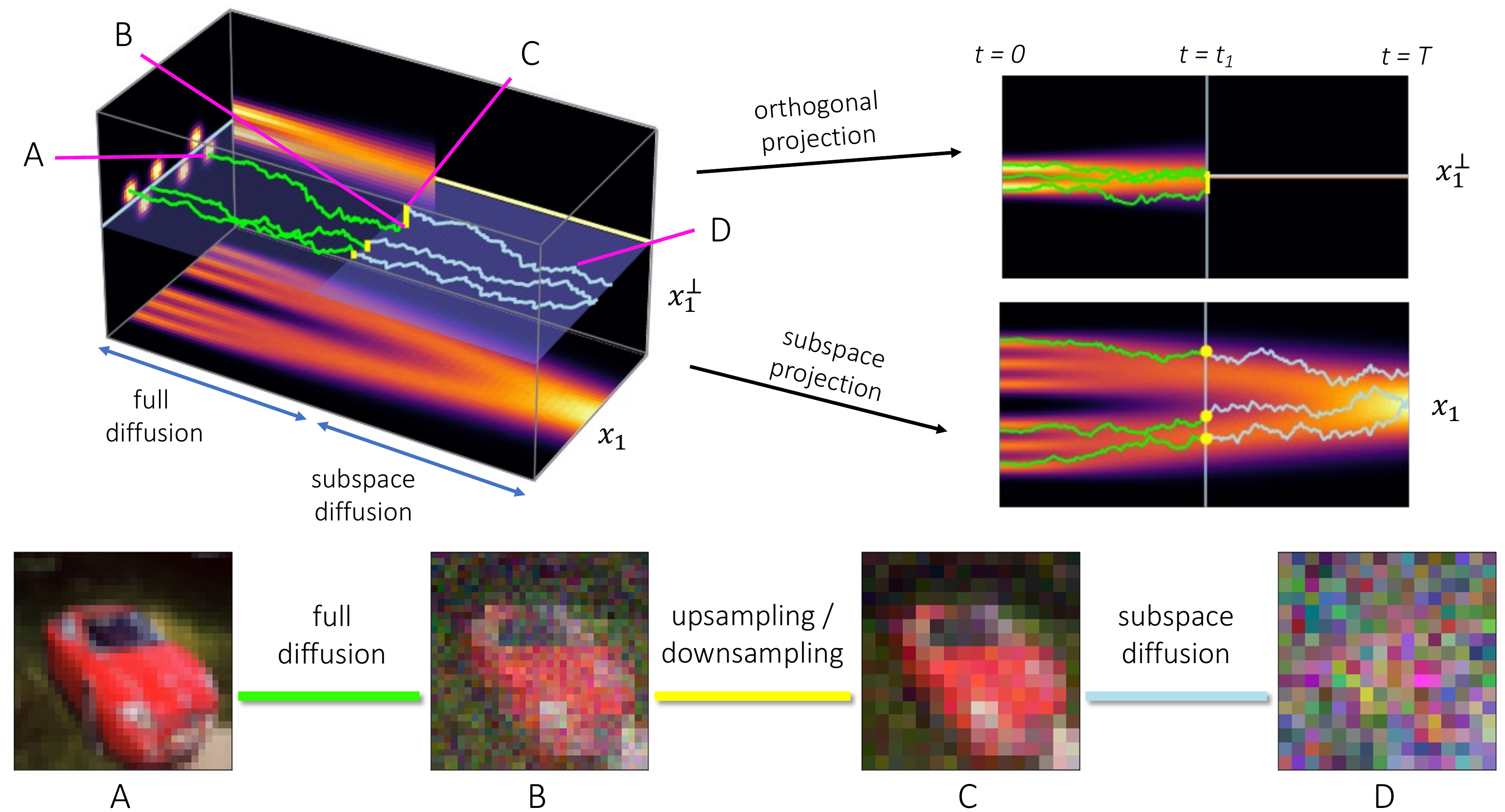}
    \caption{Visual schematic of subspace diffusion with one projection step. \emph{Top left}: The starting data distribution $\mathbf{x}_0(0)$ lies near a subspace (light blue line). As the data evolves, the distribution of the orthogonal component $\mathbf{x}_1^\perp(t)$ approaches a Gaussian faster than the subspace component $\mathbf{x}_1(t)$. At time $t_1$ we project onto the subspace and restrict the remaining diffusion to the subspace. To generate data, we use the full and subspace score models to reverse the full and subspace diffusion steps, and sample $\mathbf{x}_1^\perp(t_1)$ from a Gaussian to reverse the projection step. \emph{Top right}: The diffusion of the subspace component $\mathbf{x}_1(t)$ is unaffected by the projection step and restriction to the subspace; while the orthogonal component is diffused until $t_1$ and discarded afterward. \emph{Bottom}: CIFAR-10 images corresponding to points along the trajectory, where the subspaces correspond to lower-resolution images and projection is equivalent to downsampling.}
    \label{fig:method}
\end{figure}

\textit{Subspace diffusion models} aim to address these challenges. In some real-world domains such as images, target data lie near a linear subspace, such that under isotropic forward diffusion, the components of the data orthogonal to the subspace become Gaussian significantly before the components in the subspace. We propose to use a full-dimensional network to model the score only at lower noise levels, when all components are sufficiently non-Gaussian. At higher noise levels, we use smaller networks to model in the subspace only those components of the score which remain non-Gaussian. As this reduces both the number and domain of queries to the full-dimensional network, subspace diffusion addresses both of our motivating concerns. Moreover, in contrast to many prior works, subspace diffusion remains fully compatible with the underlying continuous diffusion framework \cite{song2021score}, and therefore preserves all the capabilities available to continuous score-based models, such as likelihood evaluation, probability flow sampling, and controllable generation. 

While subspace diffusion can be applied to arbitrary settings, we focus on generative modeling of natural images. Because the global structure of images is dominated by low-frequency visual components---i.e., adjacent pixels values are highly correlated---images lie close to subspaces corresponding to lower-resolution versions of the same image. 

Experimentally, we train and evaluate lower-dimensional subspace models in conjunction with state-of-the-art pretrained full-dimensional models from \cite{song2021score}. We improve over those models in sample quality and runtime, achieving an FID of 2.17 and a IS of 9.99 on CIFAR-10 generation with more than 30\% inference time reduction. 

These results are a first hint at the value of restricting the diffusion processes in lower dimensional spaces where the main degrees of freedom lie. This is very promising since, according to the manifold hypothesis, in many real-world domains the high-dimensional data points lie near low-dimensional latent spaces. However, limiting ourselves to Euclidean subspaces, as in \textit{subspace diffusion}, does not allow, for most problems, to notably reduce the dimensionality of the space over which to operate. This thesis proposes an alternative approach to model significantly more complex manifolds in an efficient and effective manner.

\section{Intrinsic Diffusion Models} \label{sec:intrinsic}

As discussed in the previous section, we hypothesize that restricting the diffusion process to a submanifold, that, approximately, contains all the datapoints of interest can offer significant improvements in terms of both accuracy and inference time. How to define a diffusion process on such a manifold and construct a score model that is able to generalize to different chemical systems is not straightforward. 

The generalization component is particularly important because in this thesis we will operate in inductive settings, where the distributions that we want to sample during inference might not be seen during training. For example, for conformer generation, we expect our method to run on any molecule, regardless of whether it was part of our training set.

In this section, we present the blueprint of Intrinsic Diffusion Modeling (IDM), the approach that we will show to be very effective in the tasks of molecular conformer generation and molecular docking in the rest of the thesis. IDM is composed of four main components:
\begin{enumerate}
    \item \textit{flexibility}: identification of the extrinsic manifold,
    \item \textit{mapping}: definition of the intrinsic manifold and its mapping to the extrinsic,
    \item \textit{diffusion}: specification of a diffusion process on the intrinsic manifold,
    \item \textit{score model}: construction of an extrinsic-to-intrinsic score model.
\end{enumerate}

Below we present each component in its abstract form, the reader will likely more clearly understand them by further reading the two examples of their concrete instantiations in \textit{torsional diffusion} and \textsc{DiffDock}.

\subsection{Flexibility}

Firstly, one needs to identify a low-dimensional manifold that describes most of the entropy in the distribution under analysis, we will call this manifold the extrinsic space. For the domains that we will analyze in this thesis the definition of this manifold comes from domain knowledge, trying to discover these manifolds from data directly is a very interesting avenue for future work.

Importantly, to run inference in inductive settings, one needs to have a way to identify the chosen manifold, e.g. by sampling one of its points, for any query at inference time. Moreover, if the data does not lie exactly on the manifold, but only approximately, one also needs to define a way of projecting datapoints to the manifold. Then, to avoid distributional shift at inference time, we preprocess the training data by sampling the manifold and projecting the datapoint onto it. Training is then run with these projected datapoints.

\subsection{Mapping}

De Bortoli et al. \cite{de2022riemannian} defines diffusion models for arbitrary submanifolds in terms of projecting a diffusion in ambient space onto the submanifold. However, the corresponding kernel $p(\rvx_t|\rvx_0)$ is not available in closed form and has to be sampled numerically with geodesic random walks. This makes the training process very slow or imprecise. Instead, we take a different approach defining a bijection between the extrinsic manifold and simpler intrinsic space over which we run the diffusion.

As the name suggests, we will use the definition of some intrinsic coordinates to define the intrinsic space. Critically these different coordinates must be disentangled from each other, forming, therefore, a bijection with the extrinsic manifold and guaranteeing an equivalence between distributions on the intrinsic and extrinsic manifolds. 

\subsection{Diffusion}

One then needs to derive the fundamental components of the diffusion process on the chosen intrinsic space. In particular, to train the diffusion model and run inference, we have to be able to sample the heat kernel of the diffusion, compute its score and sample from the stationary distribution. 

Luckily, for most well-studied spaces that typically compose an intrinsic coordinate space, the Brownian motion, modeled as a Geodesic Random walk, has a known closed-form solution for computing its kernel and score and simple procedures to transform samples from common distributions to sample from its stationary distribution. This allows us to avoid having to simulate geodesic random walks as described in De Bortoli et al. \cite{de2022riemannian} for a general manifold. 

\subsection{Score model}

Finally, we need to construct a score model $s_{\theta}(\xx, t)$ that for each point $\xx$ and diffusion time $t$ predicts the score of the diffused data distribution at that point on the intrinsic manifold. 

Naively, we may construct a model that works exclusively on the intrinsic manifold by taking as input the intrinsic coordinates of the current point and predicting its score. This, however, would not be able to generalize well across systems because: (1) the definition of intrinsic coordinates often requires arbitrary choices such as the order of the coordinates or their origin but the data distribution is not invariant to such choices (e.g. definition of torsion angle around a bond); (2) laws of physical interactions can be more easily described in terms of extrinsic coordinates rather than intrinsic ones (e.g. electrostatic interactions between atoms far in the molecular graph). These limitations are also one reason why previous attempts to learn distributions of structures via intrinsic coordinates have failed to generalize to multiple chemical systems \cite{noe2019boltzmann}. 

For this reason, we propose to, instead, operate in an extrinsic-to-intrinsic framework, where the score model takes in a point described in extrinsic coordinated (e.g. a 3D molecular graph) and predicts the score in terms of its intrinsic coordinated (e.g. change in torsion angles). By taking as input the object described in its extrinsic coordinates we avoid the model being influenced by arbitrary choices of origin for intrinsic coordinates and can more easily reason about physical interactions. Moreover, although the model predicts the score (which translates into an update) on the intrinsic manifold this is can be directly applied to the point in the extrinsic manifold (e.g. rotate one of the torsion angles) without ever needing to instantiate the intrinsic space.

\chapter{Torsional Diffusion} \label{chapter:torsional}

Many properties of a molecule are determined by the set of low-energy structures, called \emph{conformers}, that it adopts in 3D space. Conformer generation is therefore a fundamental problem in computational chemistry \cite{hawkins2017conformation} and an area of increasing attention in machine learning. Traditional approaches to conformer generation consist of metadynamics-based methods, which are accurate but slow \cite{pracht2020automated}; and cheminformatics-based methods, which are fast but less accurate \cite{hawkins2010conformer, riniker2015better}. Thus, there is growing interest in developing deep generative models to combine high accuracy with fast sampling. 

Diffusion or score-based generative models \cite{ho2020denoising, song2021score} have been applied to conformer generation under several different formulations. These have so far considered diffusion processes in \emph{Euclidean} space, in which Gaussian noise is injected independently into every data coordinate---either pairwise distances in a distance matrix  \cite{shi2021learning, luo2021predicting} or atomic coordinates in 3D \cite{xu2021geodiff}. However, these models require a large number of denoising steps and have so far failed to outperform the best cheminformatics methods.

We instead propose \emph{torsional diffusion}, in which the diffusion process over conformers acts only on the torsion angles and leaves the other degrees of freedom fixed. This is possible and effective because the flexibility of a molecule, and thus the difficulty of conformer generation, lies largely in torsional degrees of freedom \cite{axelrod2020geom}; in particular, bond lengths and angles can already be determined quickly and accurately by standard cheminformatics methods. Leveraging this insight significantly reduces the dimensionality of the sample space; drug-like molecules\footnote{As measured from the standard dataset GEOM-DRUGS \cite{axelrod2020geom}} have, on average, $n=44$ atoms, corresponding to a $3n$-dimensional Euclidean space, but only $m=7.9$ torsion angles of rotatable bonds.

Empirically, we obtain state-of-the-art results on the GEOM-DRUGS dataset \cite{axelrod2020geom} and are the first method to consistently outperform the established commercial software OMEGA \cite{hawkins2017conformation}. We do so using two orders of magnitude \emph{fewer} denoising steps than GeoDiff \cite{xu2021geodiff}, the best Euclidean diffusion approach.

Unlike prior work, our model provides exact likelihoods of generated conformers, enabling training with the ground-truth \emph{energy} function rather than samples alone. This connects with the literature on \emph{Boltzmann generators}---generative models which aim to sample the Boltzmann distribution of physical systems without expensive molecular dynamics or MCMC simulations \cite{noe2019boltzmann, kohler2021smooth}. Thus, as a variation on the torsional diffusion framework, we develop \emph{torsional Boltzmann generators} that can approximately sample the conditional Boltzmann distribution for unseen molecules. This starkly contrasts with existing Boltzmann generators, which are specific for the chemical system on which they are trained.

This chapter is mostly based on the paper:
\begin{displayquote}
\textbf{Torsional Diffusion for Molecular Conformer Generation.} Bowen Jing*, Gabriele Corso*, Jeffrey Chang, Regina Barzilay, and Tommi Jaakkola.  Advances in Neural Information Processing Systems 35 (NeurIPS 2022).
\end{displayquote}

\section{Background} \label{sec:background}

\paragraph{Molecular conformer generation.} The \emph{conformers} of a molecule are the set of its energetically favorable 3D structures, corresponding to local minima of the potential energy surface. The gold standards for conformer generation are metadynamics-based methods such as CREST \cite{pracht2020automated}, which explore the potential energy surface while filling in local minima \cite{hawkins2017conformation}. However, these require an average of 90 core-hours per drug-like molecule \cite{axelrod2020geom} and are not considered suitable for high-throughput applications. Cheminformatics methods instead leverage approximations from chemical heuristics, rules, and databases for significantly faster generation \cite{lagorce2009dg,cole2018knowledge,miteva2010frog2,bolton2011pubchem3d,li2007caesar}; while these can readily model highly constrained degrees of freedom, they fail to capture the full energy landscape. The most well-regarded of such methods include the commercial software OMEGA \cite{hawkins2010conformer} and the open-source RDKit ETKDG \cite{landrum2013rdkit,riniker2015better}.

A number of machine learning methods for conformer generation has been developed \cite{xu2020learning,xu2021end,shi2021learning,luo2021predicting}, the most recent and advanced of which are GeoMol \cite{ganea2021geomol} and GeoDiff \cite{xu2021geodiff}. GeoDiff is a Euclidean diffusion model that treats conformers as point clouds $\mathbf{x} \in \mathbb{R}^{3n}$ and learns an $SE(3)$ equivariant score. On the other hand, GeoMol employs a graph neural network that, in a single forward pass, predicts neighboring atomic coordinates and torsion angles from a stochastic seed. 

\paragraph{Boltzmann generators.} An important problem in physics and chemistry is that of generating independent samples from a Boltzmann distribution $p(\mathbf{x}) \propto e^{-E(\mathbf{x})/kT}$ with known but unnormalized density.\footnote{This is related to but distinct from conformer generation, as conformers are the local minima of the Boltzmann distribution rather than independent samples.} Generative models with exact likelihoods, such as normalizing flows, can be trained to match such densities \cite{noe2019boltzmann} and thus provide independent samples from an approximation of the target distribution. Such \emph{Boltzmann generators} have shown high fidelity on small organic molecules \cite{kohler2021smooth} and utility on systems as large as proteins \cite{noe2019boltzmann}. However, a {separate model} has to be trained for every molecule, as the normalizing flows operate on intrinsic coordinates whose definitions are specific to that molecule. This limits the utility of existing Boltzmann generators for molecular screening applications.

\begin{figure}[t]
    \centering
    \includegraphics[width=\textwidth]{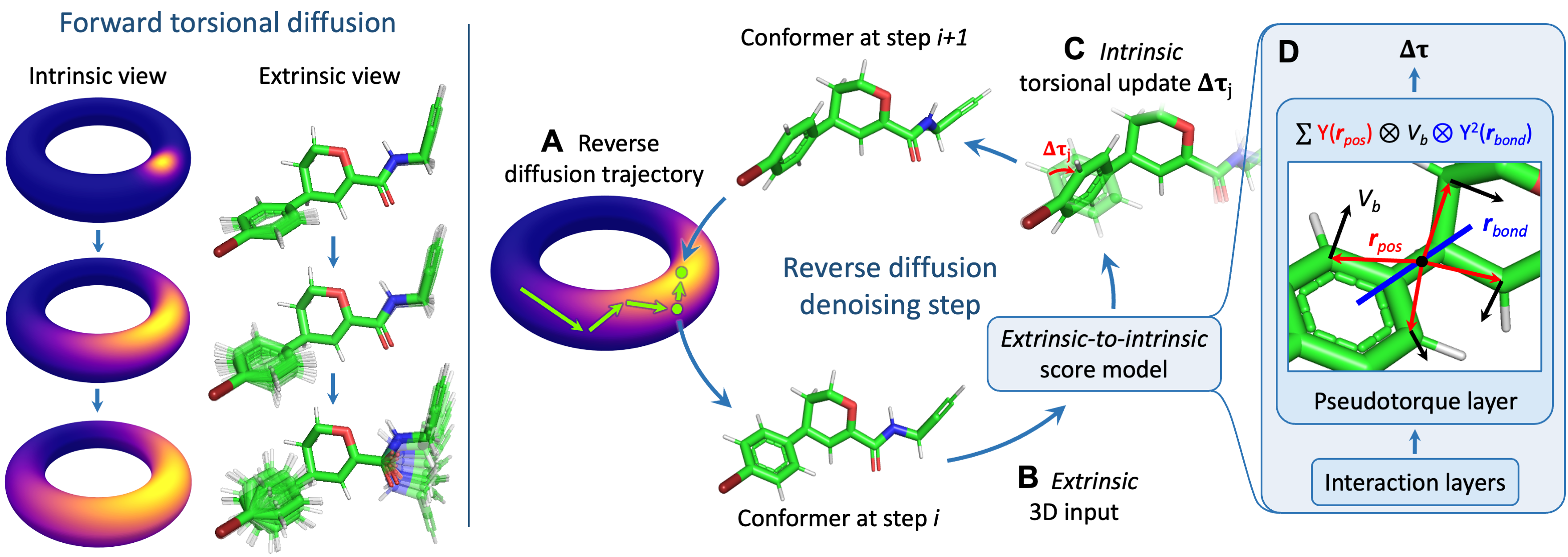}
    \caption{\textbf{Overview of torsional diffusion.} \emph{Left}: Extrinsic and intrinsic views of torsional diffusion (only 2 dimensions/bonds shown). \emph{Right}: In a step of reverse diffusion (\textbf{A}), the current conformer is provided as a 3D structure (\textbf{B}) to the score model, which predicts intrinsic torsional updates (\textbf{C}). The final layer of the score model is constructed to resemble a torque computation around each bond (\textbf{D}). $Y$ refers to the spherical harmonics and $V_b$ the learned atomic embeddings.}
    \label{fig:overview}
\end{figure}

\section{Method} \label{sec:torsional_diffusion}

Consider a molecule as a graph $G = (\mathcal{V}, \mathcal{E})$ with atoms $v \in \mathcal{V}$ and bonds $e \in \mathcal{E}$, and denote the space of its possible conformers $\mathcal{C}_G$. A conformer $C \in \mathcal{C}_G$ is typically defined in terms of its \emph{extrinsic} (or Cartesian) coordinates---that is, as a point cloud in 3D space, defined up to global roto-translation: $\mathcal{C}_G \cong \mathbb{R}^{3n} / SE(3)$. However, we can also described in terms of its \emph{intrinsic} (or internal) coordinates: local structures $L$ consisting of bond lengths, bond angles, and cycle conformations; and torsion angles $\boldsymbol{\tau}$ consisting of dihedral angles around freely rotatable bonds. We consider a bond \emph{freely rotatable} if severing the bond creates two connected components of $G$, each of which has at least two atoms. Thus, torsion angles in cycles (or rings), which cannot be rotated independently, are considered part of the local structure $L$. 

Our method, illustrated in Figure~\ref{fig:overview}, uses the intrinsic diffusion modeling framework to define a diffusion process over the space of structures defined by some local structure. Below we detail each of the four components outlined in Section \ref{sec:intrinsic}.

\subsection{Flexibility}

Conformer generation consists of learning probability distributions $p_G(C)$. However, the set of possible stable local structures $L$ for a particular molecule is very constrained and can be accurately predicted by fast cheminformatics methods, such as RDKit ETKDG \cite{riniker2015better}. Thus, we use RDKit to provide approximate samples from $p_G(L)$, and develop a diffusion model to learn distributions $p_G(C \mid L)$. We have therefore defined the extrinsic space as the submanifold defined by conditioning $C$ on a given local structure $L$. Since we will use RDKit to obtain samples from the local structure, to identify a point on the manifold at inference time, we will simply embed the given molecule. From this conformer, any point on our extrinsic manifold will be reachable with some change in torsion angles.

\paragraph{Conformer matching.} In focusing on $p_G(C\mid L)$, we have assumed that we can sample local structures $L \sim p_G(L)$ with RDKit. While this assumption is very good in terms of RMSD, the RDKit marginal $\hat p_G(L)$ is only an approximation of the ground truth $p_G(L)$. Thus, if we train on the denoising score-matching loss with ground truth conformers---i.e., conditioned on ground truth local structures---there will be a distributional shift at test time, where only approximate local structures from $\hat p_G(L)$ are available. We found that this shift significantly hurts performance.

We thus introduce a preprocessing procedure called \textit{conformer matching}. In brief, for the \emph{training} split only, we substitute each ground truth conformer $C$ with a synthetic conformer $\hat{C}$ with local structures $\hat{L} \sim \hat p_G(L)$ and made as similar as possible to $C$. That is, we use RDKit to generate $\hat{L}$ and change torsion angles $\hat{\boldsymbol{\tau}}$ to minimize $\rmsd(C, \hat{C})$. Naively, we could sample $\hat{L} \sim \hat p_G(L)$ independently for each conformer, but this eliminates any possible dependence between $L$ and $\boldsymbol{\tau}$ that could serve as training signal. Instead, we view the distributional shift as a domain adaptation problem that can be solved by optimally aligning $p_G(L)$ and $\hat p_G(L)$. See Appendix \ref{app:matching} for details.

\subsection{Mapping}

The extrinsic submanifold we have identified by conditioning $C$ on a given local structure $L$ is, however, very complex to deal with in Euclidean space. We, therefore, exploit the fact that a conformer can be univocly defined in terms of its internal coordinates $L$ and $\boldsymbol{\tau}$\footnote{This is true because we are only interested in conformers up to SE(3) transformations.}. In particular, almost surely\footnote{Unless we have all the atoms on onne side of a rotatable bond lying all exactly on the line defined by the bond.}, there is a bijection between the torsion angles $\boldsymbol{\tau}$ and the extrinsic manifold $C \mid L$. 

Since each torsion angle coordinate lies in $[0, 2\pi)$, the $m$ torsion angles of a conformer define a hypertorus $\mathbb{T}^m$. This is the intrinsic manifold over which we train the diffusion model to sample from $p_G(\boldsymbol{\tau} \mid L)$.

\subsection{Diffusion} \label{sec:toroidal_diff}

To learn a generative model over the intrinsic manifold $\mathbb{T}^m$, we apply the continuous score-based framework of Song et al. \cite{song2021score}, which holds with minor modifications on compact Riemannian manifolds  \cite{de2022riemannian}. For the forward diffusion we use rescaled Brownian motion given by $\mathbf{f}(\mathbf{x}, t) = 0, g(t) = \sqrt{\frac{d}{dt} \sigma^2(t)}$ where $\sigma(t)$ is the noise scale. Specifically, we use an exponential diffusion $\sigma(t) = \sigma^{1-t}_\text{min}\sigma^t_\text{max}$ as in Song et al. \cite{song2019generative}, with $\sigma_\text{min}=0.01\pi$, $\sigma_\text{max} = \pi, t \in (0, 1)$. Due to the compactness of the manifold, however, the prior $p_T(\mathbf{x})$ is no longer a Gaussian, but a \emph{uniform} distribution over $M$. 

Training the score model with denoising score matching requires a procedure to sample from the perturbation kernel $p_{t\mid 0}(\mathbf{x}' \mid \mathbf{x})$ of the forward diffusion and compute its score. We view the torus $\mathbb{T}^m \cong [0, 2\pi)^m$ as the quotient space $\mathbb{R}^m/2\pi\mathbb{Z}^m$ with equivalence relations $(\tau_1, \ldots \tau_m) \sim (\tau_1+2\pi, \ldots, \tau_m) \ldots \sim (\tau_1, \ldots \tau_m+2\pi)$. Hence, the perturbation kernel for rescaled Brownian motion on $\mathbb{T}^m$ is the \emph{wrapped normal distribution} on $\mathbb{R}^m$; that is, for any $\boldsymbol{\tau}, \boldsymbol{\tau}' \in [0, 2\pi)^m$, we have
\begin{equation} \label{eq:torus_score}
    p_{t\mid 0}(\boldsymbol{\tau}' \mid \boldsymbol{\tau}) \propto \sum_{\mathbf{d} \in \mathbb{Z}^m} \exp\left(-\frac{||\boldsymbol{\tau} - \boldsymbol{\tau}' + 2\pi\mathbf{d}||^2}{2\sigma^2(t)}\right)
\end{equation}
where $\sigma(t)$ is the noise scale of the perturbation kernel $p_{t\mid 0}$. We thus sample from the perturbation kernel by sampling from the corresponding unwrapped isotropic normal and taking elementwise $\mod 2\pi$. The scores of the kernel are pre-computed using a numerical approximation. During training, we sample times $t$ at uniform and minimize the denoising score matching loss
\begin{equation} \label{eq:dsm}
    J_\text{DSM}(\theta) = \mathbb{E}_t\left[\lambda(t)\mathbb{E}_{\boldsymbol{\tau}_0\sim p_0,\boldsymbol{\tau}_t\sim p_{t\mid 0}(\cdot \mid \boldsymbol{\tau}_0)}\left[||\mathbf{s}(\boldsymbol{\tau}_t, t) - \nabla_{\boldsymbol{\tau}_t} \log p_{t\mid 0}(\boldsymbol{\tau}_t \mid \boldsymbol{\tau}_0)||^2\right]\right]
\end{equation}
where the weight factors $
    \lambda(t) = 1/\mathbb{E}_{\boldsymbol{\tau} \sim p_{t\mid 0}(\cdot \mid 0)}\left[||\nabla_{\boldsymbol{\tau}} \log p_{t\mid 0}(\boldsymbol{\tau} \mid \mathbf{0})||^2\right]$
are also precomputed. As the tangent space $T_{\boldsymbol{\tau}}\mathbb{T}^m$ is just $\mathbb{R}^m$, all the operations in the loss computation are the familiar ones.

For inference, we first sample from a uniform prior over the torus. We then discretize and solve the reverse diffusion with a geodesic random walk; however, since the exponential map on the torus (viewed as a quotient space) is just $\exp_{\boldsymbol{\tau}}(\boldsymbol{\delta}) = \boldsymbol{\tau} + \boldsymbol{\delta} \mod 2\pi$, the geodesic random walk is equivalent to the wrapping of the random walk on $\mathbb{R}^m$.

\paragraph{Low-temperature sampling. } The score-matching loss used to train the score model minimizes an upper bound on the KL divergence between the model and the data distribution. Although when perfectly learned this leads to the two distributions being exactly equal, in the realistic case of limited data and model capacity the model will tend to learn an overdispersed distribution. Low-temperature sampling of some distribution $p(\xx)$ with temperature $\lambda^{-1}<1$ consists of sampling the distribution $p_{\lambda}(\xx) \propto p(\xx)^{\lambda}$. This mitigates the overdispersion problem by concentrating more on high-likelihood modes and effectively trading sample diversity for quality \cite{ingraham2022illuminating}.

Exact low-temperature sampling is intractable for most generative models, however, various approximation schemes exist. We use an adaptation of Hybrid Langevin-Reverse Time SDE proposed by Ingraham et al. \cite{ingraham2022illuminating}:
\begin{equation*}
    d \boldsymbol{\tau} = -\new{\bigg(\lambda_t + \frac{\lambda \; \psi}{2} \bigg)}\; \ss_{\theta, G}(C, t) \; g^2(t) \; dt + 
    \new{\sqrt{1+\psi}} \; g(t) \; d\mathbf{w} \quad
    \text{with } \lambda_t = \frac{\sigma_d + \sigma_t}{\sigma_d + \sigma_t/\lambda}
\end{equation*}
where $\lambda$ (the inverse temperature), $\psi$ and $\sigma_d$ are parameters that can be tuned. Setting the \new{blue} components to 1 recovers the standard reverse time SDE. 

\subsection{Score model}

\subsubsection{Extrinsic-to-intrinsic model} \label{sec:score_framework}

\begin{figure}
    \begin{center}
    \includegraphics[width=0.35\textwidth]{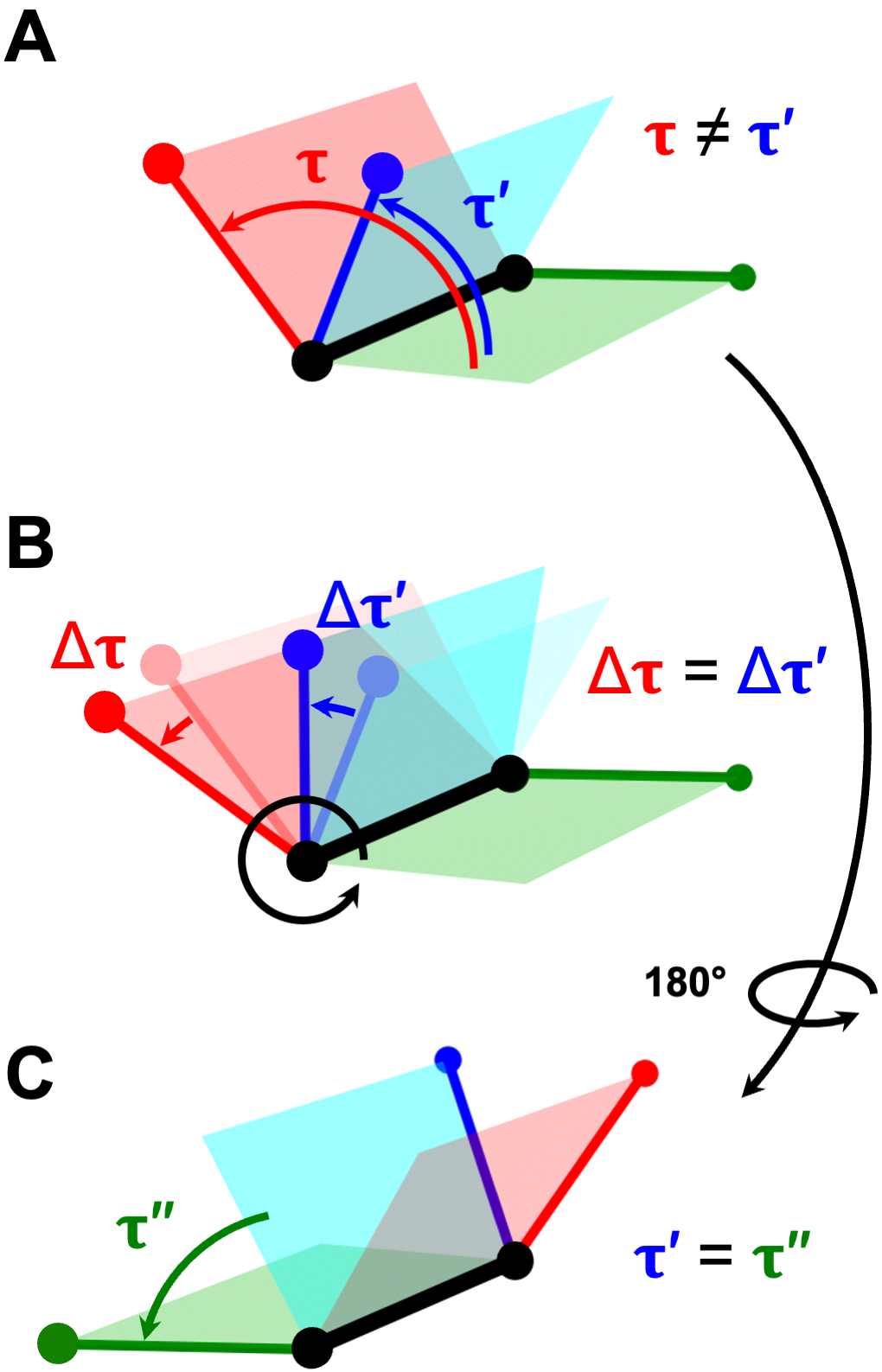}
    \caption{\textbf{A}: The torsion $\tau$ around a bond depends on a choice of neighbors. \textbf{B}: The \emph{change} $\Delta\tau$ caused by a relative rotation is the same for all choices. \textbf{C}: The sign of $\Delta\tau$ is unambiguous because given the same neighbors, $\tau$ does not depend on bond direction.}
    \label{fig:torsion}
    \end{center}
\end{figure}

While we have defined the diffusion process over intrinsic coordinates, learning a score model $\mathbf{s}(\boldsymbol{\tau}, t)$ directly over intrinsic coordinates is potentially problematic for several reasons. First, the dimensionality $m$ of the torsional space depends on the molecular graph $G$. Second, the mapping from torsional space to physically distinct conformers depends on $G$ and local structures $L$, but it is unclear how to best provide these to a model over $\mathbb{T}^m$. Third, there is no canonical choice of independent intrinsic coordinates $(L, \boldsymbol{\tau})$; in particular, the torsion angle at a rotatable bond can be defined as any of the dihedral angles at that bond, depending on an arbitrary choice of reference neighbors (Figure \ref{fig:torsion} and Appendix~\ref{app:def}). Thus, even with fixed $G$ and $L$, the mapping from $\mathbb{T}^m$ to conformers is ill-defined. This posed a significant challenge to prior works using intrinsic coordinates \cite{ganea2021geomol}.

To circumvent these difficulties, we instead consider a conformer $C\in \mathcal{C}_G$ in terms of its extrinsic coordinates. Then, we construct the score model $\mathbf{s}_G(C, t)$ as a function over $\mathcal{C}_G$ rather than $\mathbb{T}^m$. The outputs remain in the tangent space of $\mathbb{T}^m$, which is just $\mathbb{R}^m$. Such a score model is simply an $SE(3)$-\emph{invariant} model over point clouds in 3D space $\mathbf{s}_G: \mathbb{R}^{3n} \times [0, T] \mapsto \mathbb{R}^{m}$ conditioned on $G$. Thus, we have reduced the problem of learning a score on the torus, conditioned on the molecular graph and local structure, to the much more familiar problem of predicting $SE(3)$-invariant scalar quantities---one for each bond---from a 3D conformer.

It may appear that we still need to choose a definition of each torsion angle $\tau_i$ so that we can sample from  $p_{t\mid 0}(\cdot \mid \boldsymbol{\tau})$ during training and solve the reverse SDE over $\boldsymbol{\tau}$ during inference. However, we leverage the following insight: given \emph{fixed local structures}, the action on $C$ of changing a single torsion angle $\tau_i$ by some $\Delta \tau_i$ can be applied without choosing a definition (Figure \ref{fig:torsion}). In other words, we do not need to define a bijection between the extrinsic and intrinsic spaces but only map how actions in the intrisic space transform a point in the extrisic space. Geometrically, this action is a (signed) relative rotation of the atoms on opposite sides of the bond and can be applied directly to the atomic coordinates in 3D. The geometric intuition can be stated as follows (proven in Appendix~\ref{app:proof_update}).

\begin{proposition} \label{prop:torsion}
Let $(b_i, c_i)$ be a rotatable bond, let $\mathbf{x}_{\mathcal{V}(b_i)}$ be the positions of atoms on the $b_i$ side of the molecule, and let $R(\boldsymbol{\theta}, x_{c_i}) \in SE(3)$ be the rotation by Euler vector $\boldsymbol{\theta}$ about $x_{c_i}$. Then for $C, C' \in \mathcal{C}_G$, if $\tau_i$ is any definition of the torsion angle around bond $(b_i, c_i)$,
\begin{equation} 
    \begin{aligned}
        \tau_i(C') &= \tau_i(C) + \theta\\
        \tau_j(C') &= \tau_j(C) \quad \forall j\neq i
    \end{aligned}
    \qquad \text{if} \qquad
    \exists \mathbf{x} \in C, \mathbf{x'} \in C'\ldotp \quad
    \begin{aligned}
    \mathbf{x}'_{\mathcal{V}(b_i)} &=  \mathbf{x}_{\mathcal{V}(b_i)} \\
    \mathbf{x}'_{\mathcal{V}(c_i)} &=  R\left(\theta \, \mathbf{\hat r}_{b_ic_i}, x_{c_i} \right)\mathbf{x}_{\mathcal{V}(c_i)}
    \end{aligned}
\end{equation}
where $\mathbf{\hat r}_{b_ic_i} = (x_{c_i} - x_{b_i})/||x_{c_i}-x_{b_i}||$.
\end{proposition}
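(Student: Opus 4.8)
The plan is to reduce both claims to direct computations with the classical four-atom formula for a dihedral angle, using the fact that severing the rotatable bond $(b_i,c_i)$ disconnects $G$ to control exactly which atoms are displaced. Since torsion angles are $SE(3)$-invariant, they descend to well-defined functions on $\mathcal{C}_G$, so it suffices to verify the stated identities on the particular representatives $\mathbf{x}\in C$, $\mathbf{x}'\in C'$ furnished by the hypothesis. Abbreviate $R:=R(\theta\,\mathbf{\hat r}_{b_ic_i},x_{c_i})$; since $\mathbf{\hat r}_{b_ic_i}$ is parallel to $x_{c_i}-x_{b_i}$ and the axis of $R$ passes through $x_{c_i}$, that axis is precisely the line through $x_{b_i}$ and $x_{c_i}$, so $R$ fixes both of these points. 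The hypothesis then says that $\mathbf{x}'$ is obtained from $\mathbf{x}$ by the ``piecewise-rigid'' map that is the identity on the atoms in $\mathcal{V}(b_i)$ and equals $R$ on the atoms in $\mathcal{V}(c_i)$.

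For the invariance of $\tau_j$ with $j\neq i$, I will first record the graph fact that, because $(b_i,c_i)$ is the unique edge between the components $\mathcal{V}(b_i)$ and $\mathcal{V}(c_i)$, the only vertex of $\mathcal{V}(b_i)$ with a neighbor outside it is $b_i$ (whose only such neighbor is $c_i$), and symmetrically for $c_i$. Now $(b_j,c_j)$, being an edge distinct from the cut edge, has both endpoints in a single component $P\in\{\mathcal{V}(b_i),\mathcal{V}(c_i)\}$; the two reference neighbors used to define $\tau_j$ are neighbors of $b_j$ or $c_j$ and hence also lie in $P$, unless one of them is the boundary vertex $b_i$ or $c_i$, which however lies on the axis of $R$ and is therefore fixed. (Both reference edges cannot be the cut edge at once, since that would force $(b_j,c_j)=(b_i,c_i)$.) Thus the four atoms defining $\tau_j$ are all moved by a single element of $SE(3)$ --- the identity if $P=\mathcal{V}(b_i)$ and $R$ if $P=\mathcal{V}(c_i)$ --- so $\tau_j(C')=\tau_j(C)$ by $SE(3)$-invariance of the dihedral.

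For $\tau_i$ itself, I fix any admissible definition: a reference neighbor $a$ of $b_i$ (forced into $\mathcal{V}(b_i)$, as the edge $(a,b_i)$ is not the cut edge) and a reference neighbor $d$ of $c_i$ (forced into $\mathcal{V}(c_i)$), which exist because each side of a freely rotatable bond has at least two atoms. Then $\tau_i$ is the signed angle, measured about the oriented axis $\mathbf{\hat r}_{b_ic_i}$, between the orthogonal projections of $x_a-x_{b_i}$ and $x_d-x_{c_i}$ onto the plane perpendicular to $\mathbf{\hat r}_{b_ic_i}$. Under the transformation the atoms $a$, $b_i$, $c_i$ do not move, while $x_d-x_{c_i}$ is rotated by $\theta$ about $\mathbf{\hat r}_{b_ic_i}$, so its axial component is preserved and its in-plane component is rotated by $\theta$ within the plane. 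The projection of $x_a-x_{b_i}$ being unchanged, the signed in-plane angle between the two projections increases by exactly $\theta$ --- in the generic, non-degenerate case of the footnote where both projections are nonzero --- and its sign is well defined as in Figure~\ref{fig:torsion}C, giving $\tau_i(C')=\tau_i(C)+\theta$.

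The step I expect to be the main obstacle is the bookkeeping in the $\tau_j$ case: one must argue carefully that a reference atom of $\tau_j$ can leave its component $P$ only by coinciding with $b_i$ or $c_i$, and that precisely these two atoms sit on the rotation axis and are therefore fixed, so that the whole four-atom tuple of $\tau_j$ still undergoes a single rigid motion. A lesser point requiring care is pinning down the orientation conventions (direction of the axis, sign of the dihedral) so that the change in $\tau_i$ comes out as $+\theta$ rather than $-\theta$.
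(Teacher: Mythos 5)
Your proof is correct and reaches the same conclusions as the paper's, but the route is genuinely different in style. For the $\tau_j$ claim, the paper proceeds by an explicit two-case substitution of coordinates into the dihedral formula (Case~1: all four atoms in $\mathcal{V}(b_i)$; Case~2: $d_j \in \mathcal{V}(c_i)$, forcing $c_j = b_i$, $d_j = c_i$, then noting $x'_{d_j}=x_{d_j}$). You instead observe directly that the entire four-atom tuple defining $\tau_j$ undergoes a \emph{single} rigid motion (identity or $R$), because any reference atom that crosses the cut must be the boundary vertex of the other side and that vertex lies on the axis of $R$ and is therefore a fixed point of $R$; then $SE(3)$-invariance of the dihedral finishes the argument without any coordinate computation. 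This is tidier and more conceptual, and makes the key graph-theoretic observation (the cut edge is the only bridge, so at most one reference atom leaks, and the leaking atom is $b_i$ or $c_i$) do exactly the work it should. For $\tau_i$, you replace the paper's Rodrigues-formula computation of $\cos\tau'_i$ and $\sin\tau'_i$ with a projection argument onto the plane normal to $\hat{\mathbf r}_{b_ic_i}$, again a cleaner geometric reduction. The only thing the paper's computation buys over yours is a verification of the sign: that the paper's particular $\cos/\sin$ definition of $\tau_{abcd}$ (Eq.~\ref{eq:torsion_def}) and the right-handed rotation by Euler vector $\theta\hat{\mathbf r}_{b_ic_i}$ are oriented consistently so that the change is $+\theta$ rather than $-\theta$. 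You flag this explicitly as a loose end, which is fair; to fully close it one would need to check that your ``signed angle about the oriented axis $\hat{\mathbf r}_{b_ic_i}$'' characterization coincides with the paper's formula, or else fall back on a short computation of $\sin\tau'_i$ as the paper does. Apart from that orientation bookkeeping, the argument is complete.
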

To apply a torsion update $\Delta\boldsymbol{\tau} = (\Delta\tau_1,\ldots\Delta\tau_m)$ involving all bonds, we apply $\Delta\tau_i$ sequentially in any order. Then, since training and inference only make use of torsion updates $\Delta\boldsymbol{\tau}$, we work solely in terms of 3D point clouds and updates applied to them. To draw local structures $L$ from RDKit, we draw full 3D conformers $C \in \mathcal{C}_G$ and then randomize all torsion angles to sample uniformly over $\mathbb{T}^m$. To solve the reverse SDE, we repeatedly predict torsion updates directly from, and apply them directly to, the 3D point cloud. Therefore, since our method never requires a choice of reference neighbors for any $\tau_i$, it is manifestly invariant to such a choice. These procedures are detailed in Section \ref{app:summary_procedures}.

\subsubsection{Parity equivariance} \label{sec:parity}

The torsional score framework presented thus far requires an $SE(3)$-invariant model. However, an additional symmetry requirement arises from the fact that the underlying physical energy is invariant, or extremely nearly so, under \emph{parity inversion} \cite{quack2002important}. Thus our learned density should respect $p(C) = p(-C)$ where $-C = \{-\mathbf{x} \mid \mathbf{x} \in C\}$. In terms of the conditional distribution over torsion angles, we require $p(\boldsymbol{\tau}(C) \mid L(C)) = p(\boldsymbol{\tau}(-C) \mid L(-C))$. Then (proof in Appendix \ref{app:proof_parity}),

\begin{proposition} \label{prop:parity}
    If $p(\boldsymbol{\tau}(C) \mid L(C)) = p(\boldsymbol{\tau}(-C) \mid L(-C))$, then for all diffusion times $t$,
    \begin{equation}
        \nabla_{\boldsymbol{\tau}} \log p_t(\boldsymbol{\tau}(C) \mid L(C)) = -\nabla_{\boldsymbol{\tau}} \log p_t(\boldsymbol{\tau}(-C) \mid L(-C)) 
    \end{equation}
\end{proposition}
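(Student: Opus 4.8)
The plan is to trace the parity action $C \mapsto -C$ through the intrinsic diffusion construction of Section~\ref{sec:torsional_diffusion}, reducing the claim to two elementary symmetries: that parity inversion acts on the intrinsic manifold $\mathbb{T}^m$ as the antipodal map $\boldsymbol{\tau}\mapsto-\boldsymbol{\tau}$, and that the wrapped-normal perturbation kernel of \eqref{eq:torus_score} is invariant under simultaneous negation of its two arguments.

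The first ingredient is geometric. A dihedral angle is a \emph{signed} quantity, so it flips sign under any orientation-reversing isometry; since $-C$ is obtained from $C$ by the improper map $\mathbf{x}\mapsto-\mathbf{x}$, we get $\tau_i(-C) = -\tau_i(C)$ for every rotatable bond $i$ and every admissible definition of $\tau_i$ (changing the choice of reference neighbors as in Figure~\ref{fig:torsion} shifts $\tau_i$ by a constant that is itself negated, so the relation is insensitive to it), i.e.\ $\boldsymbol{\tau}(-C) = -\boldsymbol{\tau}(C)$. Feeding this into the hypothesis, and abbreviating $L = L(C)$, $L' = L(-C)$, the assumption $p(\boldsymbol{\tau}(C)\mid L(C)) = p(\boldsymbol{\tau}(-C)\mid L(-C))$ becomes, as $\boldsymbol{\tau}(C)$ ranges over $\mathbb{T}^m$, the statement that $p_0(\boldsymbol{\tau}\mid L) = p_0(-\boldsymbol{\tau}\mid L')$ for all $\boldsymbol{\tau}\in\mathbb{T}^m$; equivalently, the antipodal map pushes $p_0(\cdot\mid L)$ forward to $p_0(\cdot\mid L')$ (the Haar measure on the torus being antipodally invariant, so no Jacobian intervenes).

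Next I would use that the forward process on torsion angles (Section~\ref{sec:toroidal_diff}) is rescaled Brownian motion on $\mathbb{T}^m$, so its perturbation kernel $p_{t\mid 0}(\boldsymbol{\tau}'\mid\boldsymbol{\tau})$ depends on the pair only through $\sum_{\mathbf{d}\in\mathbb{Z}^m}\exp(-\|\boldsymbol{\tau}-\boldsymbol{\tau}'+2\pi\mathbf{d}\|^2/2\sigma^2(t))$ and, crucially, not on $L$. Reindexing that sum by $\mathbf{d}\mapsto-\mathbf{d}$ gives $p_{t\mid 0}(-\boldsymbol{\tau}'\mid-\boldsymbol{\tau}) = p_{t\mid 0}(\boldsymbol{\tau}'\mid\boldsymbol{\tau})$. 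Writing the marginal as a convolution, $p_t(\boldsymbol{\tau}'\mid L) = \int_{\mathbb{T}^m} p_{t\mid 0}(\boldsymbol{\tau}'\mid\boldsymbol{\tau})\,p_0(\boldsymbol{\tau}\mid L)\,d\boldsymbol{\tau}$, substituting $\boldsymbol{\tau}\mapsto-\boldsymbol{\tau}$, and applying the two symmetries above, I obtain $p_t(\boldsymbol{\tau}'\mid L) = p_t(-\boldsymbol{\tau}'\mid L')$ for all $t$ and all $\boldsymbol{\tau}'$.

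Finally, taking logarithms and differentiating in the free torsion argument (legitimate since on the compact torus the kernel and its derivatives are smooth and bounded, so one may differentiate under the integral), the chain rule turns $\log p_t(\boldsymbol{\tau}'\mid L) = \log p_t(-\boldsymbol{\tau}'\mid L')$ into $\nabla_{\boldsymbol{\tau}'}\log p_t(\boldsymbol{\tau}'\mid L) = -\big(\nabla\log p_t(\cdot\mid L')\big)(-\boldsymbol{\tau}')$; evaluating at $\boldsymbol{\tau}' = \boldsymbol{\tau}(C)$ and using $-\boldsymbol{\tau}(C) = \boldsymbol{\tau}(-C)$ yields exactly the asserted identity. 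I expect the only subtle point to be the first step — convincingly arguing that parity is the antipodal map on $\mathbb{T}^m$ regardless of the arbitrary torsion-angle conventions — while the kernel symmetry, the convolution manipulation, and the differentiation are routine.
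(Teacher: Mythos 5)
Your proposal is correct and follows essentially the same route as the paper's proof: establish $\boldsymbol{\tau}(-C)=-\boldsymbol{\tau}(C)$ from the signed dihedral definition (the paper cites its Equation~\ref{eq:torsion_def} for this), use parity invariance of the wrapped-normal perturbation kernel together with the convolution representation of $p_t$ to get $p_t(\boldsymbol{\tau}\mid L(C))=p_t(-\boldsymbol{\tau}\mid L(-C))$, then apply the chain rule. The only differences are cosmetic: you spell out the reindexing $\mathbf{d}\mapsto-\mathbf{d}$, the Haar-measure remark, and differentiation under the integral sign, which the paper leaves implicit.
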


Because the score model seeks to learn $\mathbf{s}_G(C, t) = \nabla_{\boldsymbol{\tau}} \log p_t(\boldsymbol{\tau}(C) \mid L(C))$, we must have $\mathbf{s}_G(C, t) = -\mathbf{s}_G(-C, t)$. Thus, the score model must be \emph{invariant} under $SE(3)$ but \emph{equivariant} (change sign) under parity inversion of the input point cloud--- i.e. it must output a set of \emph{pseudoscalars} in $\mathbb{R}^m$.

\subsubsection{Score network architecture} \label{sec:score_model}

Based on the previous discussion, the desiderata for the score model are:
\begin{center}
    \emph{Predict a pseudoscalar $\delta\tau_i := \partial \log p / \partial \tau_i \in\mathbb{R}$ that is $SE(3)$-invariant and parity equivariant for every rotatable bond in a 3D point cloud representation of a conformer.\\}
\end{center}

While there exist several GNN architectures which are $SE(3)$-equivariant \cite{jing2020learning, satorras2021n}, their $SE(3)$-invariant outputs are also parity invariant and, therefore, cannot satisfy the desired symmetry. Instead, we leverage the ability of equivariant networks based on tensor products \cite{thomas2018tensor, e3nn} to produce pseudoscalar outputs.

Our architecture, detailed in Appendix \ref{app:architecture}, consists of an embedding layer, a series of atomic convolution layers, and a final bond convolution layer. The first two closely follow the architecture of Tensor Field Networks \cite{thomas2018tensor}, and produce learned feature vectors for each atom. The final bond convolution layer constructs tensor product filters spatially centered on every rotatable bond and aggregates messages from neighboring atom features. We extract the pseudoscalar outputs of this filter to produce a single real-valued pseudoscalar prediction $\delta\tau_i$ for each rotatable bond. 

Naively, the bond convolution layer could be constructed the same way as the atomic convolution layers, i.e., with spherical harmonic filters. However, to supply information about the orientation of the bond about which the torsion occurs, we construct a filter from the product of the spherical harmonics with a representation of the bond (Figure \ref{fig:overview}D). Because the convolution conceptually resembles computing the torque, we call this final layer the \emph{pseudotorque} layer.

\subsection{Training and inference procedures} \label{app:summary_procedures}

Algorithms \ref{alg:training} and \ref{alg:inference} summarize, respectively, the training and inference procedures (without low-temperature sampling) used for torsional diffusion. In practice, during training, we limit $K_G$ to 30 i.e. we only consider the first 30 conformers found by CREST (typically those with the largest Boltzmann weight). Moreover, molecules are batched and an Adam optimizer with a learning rate scheduler is used for optimization. For inference, to fairly compare with other methods from the literature, we follow \cite{ganea2021geomol} and set $K$ to be twice the number of conformers returned by CREST.

\begin{algorithm}[h]
\caption{Training procedure}\label{alg:training}
\KwIn{molecules $[G_0, ..., G_N]$ each with true conformers $[C_{G,1}, ... C_{G,K_G}]$, learning rate $\alpha$}
\KwOut{trained score model $\mathbf{s}_\theta$}
conformer matching process for each $G$ to get $[\hat{C}_{G,1}, ... \hat{C}_{G,K_G}]$\;
\For{epoch $\leftarrow 1$ \KwTo $\text{epoch}_{\max}$}{
    \For{$G$ \textbf{in} $[G_0, ..., G_N]$}{
        sample $t \in [0, 1]$ and $\hat{C} \in [\hat{C}_{G,1}, ... \hat{C}_{G,K_G}]$\;
        sample $\Delta \boldsymbol{\tau}$ from wrapped normal $p_{t\mid 0}(\cdot \mid \mathbf{0})$ with $\sigma = \sigma_{\min} ^{1-t} \, \sigma_{\max} ^t$\; 
        apply $\Delta \boldsymbol{\tau}$ to $\hat{C}$\;
        predict $\delta \boldsymbol{\tau} = \mathbf{s}_{\theta, G}(\hat{C}, t)$\;
        update $\theta \gets \theta - \alpha \nabla_{\theta} \lVert \delta \boldsymbol{\tau} - \nabla_{\Delta\boldsymbol{\tau}} p_{t\mid 0}(\Delta\boldsymbol{\tau} \mid \mathbf{0})  \rVert^2$\;
    }
}
\end{algorithm}

\begin{algorithm}[h]
\caption{Inference procedure}\label{alg:inference}
\KwIn{molecular graph $G$, number conformers $K$, number steps $N$}
\KwOut{predicted conformers  $[C_1, ... C_K]$}
generate local structures by obtaining conformers $[C_1, ... C_K]$ from RDKit\;
\For{$C$ \textbf{in} $[C_1, ... C_K]$}{
    sample $\Delta \boldsymbol{\tau} \sim U[0, 2\pi]^m$ and apply to $C$ to randomize torsion angles\;
    \For{n $\leftarrow N$ \KwTo $1$}{
        let $t = n/N, \; g(t) = \sigma_{\min} ^{1-t} \, \sigma_{\max} ^t\sqrt{2\ln(\sigma_{\max}/\sigma_{\min})}$\;
        predict $\delta \boldsymbol{\tau} = \mathbf{s}_{\theta, G}(\hat{C}, t)$\;
        draw $\mathbf{z}$ from wrapped normal with $\sigma^2 = 1/N$\;
        set $\Delta \boldsymbol{\tau} = (g^2(t)/N)\;\delta\boldsymbol{\tau} + g(t)\;\mathbf{z}$\;
        apply $\Delta\boldsymbol{\tau}$ to $C$\;
    }
}
\end{algorithm}

\section{Experiments} 

We evaluate torsional diffusion by comparing the generated and ground-truth conformers in terms of ensemble RMSD (Section \ref{sec:ensemble_quality}) and properties (Section \ref{sec:ensemble_prop}). Code to run and replicate the presented results and links to the datasets discussed are available at \url{https://github.com/gcorso/torsional-diffusion}.

\subsection{Experimental setup} \label{sec:exp_setup}

\paragraph{Dataset.} We evaluate on the GEOM dataset \cite{axelrod2020geom}, which provides gold-standard conformer ensembles generated with metadynamics in CREST \cite{pracht2020automated}. We focus on GEOM-DRUGS---the largest and most pharmaceutically relevant part of the dataset---consisting of 304k drug-like molecules (average 44 atoms). To test the capacity to extrapolate to the largest molecules, we also collect from GEOM-MoleculeNet all species with more than 100 atoms into a dataset we call GEOM-XL and use it to evaluate models trained on DRUGS. Finally, we train and evaluate models on GEOM-QM9, a more established dataset but with significantly smaller molecules (average 11 atoms). Results for GEOM-QM9 are in Appendix \ref{app:torsional_results}.

\paragraph{Evaluation.} We use the train/val/test splits from \cite{ganea2021geomol} and use the same metrics to compare the generated and ground truth conformer ensembles: Average Minimum RMSD (AMR) and Coverage. These metrics are reported both for Recall (R)---which measures how well the generated ensemble covers the ground-truth ensemble---and Precision (P)---which measures the accuracy of the generated conformers. See Appendix \ref{app:torsional_exp_details} for exact definitions and further details. Following the literature, we generate $2K$ conformers for a molecule with $K$ ground truth conformers.

\paragraph{Baselines.} We compare with the strongest existing methods from Section \ref{sec:background}. Among cheminformatics methods, we evaluate RDKit ETKDG \cite{riniker2015better}, the most established open-source package, and OMEGA \cite{hawkins2010conformer, hawkins2012conformer}, a commercial software in continuous development. Among machine learning methods, we evaluate GeoMol \cite{ganea2021geomol} and GeoDiff \cite{xu2021geodiff}, which have outperformed all previous models on the evaluation metrics. Note that GeoDiff originally used a small subset of the DRUGS dataset, so we retrained it using the splits from \cite{ganea2021geomol}.

\begin{table}[t]
\caption{Quality of generated conformer ensembles for the GEOM-DRUGS test set in terms of Coverage (\%) and Average Minimum RMSD (\AA). We compute Coverage with a threshold of $\delta=0.75$~\AA\ to better distinguish top methods. Note that this is different from most prior works, which used $\delta=1.25$ \AA.}\label{tab:quality}
    \vspace{5pt}
\centering
\begin{tabular}{l|cccc|cccc} \toprule
                & \multicolumn{4}{c|}{Recall} & \multicolumn{4}{c}{Precision}  \\
                  & \multicolumn{2}{c}{Coverage $\uparrow$} & \multicolumn{2}{c|}{AMR $\downarrow$} & \multicolumn{2}{c}{Coverage $\uparrow$} & \multicolumn{2}{c}{AMR $\downarrow$} \\
Method & Mean & Med & Mean & Med & Mean & Med & Mean & Med \\ \midrule
RDKit ETKDG & 38.4 & 28.6 & 1.058 & 1.002 & 40.9 & 30.8 & 0.995 & 0.895 \\
OMEGA & 53.4 & 54.6 & 0.841 & 0.762 & 40.5 & 33.3 & 0.946 & 0.854 \\

GeoMol & 44.6 & 41.4 & 0.875 & 0.834 & 43.0 & 36.4 & 0.928 & 0.841 \\
GeoDiff & 42.1 & 37.8 & 0.835 & 0.809 & 24.9 & 14.5 & 1.136 & 1.090 \\ \midrule
Torsional Diffusion & 72.7 & \textbf{80.0} & 0.582 & 0.565 & 55.2 & 56.9 & 0.778 & 0.729     \\ 
TD w/ low temp. & \textbf{73.3} & 77.7 & \textbf{0.570} & \textbf{0.551} & \textbf{66.4} & \textbf{73.8} & \textbf{0.671} & \textbf{0.613}     \\ \bottomrule
\end{tabular}
\end{table}

\subsection{Ensemble RMSD} \label{sec:ensemble_quality}

Torsional diffusion significantly outperforms all previous methods on GEOM-DRUGS (Table \ref{tab:quality} and Figure \ref{fig:coverage}), reducing by 32\% the average minimum recall RMSD and by 28\% the precision RMSD relative to the previous state-of-the-art method. Torsional diffusion is also the first ML method to consistently generate better ensembles than OMEGA. As OMEGA is a well-established product used in industry, this represents an essential step towards establishing the utility of conformer generation with machine learning. 

\begin{figure}[t]
    \caption{Mean coverage for recall (\emph{left}) and precision (\emph{right}) when varying the threshold value $\delta$ on GEOM-DRUGS.}\label{fig:coverage}
    \centering
    \vspace{5pt}
    \includegraphics[width=0.495\textwidth]{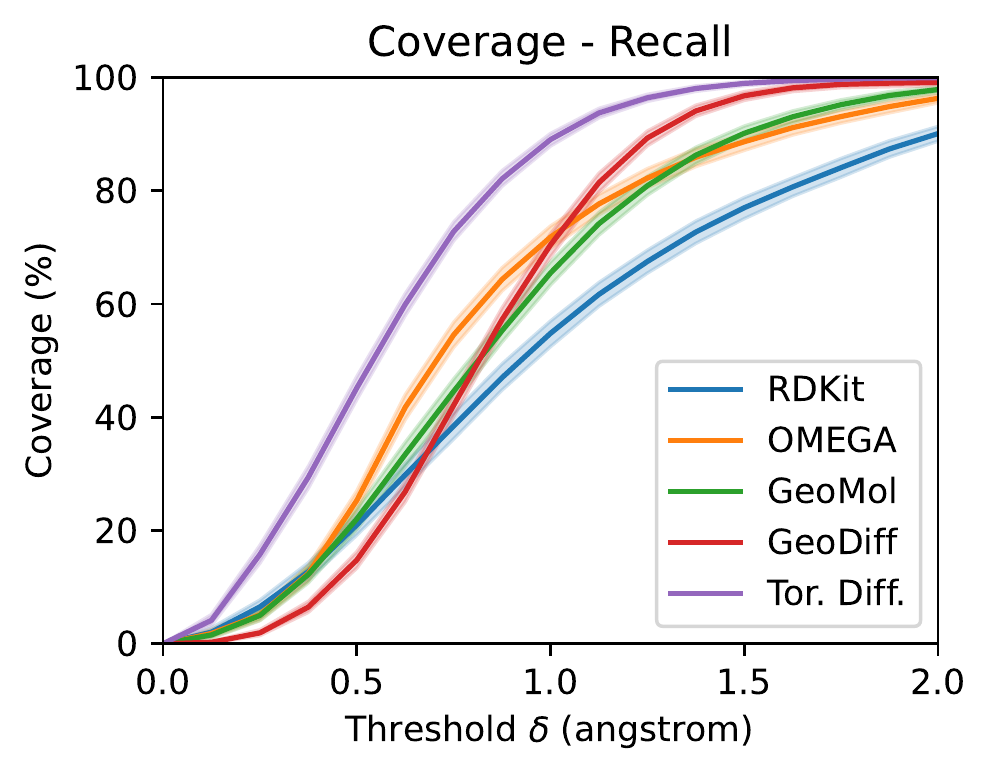}
    \includegraphics[width=0.495\textwidth]{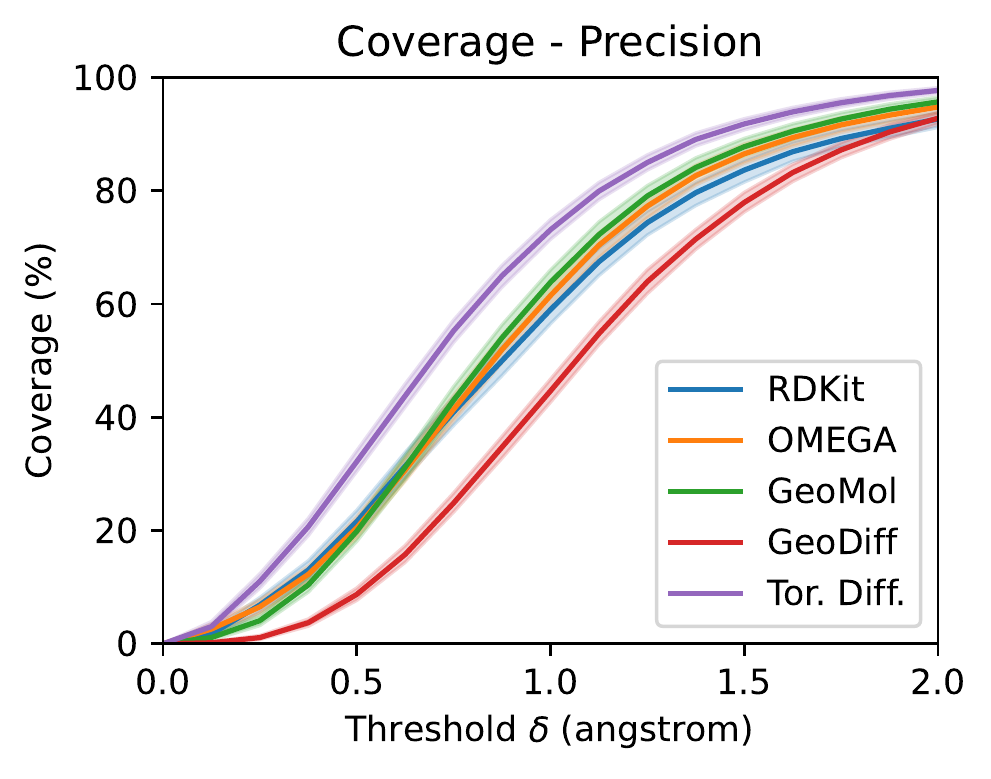}
\end{figure}

Torsional diffusion offers specific advantages over both GeoDiff and GeoMol, the most advanced prior machine learning methods. GeoDiff, a Euclidean diffusion model, requires 5000 denoising steps to obtain the results shown, whereas our model---thanks to the reduced degrees of freedom---requires only 20 steps. In fact, our model outperforms GeoDiff with as few as 5 denoising steps. As seen in Table \ref{tab:runtime}, this translates to enormous runtime improvements. 

Compared to torsional diffusion, GeoMol similarly makes use of intrinsic coordinates. However, since GeoMol can only access the molecular graph, it is less suited for reasoning about relationships that emerge only in a spatial embedding, especially between regions of the molecule that are distant on the graph. Our extrinsic-to-intrinsic score framework---which gives direct access to spatial relationships---addresses precisely this issue. The empirical advantages are most evident for the large molecules in GEOM-XL, on which GeoMol fails to improve consistently over RDKit. On the other hand, because GeoMol requires only a single-forward pass, it retains the advantage of faster runtime compared to diffusion-based methods.

\begin{table}[t]
    \caption{Median AMR and runtime (core-secs per conformer) of machine learning methods, evaluated on CPU for comparison with RDKit.} \label{tab:runtime}
    \centering
    \vspace{5pt}
    \begin{tabular}{lcccc}\toprule  
    Method & Steps &  AMR-R & AMR-P & Runtime \\\midrule
    RDKit & -  & 1.002 & 0.895 &    \textbf{0.10}       \\
    GeoMol & - & 0.834 & 0.841 & 0.18 \\
    GeoDiff & 5000 & 0.809 & 1.090 & 305           \\ \midrule
    \multirow{3}{*}{\makecell{Torsional\\Diffusion}} 
    & 5       & 0.685 & 0.963 & 1.76          \\
    & 10      & 0.580 & 0.791 & 2.82          \\
    & 20      & \textbf{0.565} & \textbf{0.729} & 4.90          \\ \bottomrule
    \end{tabular}
\end{table}

\paragraph{Performance vs size.} Figure \ref{fig:perf_rotatable_bonds} shows the performance of different models as a function of the number of rotatable bonds. Molecules with more rotatable bonds are more flexible and are generally larger; it is therefore expected that the RMSD error will increase with the number of bonds. With very few rotatable bonds, the error of torsional diffusion depends mostly on the quality of the local structures it was given, and therefore it has a similar error as RDKit. However, as the number of torsion angles increases, torsional diffusion deteriorates more slowly than other methods. 

The trend continues with the very large molecules in GEOM-XL (average 136 atoms and 32 rotatable bonds). These not only are larger and more flexible, but---for machine learning models trained on GEOM-DRUGS---are also out of distribution. As shown in Table \ref{tab:results_xl}, on GEOM-XL GeoMol only performs marginally better than RDKit, while torsional diffusion reduces RDKit AMR by 30\% on recall and 12\% on precision. These results can very likely be improved by training and tuning the torsional diffusion model on larger molecules.

\begin{figure}[h!]
    \centering
    \includegraphics[width=0.49\textwidth]{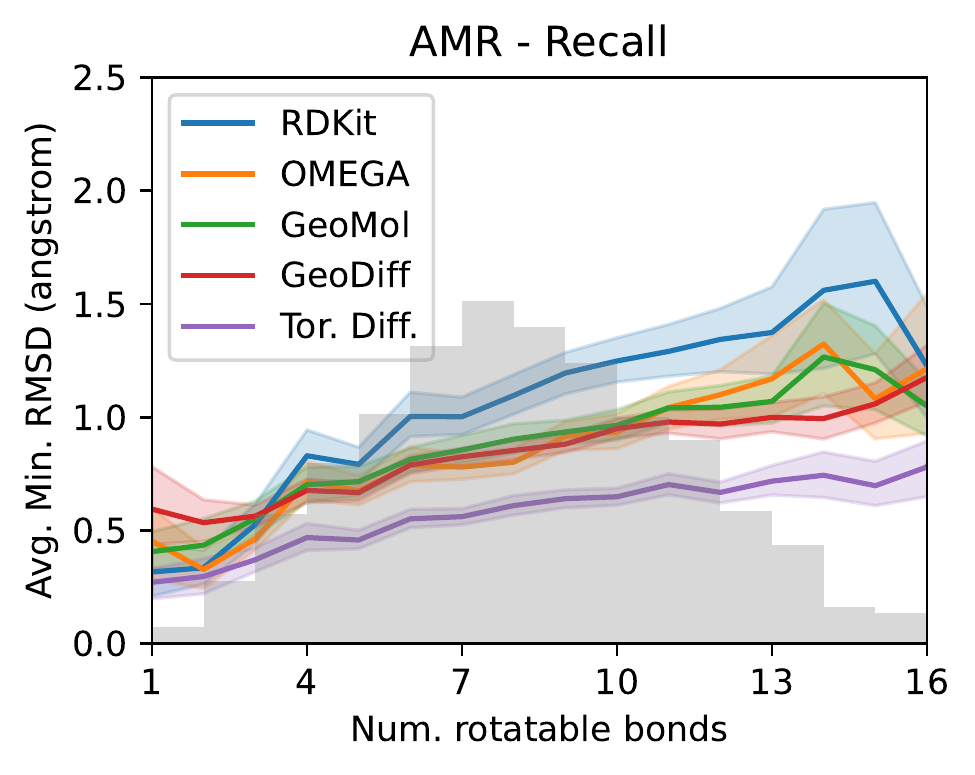}
    \includegraphics[width=0.49\textwidth]{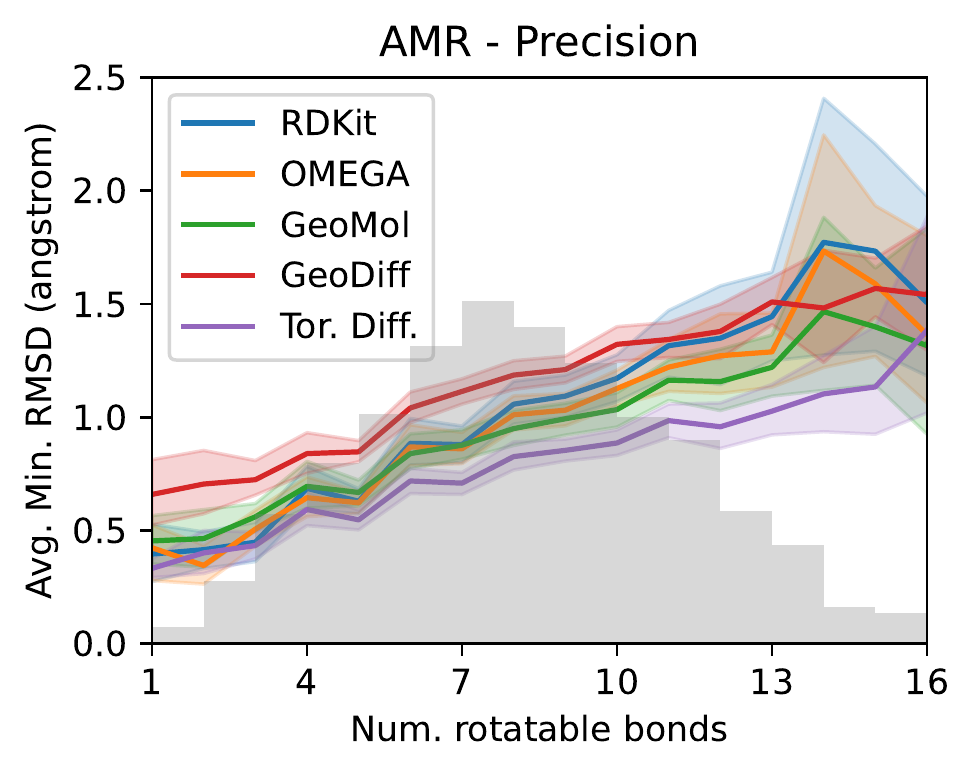}
    \caption{Average minimum RMSD (AMR) for recall (\emph{left}) and precision (\emph{right}) of the different conformer generation methods for molecules with different number of rotatable bonds in GEOM-DRUGS. The background shows the distribution of the number of rotatable bonds.}
    \label{fig:perf_rotatable_bonds}
\end{figure}

\begin{table}[h!]
\caption{Performance of various methods on the GEOM-XL dataset. } \label{tab:results_xl}
\vspace{5pt}
\centering
\begin{tabular}{lcccc} \toprule
                  & \multicolumn{2}{c}{AMR-R $\downarrow$} & \multicolumn{2}{c}{AMR-P $\downarrow$} \\
Model  & Mean      & Med         & Mean        & Med         \\ \midrule
RDKit               & 2.92          & 2.62          & 3.35          & 3.15          \\
GeoMol              & 2.47          & 2.39          & 3.30          & 3.15          \\
Torsional Diffusion & \textbf{2.05} & \textbf{1.86} & \textbf{2.94} & \textbf{2.78}             \\ \bottomrule
\end{tabular}
\end{table}

\subsection{Ensemble properties} \label{sec:ensemble_prop}

While RMSD gives a \emph{geometric} way to evaluate ensemble quality, we also consider the \emph{chemical} similarity between generated and ground truth ensembles. For a random 100-molecule subset of DRUGS, we generate $\min(2K, 32)$ conformers per molecule, relax the conformers with GFN2-xTB \cite{bannwarth2019gfn2},\footnote{Results without relaxation (which are less chemically meaningful) are in Appendix \ref{app:torsional_results}.} and compare the Boltzmann-weighted properties of the generated and ground truth ensembles. Specifically, the following properties are computed with xTB \cite{bannwarth2019gfn2}: energy $E$, dipole moment $\mu$, HOMO-LUMO gap $\Delta \epsilon$, and the minimum energy $E_{\min}$. The median errors for torsional diffusion and the baselines are shown in Table \ref{tab:boltzmann}. Our method produces the most chemically accurate ensembles, especially in terms of energy. In particular, we significantly improve over GeoMol and GeoDiff in finding the lowest-energy conformers that are only (on median) 0.13 kcal/mol higher in energy than the global minimum.

\begin{table}[t]
    \caption{Median absolute error of generated v.s. ground truth ensemble properties. $E, \Delta\epsilon, E_{\min}$ in kcal/mol, $\mu$ in debye.}\label{tab:properties}
    \centering
    \vspace{5pt}
    \begin{tabular}{lcccc}
    \toprule  
    Method & $E$ & $\mu$ & $\Delta \epsilon$ & $E_{\min}$ \\\midrule
    RDKit & 0.81 & 0.52 & 0.75 & 1.16 \\
    OMEGA & 0.68 & 0.66 & 0.68 & 0.69 \\
    GeoMol & 0.42 & \textbf{0.34} & 0.59 & 0.40 \\
    GeoDiff & 0.31 & {0.35} & 0.89 & 0.39 \\
    Tor. Diff. & \textbf{0.22} & {0.35} & \textbf{0.54} & \textbf{0.13} \\ \bottomrule
    \end{tabular}
\end{table}

\section{Torsional Boltzmann Generators}

Diffusion models offer a way of extracting the exact likelihood under the model of the generated datapoints. We exploit this property to train our diffusion model using an energy function rather than samples alone. In Section \ref{sec:likelihood}, we present a way of converting likelihoods on the intrinsic space to likelihoods on the extrinsic one. Then, in Section \ref{sec:energy}, we use the likelihoods to derive a novel training scheme based on importance sampling. Finally, in Section \ref{sec:boltzmann}, we present experimental evidence that, on unseen molecules, our method is more efficient at sampling the conditional Boltzmann distribution than annealed importance sampling (AIS).

\subsection{Likelihood} \label{sec:likelihood}

By using the probability flow ODE, we can compute the likelihood of any sample $\boldsymbol{\tau}$ as follows \cite{song2021score,de2022riemannian}:
\begin{equation} \label{eq:likelihood}
    \log p_0(\boldsymbol{\tau}_0) = \log p_T(\boldsymbol{\tau}_T) - \frac{1}{2} \int_0^T g^2(t) \; \nabla_{\boldsymbol{\tau}} \cdot \mathbf{s}_G(\boldsymbol{\tau}_t, t) \; dt
\end{equation}
In \cite{song2021score}, the divergence term is approximated via Hutchinson's method \cite{hutchinson1989stochastic}, which gives an unbiased estimate of $\log p_0(\boldsymbol{\tau})$. However, this gives a \emph{biased} estimate of $ p_0(\boldsymbol{\tau})$, which is unsuitable for our applications. Thus, we compute the divergence term directly, which is feasible here (unlike in Euclidean diffusion) due to the reduced dimensionality of the torsional space.

The above likelihood is in \textit{torsional} space $p_G(\boldsymbol{\tau} \mid L), \boldsymbol{\tau} \in \mathbb{T}^m$, but to enable compatibility with the Boltzmann measure $e^{-E(\mathbf{x})/kT}$, it is desirable to interconvert this with a likelihood in \textit{Euclidean} space $p(\mathbf{x} \mid L), \mathbf{x} \in \mathbb{R}^{3n}$. A factor is necessary to convert between the volume element in torsional space and in Euclidean space (full derivation in Appendix~\ref{app:proof_likelihood}):

\begin{proposition} \label{prop:euclidean}
Let $\mathbf{x} \in C(\boldsymbol{\tau}, L)$ be a centered\footnote{Additional formalism is needed for translations, but it is independent of the conformer and can be ignored.}
conformer in Euclidean space. Then,
\begin{equation}
    p_G(\mathbf{x} \mid L) = \frac{p_G(\boldsymbol{\tau} \mid L)}{ 8 \pi^2 \sqrt{\det g}}
    \quad \mathrm{where} \ \
    g_{\alpha\beta} = 
    \sum_{k=1}^{n} 
    J^{(k)}_{\alpha} \cdot J^{(k)}_{\beta}
\end{equation}
where the indices $\alpha,\beta$ are integers between 1 and $m+3$. For $1 \leq \alpha \leq m$, $J^{(k)}_\alpha$ is defined as
    \begin{align}
    \label{eqn:basisvec}
        J^{(k)}_{i} &= \tilde J^{(k)}_{i} - \frac 1 n \sum_{\ell=1}^{n} \tilde J^{(\ell)}_{i}
        \quad
        \mathrm{with} \ \
        \tilde J^{(\ell)}_{i} =
        \begin{cases}
            0 & \ell \in \mathcal{V}(b_i), \\
            \frac{\mathbf{x}_{b_i} - \mathbf{x}_{c_i}} {||\mathbf{x}_{b_i} - \mathbf{x}_{c_i}||}
            \times
            \left( \mathbf{x}_\ell - \mathbf{x}_{c_i} \right),
            & \ell \in \mathcal{V}(c_i),
        \end{cases}
    \end{align}
    and for $\alpha \in \{m+1, m+2, m+3\}$ as
    \begin{align}
    \label{eq:omegajacobian}
        J^{(k)}_{m+1} &= \mathbf{x}_k \times \hat{x},
        \qquad
        J^{(k)}_{m+2} = \mathbf{x}_k \times \hat{y},
        \qquad
        J^{(k)}_{m+3} = \mathbf{x}_k \times \hat{z},
        \qquad
    \end{align}
    where $(b_i, c_i)$ is the freely rotatable bond for torsion angle $i$, $\mathcal{V}(b_i)$ is the set of all nodes on the same side of the bond as $b_i$, and $\hat x, \hat y, \hat z$ are the unit vectors in the respective directions.
\end{proposition}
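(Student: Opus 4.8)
This is a change-of-variables (area) formula between a parameter space of internal-plus-orientational coordinates and the embedded submanifold of conformers in $\mathbb{R}^{3n}$, combined with the fact that $p_G(\cdot\mid L)$ is rotation-invariant in Euclidean space. The plan is to fix the local structure $L$ and parametrize the set of centered conformers compatible with $L$ by $(\boldsymbol{\tau},\boldsymbol{\omega})$, where $\boldsymbol{\tau}\in\mathbb{T}^m$ are the torsion angles and $\boldsymbol{\omega}\in\mathbb{R}^3$ is the Euler vector of a global rotation $\mathbf{R}\in SO(3)$ near the identity. As noted in the \emph{Mapping} subsection, $\boldsymbol{\tau}\mapsto(C\mid L)$ is almost surely a bijection once a frame is fixed, so $\Phi:(\boldsymbol{\tau},\boldsymbol{\omega})\mapsto\mathbf{x}\in\mathbb{R}^{3n}$ is (a.s.) a bijection onto the $(m+3)$-dimensional submanifold of centered conformers with local structure $L$. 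Translations are handled separately and, per the footnote, contribute a conformer-independent factor that cancels, so I would drop them and work with centered representatives throughout.

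Next I would compute the columns of $D\Phi$, i.e. the infinitesimal atomic displacements $J_\alpha=\partial\mathbf{x}/\partial(\text{$\alpha$-th coordinate})\in\mathbb{R}^{3n}$. For a torsion coordinate $\tau_i$ (all other internal coordinates and the global frame fixed), Proposition~\ref{prop:torsion} identifies the action as a rigid rotation of the atoms $\mathcal{V}(c_i)$ about the axis $\hat{\mathbf r}_{b_ic_i}$ through $\mathbf{x}_{c_i}$ while fixing $\mathcal{V}(b_i)$; differentiating gives atom velocity $0$ on the $b_i$ side and $\hat{\mathbf r}_{b_ic_i}\times(\mathbf{x}_\ell-\mathbf{x}_{c_i})$ on the $c_i$ side, which is $\tilde J^{(\ell)}_i$ in \eqref{eqn:basisvec}. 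Since $\Phi$ must land in the centered subspace, one projects off the induced net translation, which is exactly the mean-subtraction in \eqref{eqn:basisvec}. For the orientational coordinates, an infinitesimal global rotation about $\hat{x}$ gives atom velocity $\mathbf{x}_k\times\hat{x}$ (similarly $\hat{y},\hat{z}$), as in \eqref{eq:omegajacobian}, and these preserve the center of mass automatically because $\mathbf{x}$ is centered, so no projection is needed. Sign and ordering conventions in the definitions of $\tau_i$ and $\boldsymbol{\omega}$ are immaterial below, since only $D\Phi^\top D\Phi$ enters. I would then note that $\Phi$ is a.s. an immersion: the $m+3$ vectors $J_\alpha$ are linearly independent off the measure-zero set flagged in the footnotes, so $g_{\alpha\beta}=J_\alpha\cdot J_\beta=\sum_{k}J^{(k)}_\alpha\cdot J^{(k)}_\beta$ is the nonsingular pullback metric of the embedding.

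Finally I would assemble the two densities. By the area formula, the $(m+3)$-dimensional induced (Hausdorff) volume on the conformer submanifold is $dV(\mathbf{x})=\sqrt{\det g}\;d\boldsymbol{\tau}\,d\boldsymbol{\omega}$. Rotation-invariance of $p_G(\cdot\mid L)$ forces its orientational marginal to be uniform on $SO(3)$, whose total mass in the Euler-vector coordinate is $\int_{SO(3)}d\boldsymbol{\omega}=8\pi^2$; hence the joint law in coordinates is $p_G(\boldsymbol{\tau}\mid L)\,d\boldsymbol{\tau}\cdot\tfrac{1}{8\pi^2}d\boldsymbol{\omega}$. Equating $p_G(\mathbf{x}\mid L)\,dV(\mathbf{x})$ with this and dividing by $\sqrt{\det g}\,d\boldsymbol{\tau}\,d\boldsymbol{\omega}$ gives $p_G(\mathbf{x}\mid L)=p_G(\boldsymbol{\tau}\mid L)/(8\pi^2\sqrt{\det g})$.

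The main obstacle is careful bookkeeping rather than any deep idea: correctly deriving the Jacobian columns — especially getting the re-centering projection for the torsional directions and checking the rotational directions need none — and pinning down the constant $8\pi^2$ as the volume of $SO(3)$ in precisely the coordinate in which the rotational Jacobian columns were written. One must also justify that $\Phi$ is a.e. a diffeomorphism onto its image (injectivity from the a.s. bijection in the \emph{Mapping} step, plus the immersion property just noted) so that the area formula applies with no multiplicity correction, and keep track of the translational degrees of freedom long enough to see they drop out.
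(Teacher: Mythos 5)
Your proposal follows essentially the same route as the paper's own proof: you parametrize the $(m+3)$-dimensional submanifold of centered conformers by $(\boldsymbol{\tau},\boldsymbol{\omega})$, compute the same Jacobian columns (including the mean-subtraction that projects the torsional directions back onto the centered subspace and the observation that the rotational columns $\mathbf{x}_k\times\hat{e}$ need no such projection), assemble the pullback metric $g$, apply the area formula, and marginalize the $\omega$-part using rotation-invariance to pick up the $SO(3)$ volume factor $8\pi^2$. The only cosmetic difference is phrasing the last step as a disintegration of the joint law rather than as an explicit integral $\int\sqrt{\det g}\,d^3\omega$; the calculation is identical.
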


\subsection{Energy-based training} \label{sec:energy}

By computing likelihoods, we can train torsional diffusion models to match the Boltzmann distribution over torsion angles using the energy function. At a high level, we minimize the usual score matching loss, but with simulated samples from the Boltzmann distribution rather than data samples. The procedure therefore consists of two stages: resampling and score matching, which are tightly coupled during training (Algorithm 1). In the \emph{resampling} stage, we use the model as an importance sampler for the Boltzmann distribution, where Proposition \ref{prop:euclidean} is used to compute the (unnormalized) torsional Boltzmann density $\tilde{p}_G(\boldsymbol{\tau}\mid L)$. In the \emph{score-matching} stage, the importance weights are used to approximate the denoising score-matching loss with expectations taken over $\tilde{p}_G(\boldsymbol{\tau}\mid L)$. As the model learns the score, it improves as an importance sampler. 

{
\begin{small}
\vspace{10pt}
\begin{algorithm}[H]
\caption{Energy-based training epoch}\label{alg:energy}
\KwIn{Boltzmann density $\tilde{p}$, training pairs $\{(G_i, L_i)\}_i$, torsional diffusion model $q$}
\For{\textbf{each} $(G_i, L_i)$}{
    Sample $\boldsymbol{\tau}_1, \ldots \boldsymbol{\tau}_K \sim q_{G_i}(\boldsymbol{\tau}\mid L_i)$\;
    \For{$k\leftarrow 1$ \KwTo $K$}{
        $\tilde{w}_k = \tilde{p}_{G_i}(\boldsymbol{\tau}_k\mid L_i)/q_{G_i}(\boldsymbol{\tau}_k\mid L_i)$\;
    }
    Approximate $J_\text{DSM}$ for $p_0 \propto \tilde{p}$ using $\{(\tilde{w}_i, \boldsymbol{\tau}_i)\}_i$\;
    Minimize $J_\text{DSM}$\;
}
\end{algorithm}
\end{small}
\vspace{10pt}
}

This training procedure differs substantially from that of existing Boltzmann generators, which are trained as flows with a loss that directly depends on the model density. In contrast, we \emph{train} the model as a score-based model, but \emph{use} it as a flow---both during training and inference---to generate samples. The model density is needed only to reweight the samples to approximate the target density. Since in principle the model used for resampling does not need to be the same as the model being trained,\footnote{For example, if the resampler were perfect, the procedure would reduce to normal denoising score matching.} we can use very few steps (a shallow flow) during resampling to accelerate training, and then increase the number of steps (a deeper flow) for better approximations during inference---an option unavailable to existing Boltzmann generators.

\subsection{Torsional Boltzmann generator} \label{sec:boltzmann}

\begin{table}
    \caption{Effective sample size (out of 32) given by importance sampling weights over the torsional Boltzmann density.}\label{tab:boltzmann}
    \vspace{5pt}
    \centering
    \begin{tabular}{lcccc} \toprule
    & & \multicolumn{3}{c}{Temp. (K)} \\ \cmidrule(lr){3-5}
    Method & Steps & 1000 & 500 & 300 \\ \midrule
    Uniform & -- & 1.71 & 1.21 & 1.02 \\ \midrule
    \multirow{3}{*}{AIS} 
    & 5 & 2.20 & 1.36 & 1.18 \\
    & 20 & 3.12 & 1.76 & 1.30  \\ 
    & 100 & 6.72 & 3.12 & 2.06 \\ \midrule
    \multirow{2}{*}{\makecell[l]{Torsional\\BG}} 
    & 5 & 7.28 & 3.60 & 3.04 \\
    & 20 & \textbf{11.42} & \textbf{6.42} & \textbf{4.68} \\ \bottomrule
    \end{tabular}
\end{table}

We evaluate how well a torsional Boltzmann generator trained with MMFF \cite{halgren1996merck} energies can sample the corresponding Boltzmann density over torsion angles. We train and test on GEOM-DRUGS molecules with 3--7 rotatable bonds and use the local structures of the first ground-truth conformers. For the baselines, we implement annealed importance samplers (AIS) \cite{neal2001annealed} with Metropolis-Hastings steps over the torsional space and tune the variance of the transition kernels.

Table \ref{tab:boltzmann} shows the quality of the samplers in terms of the \emph{effective sample size} (ESS) given by the weights of 32 samples for each test molecule, which measures the $\alpha$-divergence (with $\alpha=2$) between the model and Boltzmann distributions \cite{midgley2021bootstrap}. Our method significantly outperforms the AIS baseline, and improves with increased step size despite being trained with only a 5-step resampler. Note that, since these evaluations are done on \emph{unseen} molecules, they are beyond the capabilities of existing Boltzmann generators.

Figure \ref{fig:hist_bg} shows the distributions of ESSs at 500K for the torsional Boltzmann generator and the AIS baseline. While AIS fails to generate more than one effective sample for most molecules (tall leftmost column), torsional Boltzmann generators are much more efficient, with more than five effective samples for a significant fraction of molecules.

\begin{figure}[h!]
    \centering
    \includegraphics[width=0.49\textwidth]{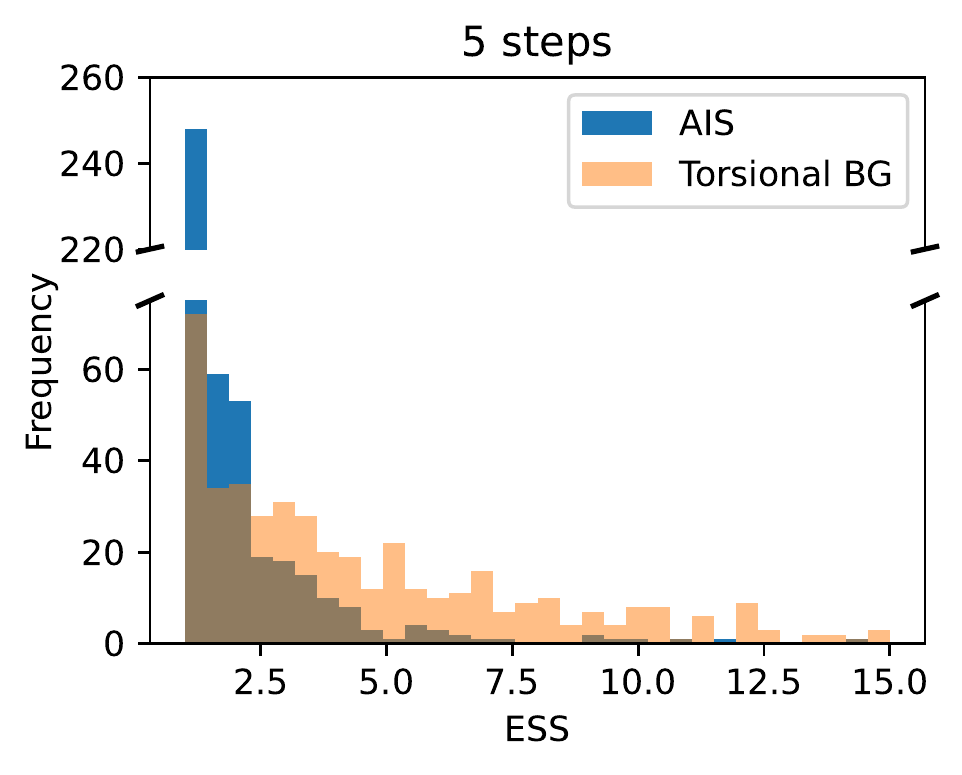}
    \includegraphics[width=0.49\textwidth]{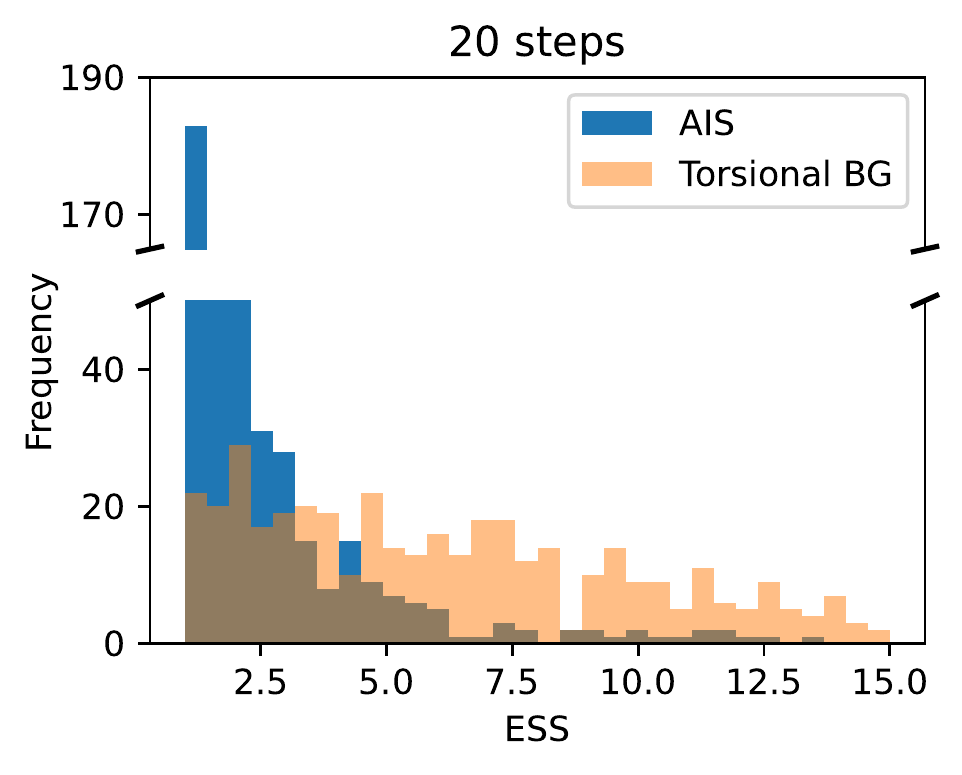}
    \caption{Histogram of the ESSs of the torsional Boltzmann generator and AIS baseline at 500K.}
    \label{fig:hist_bg}
\end{figure}

\chapter{DiffDock} \label{chapter:diffdock}

The biological functions of proteins can be modulated by small molecule ligands (such as drugs) binding to them. Thus, a crucial task in computational drug design is \emph{molecular docking}---predicting the position, orientation, and conformation of a ligand when bound to a target protein---from which the effect of the ligand (if any) might be inferred. Traditional approaches for docking \cite{trott2010autodock,halgren2004glide} rely on scoring-functions that estimate the correctness of a proposed structure or pose, and an optimization algorithm that searches for the global maximum of the scoring function. However, since the search space is vast and the landscape of the scoring functions rugged, these methods tend to be too slow and inaccurate, especially for high-throughput workflows.

Recent works \cite{equibind, Lu2022TankBind} have developed deep learning models to predict the binding pose in one shot, treating docking as a regression problem. While these methods are much faster than traditional search-based methods, they have yet to demonstrate significant improvements in accuracy. We argue that this may be because the regression-based paradigm  
corresponds imperfectly with the objectives of molecular docking, which is reflected in the fact that standard accuracy metrics resemble the \emph{likelihood} of the data under the predictive model rather than a regression loss. 
We thus frame molecular docking as a \textit{generative modeling problem}---given a ligand and target protein structure, we learn a distribution over ligand poses.

Following the intrinsic diffusion models framework, we therefore develop \textsc{DiffDock}, a diffusion generative model (DGM) over the space of ligand poses for molecular docking. We define a diffusion process over the degrees of freedom involved in docking: the position of the ligand relative to the protein (locating the binding pocket), its orientation in the pocket, and the torsion angles describing its conformation. \textsc{DiffDock} samples poses by running the learned (reverse) diffusion process, which iteratively transforms an uninformed, noisy prior distribution over ligand poses into the learned model distribution (Figure~\ref{fig:diffdock_overview}). Intuitively, this process can be viewed as the progressive refinement of random poses via updates of their translations, rotations, and torsion angles.

While DGMs have been applied to other problems in molecular machine learning \cite{xu2021geodiff,jing2022torsional,hoogeboom2022equivariant}, existing approaches are ill-suited for molecular docking, where the space of ligand poses is an $(m+6)$-dimensional submanifold $\mathcal{M} \subset \mathbb{R}^{3n}$, where $n$ and $m$ are, respectively, the number of atoms and torsion angles. To develop \textsc{DiffDock}, we recognize that the docking degrees of freedom define $\mathcal{M}$ as the space of poses accessible via a set of allowed \emph{ligand pose transformations}. We use this idea to map elements in $\mathcal{M}$ to the product space of the groups corresponding to those transformations, where a DGM can be developed and trained efficiently.

As applications of docking models often require only a fixed number of predictions and a confidence score over these, we train a \emph{confidence model} to provide confidence estimates for the poses sampled from the DGM and to pick out the most likely sample. This two-step process can be viewed as an intermediate approach between brute-force search and one-shot prediction: we retain the ability to consider and compare multiple poses without incurring the difficulties of high-dimensional search.

Empirically, on the standard blind docking benchmark PDBBind, \textsc{DiffDock} achieves 38\% of top-1 predictions with ligand root mean square distance (RMSD) below 2\AA{}, nearly doubling the performance of the previous state-of-the-art deep learning model (20\%). \textsc{DiffDock} significantly outperforms even state-of-the-art search-based methods (23\%), while still being 3 to 12 times faster on GPU. Moreover, it provides an accurate confidence score of its predictions, obtaining 83\% RMSD$<$2\AA{} on its most confident third of the previously unseen complexes.

We further evaluate the methods on structures generated by ESMFold \cite{Lin2022ESM2}. Our results confirm previous analyses \cite{wong2022benchmarking} that showed that existing methods are not capable of docking against these approximate apo-structures (RMSD$<$2\AA{} equal or below 10\%). Instead, without further training, \textsc{DiffDock} places 22\% of its top-1 predictions within 2\AA{} opening the way for the revolution brought by accurate protein folding methods in the modeling of protein-ligand interactions.

This chapter is based on the paper:
\begin{displayquote}
\textbf{DiffDock: Diffusion Steps, Twists, and Turns for Molecular Docking.} Gabriele Corso*, Hannes Stärk*, Bowen Jing*, Regina Barzilay, and Tommi Jaakkola.  11th International Conference on Learning Representations (ICLR 2023).
\end{displayquote}

\begin{figure}[t]
    \centering
    \includegraphics[width=\textwidth]{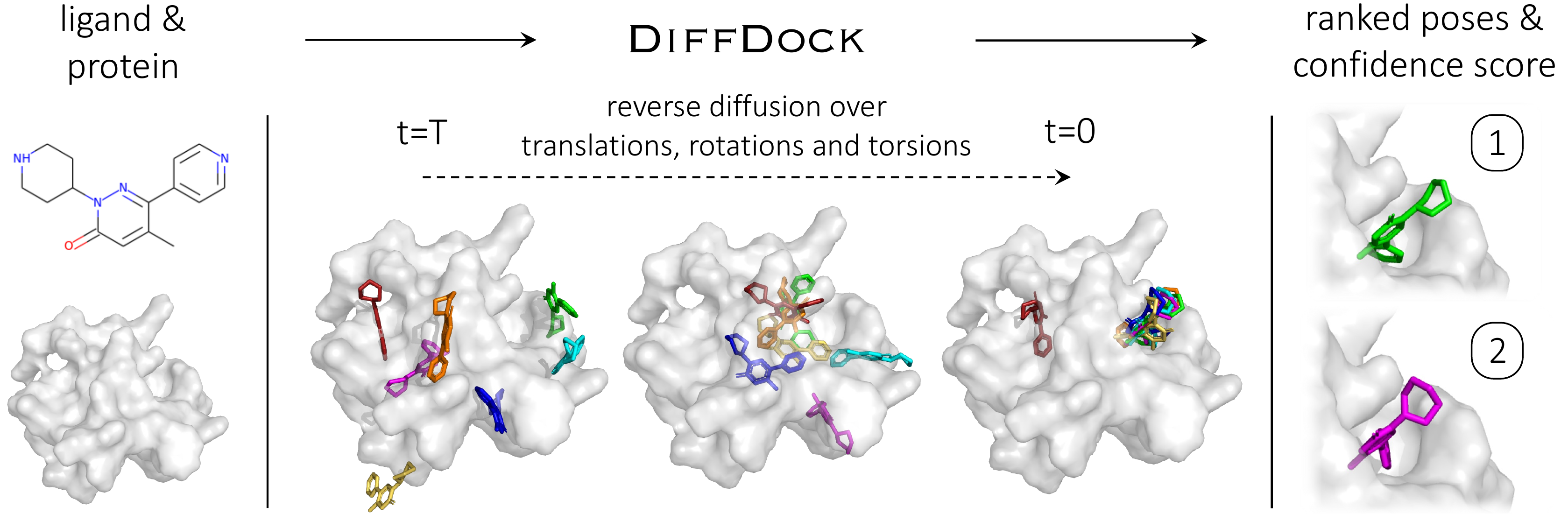}
    \caption{Overview of \textsc{DiffDock}. \emph{Left}: The model takes as input the separate ligand and protein structures. \emph{Center}: Randomly sampled initial poses are denoised via a reverse diffusion over translational, rotational, and torsional degrees of freedom. \emph{Right:}. The sampled poses are ranked by the confidence model to produce a final prediction and confidence score.}
    \label{fig:diffdock_overview}
\end{figure}

\section{Background and Related Work} \label{sec:diffdock_background}

\paragraph{Molecular docking.} The molecular docking task is usually divided between known-pocket and blind docking. Known-pocket docking algorithms receive as input the position on the protein where the molecule will bind (the \emph{binding pocket}) and only have to find the correct orientation and conformation. Blind docking instead does not assume any prior knowledge about the binding pocket; in this work, we will focus on this general setting. Due to the relative rigidity of the protein, docking methods typically assume the knowledge of the bound protein structure \cite{pagadala2017software}, this assumption is however not always realistic therefore we evaluate methods both with and without access to exact bound structure. Methods are normally evaluated by the percentage of hits, or approximately correct predictions, commonly considered to be those where the ligand RMSD error is below 2\AA{} \cite{Alhossary2015QuickVina2, Hassan2017QVinaW, mcnutt2021gnina}.

\paragraph{Search-based docking methods.} Traditional docking methods \cite{trott2010autodock,halgren2004glide,thomsen2006moldock} consist of a parameterized physics-based scoring function and a search algorithm. The scoring-function takes in 3D structures and returns an estimate of the quality/likelihood of the given pose, while the search stochastically modifies the ligand pose (position, orientation, and torsion angles) with the goal of finding the global optimum of the scoring function. Recently, machine learning has been applied to parameterize the scoring-function \cite{mcnutt2021gnina, mendez2021deepdock}. These search-based methods have offered relative improvements when docking to a known pocket but are typically very computationally expensive to run and must still grapple with the very large search space that characterizes blind docking. 

\paragraph{Machine learning for blind docking.} Recently, EquiBind \cite{equibind} has tried to tackle the blind docking task by directly predicting pocket keypoints on both ligand and protein and aligning them. TANKBind \cite{Lu2022TankBind} improved over this by independently predicting a docking pose (in the form of an interatomic distance matrix) for each possible pocket and then ranking them. Although these one-shot or few-shot regression-based prediction methods are orders of magnitude faster, their performance has not yet reached that of traditional search-based methods.

\section{Docking as Generative Modeling}\label{sec:generative_modeling}

Although EquiBind and other ML methods have provided strong runtime improvements by avoiding an expensive optimization process over ligand poses, their performance has not yet reached that of search-based methods. As our analysis below argues, this may be caused by the models' uncertainty and the optimization of an objective function that does not correspond to how molecular docking is used and evaluated in practice. 

\paragraph{Molecular docking objective.} Molecular docking plays a critical role in drug discovery because the prediction of the 3D structure of a bound protein-ligand complex enables further computational and human expert analyses on the strength and properties of the binding interaction. Therefore, a docked prediction is only useful if its deviation from the true structure does not significantly affect the output of such analyses. Concretely, a prediction is considered acceptable when the distance between the structures (measured in terms of ligand RMSD) is below some small tolerance on the order of the length scale of atomic interactions (a few \AA{}ngstr\"om). Consequently, the standard evaluation metric used in the field has been the percentage of predictions with a ligand RMSD (to the crystal ligand pose) below some value $\epsilon$. 

However, the objective of maximizing the proportion of predictions with RMSD within some tolerance $\epsilon$ is not differentiable and cannot be used for training with stochastic gradient descent. Instead, maximizing the expected proportion of predictions with RMSD $<\epsilon$ corresponds to maximizing the likelihood of the true structure under the model's output distribution, in the limit as $\epsilon$ goes to 0. This observation motivates training a generative model to minimize an upper bound on the negative log-likelihood of the observed structures under the model's distribution. Thus, we view molecular docking as the problem of learning a distribution over ligand poses conditioned on the protein structure and develop a diffusion generative model over this space (Section~\ref{sec:diffusion_model}).

\begin{figure}[t]
    \centering
    \includegraphics[width=\textwidth]{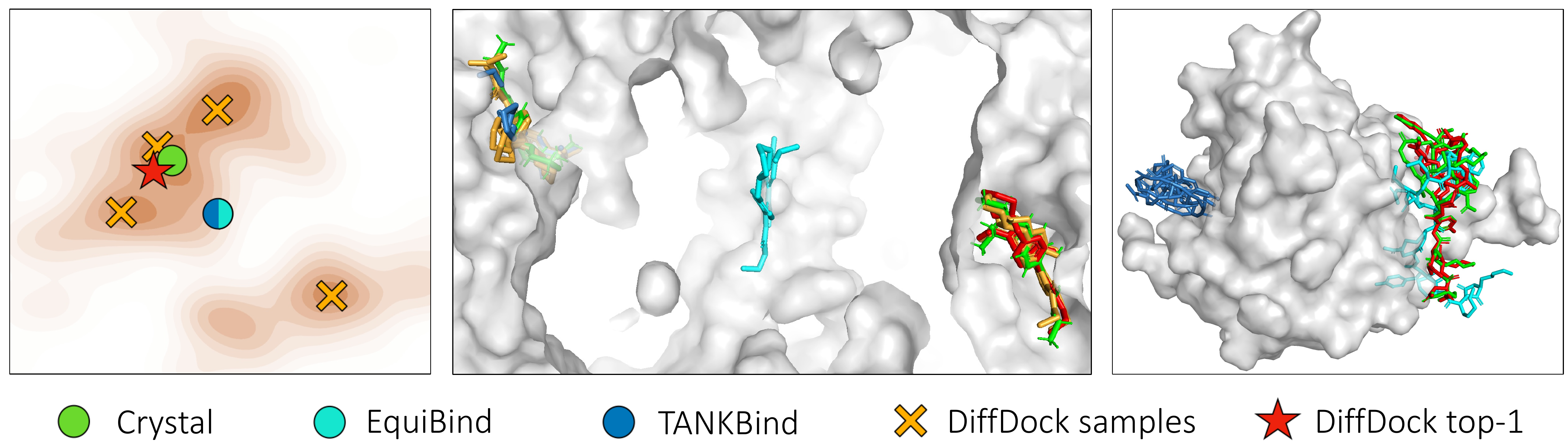}
    \caption{``\textsc{DiffDock} top-1" refers to the sample with the highest confidence. ``\textsc{DiffDock} samples" to the other diffusion model samples. \textit{Left:} Visual diagram of the advantage of generative models over regression models. Given uncertainty in the correct pose (represented by the orange distribution), regression models tend to predict the mean of the distribution, which may lie in a region of low density. \textit{Center:} when there is a global symmetry in the protein (aleatoric uncertainty), EquiBind places the molecule in the center while \textsc{DiffDock} is able to sample all the true poses. \textit{Right:} even in the absence of strong aleatoric uncertainty, the epistemic uncertainty causes EquiBind's prediction to have steric clashes and TANKBind's to have many self-intersections. }
    \label{fig:averaging}
\end{figure}

\paragraph{Confidence model.} With a trained diffusion model, it is possible to sample an arbitrary number of ligand poses from the posterior distribution according to the model. However, researchers are often interested in seeing only one or a small number of predicted poses and an associated confidence measure\footnote{For example, the pLDDT confidence score of AlphaFold2 \cite{jumper2021highly} has had a very significant impact in many applications \cite{necci2021critical,bennett2022improving}.} for downstream analysis. Thus, we train a confidence model over the poses sampled by the diffusion model and rank them based on its confidence that they are within the error tolerance. The top-ranked ligand pose and the associated confidence are then taken as \textsc{DiffDock}'s top-1 prediction and confidence score.

\paragraph{Problem with regression-based methods.} The difficulty with the development of deep learning models for molecular docking lies in the aleatoric (which is the data inherent uncertainty, e.g., the ligand might bind with multiple poses to the protein) and epistemic uncertainty (which arises from the complexity of the task compared with the limited model capacity and data available) on the pose. Therefore, given the available co-variate information (only protein structure and ligand identity), any method will exhibit uncertainty about the correct binding pose among many viable alternatives. Any regression-style method that is forced to select a single configuration that minimizes the expected square error would learn to predict the (weighted) mean of such alternatives. In contrast, a generative model with the same co-variate information would instead aim to capture the distribution over the alternatives, populating all/most of the significant modes even if similarly unable to distinguish the correct target. This behavior, illustrated in Figure~\ref{fig:averaging}, causes the regression-based models to produce significantly more physically implausible poses than our method. In particular, we observe frequent steric clashes (e.g., 26\% of EquiBind's predictions) and self-intersections in EquiBind's and TANKBind's predictions (Figures \ref{fig:self_intersections} and \ref{fig:random_examples}). We found no intersections in \textsc{DiffDock}'s predictions. 
Visualizations and quantitative evidence of these phenomena are in Appendix~\ref{appx:steric_clashes}.

\section{Method} \label{sec:diffusion_model}

\subsection{Flexibility}

A ligand pose is an assignment of atomic positions in $\mathbb{R}^{3}$, so in principle, we can regard a pose $\xx$ as an element in $\mathbb{R}^{3n}$, where $n$ is the number of atoms. However, this encompasses far more degrees of freedom than are relevant in molecular docking. In particular, bond lengths, angles, and small rings in the ligand are essentially rigid, such that the ligand flexibility lies almost entirely in the torsion angles at rotatable bonds. Traditional docking methods, as well as most ML ones, take as input a seed conformation $\cc \in \mathbb{R}^{3n}$ of the ligand in isolation and change only the relative position and the torsion degrees of freedom in the final bound conformation.\footnote{RDKit ETKDG is a popular method for predicting the seed conformation. Although the structures may not be predicted perfectly, the errors lie largely in the torsion angles, which are resampled anyways.} The space of ligand poses consistent with $\cc$ is, therefore, an $(m+6)$-dimensional submanifold $\mathcal{M}_\cc \subset \mathbb{R}^{3n}$, where $m$ is the number of rotatable bonds, and the six additional degrees of freedom come from rototranslations relative to the fixed protein. This defines the extrinsic manifold over which we will develop the intrinsic diffusion model, therefore, given as input a seed conformation $\cc$, we formulate molecular docking as learning a probability distribution $p_\cc(\xx \mid \yy)$ over the manifold $\mathcal{M}_\cc$, conditioned on a protein structure $\yy$.


\subsection{Mapping}

In order to make training a diffusion model over the manifold $\mathcal{M}_\cc$ efficient, we follow the IDM framework and define a one-to-one mapping to an intrinsic, ``nicer'', manifold where the diffusion kernel can be sampled directly. 

Any ligand pose consistent with a seed conformation can be reached by a combination of (1) ligand translations, (2) ligand rotations, and (3) changes to torsion angles. This suggests that given a continuous family of ligand pose transformations corresponding to the $m+6$ degrees of freedom, a distribution on $\mathcal{M}_\cc$ can be lifted to a distribution on the product space of the corresponding groups---which is itself a manifold. 

We associate translations of ligand position with the 3D translation group $\mathbb{T}(3)$, rigid rotations of the ligand with the 3D rotation group $SO(3)$, and changes in torsion angles at each rotatable bond with a copy of the 2D rotation group $SO(2)$. More formally, we define operations of each of these groups on a ligand pose $\cc \in \mathbb{R}^{3n}$. The translation $A_\text{tr}: \mathbb{T}(3) \times \mathbb{R}^{3n} \rightarrow \mathbb{R}^{3n}$ is defined straightforwardly as $A_\text{tr}(\mathbf{r}, \xlig)_i = \xlig_i + \mathbf{r}$ using the isomorphism $\mathbb{T}(3) \cong \mathbb{R}^3$ where $\xx_i \in \mathbb{R}^3$ is the position of the $i$th atom. Similarly, the rotation $A_\text{rot}: SO(3) \times \mathbb{R}^{3n} \rightarrow \mathbb{R}^{3n}$ is defined by $A_\text{rot}(R, \xlig)_i = R(\xlig_i - \bar \xlig) + \bar\xlig$ where $\bar \xlig = \frac{1}{n}\sum \xlig_i$, corresponding to rotations around the (unweighted) center of mass of the ligand.

Many valid definitions of a change in torsion angles are possible, as the torsion angle around any bond $(a_i, b_i)$ can be updated by rotating the $a_i$ side, the $b_i$ side, or both. However, we can specify changes of torsion angles to be \emph{disentangled} from rotations or translations. One way of doing so is to identify a central motif in the molecule, such as a ring, and change torsion angles in a way that keeps the motif fixed. However, this special treatment of the central motif introduces an arbitrary asymmetry into the problem and could be difficult for a score model to reason about. Thus, we instead define the operation of elements of $SO(2)^m$ such that it causes a minimal perturbation (in an RMSD sense) to the structure:\footnote{Since we do not define or use the composition of elements of $SO(2)^m$, strictly speaking, it is a product \emph{space} but not a \emph{group} and can be alternatively thought of as the torus $\mathbb{T}^m$ with an origin element.}
\begin{definition*}
Let $B_{k,\theta_k}(\xx) \in \mathbb{R}^{3n}$ be any valid torsion update by $\theta_k$ around the $k$th rotatable bond $(a_k, b_k)$. We define $A_\text{tor}: SO(2)^m \times \mathbb{R}^{3n} \rightarrow \mathbb{R}^{3n}$such that
$$A_\text{tor}(\boldsymbol{\theta}, \xx) = \rmsdalign(\xx, (B_{1,\theta_1} \circ \cdots B_{m,\theta_m})(\xx))$$
where $\boldsymbol{\theta} = (\theta_1, \ldots \theta_m)$ and
\begin{equation}
    \rmsdalign(\xx, \xx') = \argmin_{\xx^\dagger \in \{g\xx' \mid g\in SE(3)\}}\rmsd(\xx, \xx^\dagger)
\end{equation}
\end{definition*}
This means that we apply all the $m$ torsion updates in any order and then perform a global RMSD alignment with the unmodified pose. The definition is motivated by ensuring that the infinitesimal effect of a torsion is orthogonal to any rototranslation, i.e., it induces no \emph{linear or angular momentum}. These properties can be stated more formally as follows (proof in Appendix~\ref{app:proof_momentum}):

\begin{proposition}
Let $\bfy(t) := A_\text{tor}(t\boldsymbol{\theta}, \xlig)$ for some $\boldsymbol{\theta}$ and where $t\boldsymbol{\theta} = (t\theta_1, \ldots t\theta_m)$. Then the linear and angular momentum are zero: $\frac{d}{dt} \bar\bfy|_{t=0} = 0$ and $\sum_i (\xx - \bar\xlig) \times \frac{d}{dt}\bfy_i|_{t=0} = 0$ where $\bar\xx = \frac{1}{n}\sum_i \xx_i$.
\end{proposition}

Now consider the product space\footnote{Since we never compose elements of $\PP$, we do not need to define a group structure.} $\PP = \mathbb{T}^3 \times SO(3) \times SO(2)^m$ and define $A: \PP \times \mathbb{R}^{3n} \rightarrow \mathbb{R}^{3n}$ as
\begin{equation}\label{eq:action}
    A((\rr, R, \boldsymbol{\theta}), \xx) = A_\text{tr}(\rr, A_\text{rot}(R, A_\text{tor}(\boldsymbol{\theta}, \xx)))
\end{equation}
These definitions collectively provide the sought-after product space corresponding to the docking degrees of freedom. Indeed, for a seed ligand conformation $\cc$, we can formally define the space of ligand poses $\mathcal{M}_\cc = \{A(g, \cc) \mid g \in \PP\}$. This product space $\PP$ forms the intrinsic manifold over which we will define the diffusion process and corresponds precisely to the intuitive notion of the space of ligand poses that can be reached by rigid-body motion plus torsion angle flexibility.

To ensure that the product space $\PP$ can be used to learn a DGM over ligand poses in $\mathcal{M}_\cc$ we show that (proof in Appendix~\ref{app:proof_bijection}):
\begin{proposition}
    For a given seed conformation $\cc$, the map $A(\cdot, \cc): \PP \rightarrow \mathcal{M}_\cc$ is a bijection.
\end{proposition}
which means that the inverse $A^{-1}_\cc: \mathcal{M}_\cc \rightarrow \PP$ given by $A(g, \cc) \mapsto g$ maps ligand poses $\xx \in \mathcal{M}_\cc$ to points on the product space $\PP$. We are now ready to develop a diffusion process on $\PP$.

\subsection{Diffusion}

Following \cite{de2022riemannian} to implement a diffusion model on $\PP$, it suffices to develop a method for sampling from and computing the score of the diffusion kernel on $\PP$. Furthermore, since $\PP$ is a product manifold, the forward diffusion proceeds independently in each manifold \cite{rodola2019functional}, and the tangent space  is a direct sum: $T_g \PP = T_\rr \mathbb{T}_3 \oplus T_R SO(3) \oplus T_{\boldsymbol{\theta}} SO(2)^m \cong \mathbb{R}^3 \oplus \mathbb{R}^3 \oplus \mathbb{R}^m$ where $g = (\rr, R, \boldsymbol{\theta})$. Thus, it suffices to sample from the diffusion kernel and regress against its score in each group independently.

In all three groups, we define the forward SDE as $d\xx = \sqrt{d\sigma^2(t)/dt}\, d\ww$ where $\sigma^2 = \sigma^2_\text{tr}$, $\sigma^2_\text{rot}$, or $\sigma^2_\text{tor}$ for $\mathbb{T}(3)$, $SO(3)$, and $SO(2)^m$ respectively and where $\ww$ is the corresponding Brownian motion. Since $\mathbb{T}(3) \cong \mathbb{R}^3$, the translational case is trivial and involves sampling and computing the score of a standard Gaussian with variance $\sigma^2(t)$. The diffusion kernel on $SO(3)$ is given by the $IGSO(3)$ distribution \cite{nikolayev1970normal, leach2022denoising}, which can be sampled in the axis-angle parameterization by sampling a unit vector $\boldsymbol{\hat\omega} \in \mathfrak{so}(3)$ uniformly\footnote{$\mathfrak{so}(3)$ is the tangent space of $SO(3)$ at the identity and is the space of Euler (or rotation) vectors, which are equivalent to the axis-angle parameterization.} and random angle $\omega \in [0, \pi]$ according to
\begin{equation}
\begin{footnotesize}
    p(\omega) = \frac{1-\cos \omega}{\pi} f(\omega) \quad \text{where} \quad f(\omega) = \sum_{l=0}^\infty (2l+1)\exp(-l(l+1)\sigma^2)\frac{\sin((l+1/2)\omega)}{\sin(\omega/2)}
\end{footnotesize}
\end{equation}
Further, the score of the diffusion kernel is $\nabla \ln p_t(R' \mid R) = (\frac{d}{d\omega}\log f(\omega))\boldsymbol{\hat}{\omega} \in T_{R'}SO(3)$, where $R' = \mathbf{R}(\omega \boldsymbol{\hat}{\omega})R$ is the result of applying Euler vector $\omega\boldsymbol{\hat\omega}$ to $R$. The score computation and sampling can be accomplished efficiently by precomputing the truncated infinite series and interpolating the CDF of $p(\omega)$, respectively. Finally, the $SO(2)^m$ group is diffeomorphic to the torus $\mathbb{T}^m$, on which the diffusion kernel is a \emph{wrapped normal distribution} with variance $\sigma^2(t)$. This can be sampled directly, and the score can be precomputed as a truncated infinite series \cite{jing2022torsional}.

\subsection{Score model}

\paragraph{Extrinsic-to-intrinsic.} Following, the intrinsic diffusion models framework, although we have defined the diffusion kernel and score matching objectives on $\PP$, we nevertheless develop the training and inference procedures to operate on ligand poses in 3D coordinates directly. Providing the full 3D structure, rather than abstract elements of the product space, to the score model allows it to reason about physical interactions using $SE(3)$ equivariant models, not be dependent on arbitrary definitions of torsion angles \cite{jing2022torsional}, and better generalize to unseen complexes.

\paragraph{Dependence on seed conformation.} The training and inference procedures technically depend on the choice of seed conformation $\cc$ used to define the mapping between $\mathcal{M}_\cc$ and the product space. However, providing a definite choice of $\cc$ to the score model introduces an arbitrary inference-time parameter that may affect the final predicted distribution, which is undesirable. In other words, while $\cc$ defines the manifold of ligand poses, the precise location of $\cc$ within that manifold should not affect the predicted distribution. Thus, we develop approximate training and inference procedures that remove the dependence on the $\cc$; intuitively, these assume that updates to points in the product space $\PP$ can be applied to ligand poses in $\mathcal{M}_\cc$ directly, without referencing the origin conformer $\cc$. While these are only an approximation of the theoretically correct procedures, we find that they work well in practice. In Appendix~\ref{app:train_inf}, we present the training and inference procedures in more detail and further discussion on this point.

\paragraph{Model architecture.} We construct the score model $\ss(\xx, \yy, t)$ to take as input the current ligand pose $\xx$ and protein structure $\yy$ in 3D space. The output must be in the tangent space $T_\rr \mathbb{T}_3 \oplus T_R SO(3) \oplus T_{\boldsymbol{\theta}} SO(2)^m$. The space $T_\rr \mathbb{T}_3 \cong \mathbb{R}^3$ corresponds to translation vectors and $T_R SO(3) \cong \mathbb{R}^3$ to rotation (Euler) vectors. Critically both of these vectors are $SE(3)$-equivariant (with respect to joint rototranslations of $\xx, \yy$) as ligand pose distributions are defined relative to the protein structure, which can have arbitrary location and orientation. Finally, $T_{\boldsymbol{\theta}} SO(2)^m$ corresponds to scores on $SE(3)$-invariant quantities (torsion angles). Thus, the score model must predict two $SE(3)$-equivariant vectors for the ligand as a whole and an $SE(3)$-invariant scalar at each of the $m$ freely rotatable bonds. The score model architecture is a $SE(3)$-equivariant convolutional network over point clouds \cite{thomas2018tensor, e3nn} whose architectural components are summarized below and detailed in Appendix~\ref{app:diffdock_architecture}.

Structures are represented as heterogeneous geometric graphs formed by ligand atoms and protein residues. 
Residue nodes receive as initial features language model embeddings trained on protein sequences \cite{Lin2022ESM2}. Nodes are sparsely connected based on distance cutoffs that depend on the types of nodes being linked and on the diffusion time. Intuitively, nodes are connected with the range of elements that they might be closely interacting with; this range may span widely at the start of the diffusion but is narrow at the end. Convolutional layers simultaneously operate with different sets of weights for different connection types and generate scalar and vector representations for each node. 

The ligand atom representations after the final interaction layer are then used to produce the different outputs. To produce the two $\mathbb{R}^3$ vectors representing the translational and rotational scores, we convolve the node representations with a tensor product filter placed at the center of mass. For the torsional score, we use a pseudotorque convolution to obtain a scalar at each rotatable bond of the ligand analogously to \cite{jing2022torsional}, with the distinction that, since the score model operates on coarse-grained representations, the output is not a pseudoscalar (its parity is neither odd nor even).

\subsection{Confidence model}

\paragraph{Training and inference.} In order to collect training data for the confidence model $\dd(\xx, \yy)$, we run the trained diffusion model to obtain a set of candidate poses for every training example and generate labels by testing whether or not each pose has RMSD below 2\AA{}. The confidence model is then trained with cross-entropy loss to correctly predict the binary label for each pose. During inference, the diffusion model is run to generate $N$ poses in parallel, which are passed to the confidence model that ranks them based on its confidence that they have RMSD below 2\AA{}. 

\paragraph{Architecture.} The confidence model has a similar architecture to the score model with two main differences. Firstly, its output is a single $SE(3)$-invariant scalar produced by mean-pooling the ligand atoms' scalar representations followed by a fully connected layer. Secondly, while the score model only considers a coarse-grained representation of the protein with only its $\alpha$-carbon atoms, the confidence model has access to the full atomic structure of the protein. This multiscale setup yields improved performance and a significant speed-up w.r.t. doing the whole process at the atomic scale.

\section{Experiments} \label{sec:experiments}

\subsection{Experimental setup.} 

We evaluate our method on the complexes from PDBBind \cite{liu2017PDBBind}, a large collection of protein-ligand structures collected from PDB \cite{berman2003PDB}, which was used with time-based splits to benchmark many previous works  \cite{equibind, Volkov2022PDBBindSplits, Lu2022TankBind}. We compare \textsc{DiffDock} with state-of-the-art search-based methods SMINA \cite{koes2013smina}, QuickVina-W \cite{Hassan2017QVinaW}, GLIDE \cite{halgren2004glide}, and GNINA \cite{mcnutt2021gnina} as well as the older Autodock Vina \cite{trott2010autodock}, and the recent deep learning methods EquiBind and TANKBind presented above.  Extensive details about the experimental setup, data, baselines, and implementation are in Appendix~\ref{appx:baseline_details} and all code is available at \url{https://github.com/gcorso/DiffDock}.

As we are evaluating blind docking, the methods receive two inputs: the ligand with a predicted seed conformation (e.g., from RDKit) and the crystal structure of the protein. Since search-based methods work best when given a starting binding pocket to restrict the search space, we also test the combination of using an ML-based method, such as P2Rank \cite{krivak2018p2rank} (also used by TANKBind) or EquiBind to find an initial binding pocket, followed by a search-based method to predict the exact pose in the pocket. 

To evaluate the generated complexes, we compute the heavy-atom RMSD (permutation symmetry corrected) between the predicted and the ground-truth ligand atoms when the protein structures are aligned. All methods except for EquiBind are able to generate multiple structures and rank them. We report the metrics for the highest ranked prediction as the top-1; top-5 refers to selecting the most accurate pose out of the 5 highest ranked predictions, which is a useful metric when multiple predictions are used for downstream tasks.

\subsection{Apo-structure docking}

Although large and comprehensive, the PDBBind benchmark only evaluates the capacity that various docking methods have to bind ligands to their corresponding receptor holo-structure. This is a much simpler and less realistic scenario than what is typically encountered in real applications where docking for new ligands is done against apo or holo-structures bound to a different ligand. In particular, since the development of accurate protein folding methods \cite{jumper2021highly}, docking programs are often run on top of AI-generated protein structures. With this in mind, we develop a new benchmark, referred to as PDBBind-ESMFold, where we combine the complex prediction of PDBBind with protein structures generated by ESMFold \cite{Lin2022ESM2}. 

The main design choice when generating this benchmark relies on how to best align the PDBBind complex with the ESMFold structure to obtain the "ground-truth" docked prediction on the ESMFold structure. An unbiased global alignment of the two protein structures is not desirable because a difference in structure not affecting the pocket where the ligand binds would cause the two pockets to misalign; on the other hand, only aligning residues within a single arbitrary pocket cutoff has many undesirable cases where too many or too few residues are selected or not weighted properly. 

Instead, we align receptors' residues with the Kabsch algorithm using exponential weighting, for every receptor $\xx$ its weight is $w_{\xx} = e^{-\lambda \; d_{\xx}}$ where $\lambda$ is a smoothing factor and $d_{\xx}$ is the minimum distance of $\xx$ to a ligand atom in the original complex, this way residues closer to the ligand will have a higher weight in the alignment. For each complex, we individually select $\lambda \in [0,1]$ so that it preserves distances as best as possible, in particular, we use the L-BFGS-B \cite{byrd1995limited} from \texttt{scipy} \cite{virtanen2020scipy} to minimize:
\begin{equation*}
    \lambda^* = \min_\lambda \;  \sum_{\xx \in \mathcal{X}} \;  \sum_{\yy \in \mathcal{Y}} \; \bigg(\frac{1}{\lVert \xx_c - \yy \rVert} - \frac{1}{\lVert \xx_e(\lambda) - \yy \rVert} \bigg)^2
\end{equation*}
where $\lVert \xx_c - \yy \rVert$ and $\lVert \xx_e(\lambda) - \yy \rVert$ correspond to the distances between protein residue $\xx$ and ligand atom $\yy$ respectively in the original crystal structure from PDBBind and in the complex structure obtained aligning the ESMFold structure with smoothing parameter $\lambda$. We use inverse distances to give more importance to residues closer to the ligand (in either structure) and avoid steric clashes. We only consider protein backbones because the side-chain predictions are often less reliable and their structure typically changes upon binding.

Thus we obtain protein structures on which we run the docking methods and the associated docked ligand positions that we use to evaluate them.

\subsection{Results}

\begin{table}[t]
    \caption{\textbf{PDBBind blind docking.} All methods receive a small molecule and are tasked to find its binding location, orientation, and conformation. Shown is the percentage of predictions with RMSD $<$ 2\AA{} and the median RMSD with the standard deviation (see Appendix \ref{appx:hyperparameters}). The top half contains methods that directly find the pose; the bottom half those that use a pocket prediction method. The last two lines show our method's performance. In parenthesis we specify the number of poses sampled from the generative model. * indicates that the method runs exclusively on CPU, ``-" means not applicable; some cells are empty due to infrastructure constraints. For TANKBind, the runtimes for the top-1 and top-5 predictions are different. Further evaluation details are in Appendix~\ref{appx:baseline_details}.}
    \label{tab:results_main}
     \begin{small}
     \begin{center}

    \begin{tabular}{l=c+c|+c+c|c}
    \toprule
      & \multicolumn{2}{c}{Top-1 RMSD (\AA{})} & \multicolumn{2}{c}{Top-5 RMSD (\AA{})} & Average\\ \rule{0pt}{1.5ex}  
    
        Method & \,\%$<$2\, & \,Med.\, & \,\%$<$2\, & \,Med.\, &  Runtime (s)  \\
    \midrule
    \textsc{Autodock Vina}             & 5.5 & 10.7  &      &      &  205*    \\
    \textsc{QVinaW}             & 20.9\tiny{$\pm$2.1} & 7.7\tiny{$\pm$0.8}  &      &      &  49*    \\
    \textsc{GNINA}              & 22.9\tiny{$\pm$2.2} & 7.7\tiny{$\pm$1.1}  & 32.9\tiny{$\pm$2.5} & 4.5\tiny{$\pm$0.4}  &  127  \\
    \textsc{SMINA}              & 18.7\tiny{$\pm$2.0} & 7.1\tiny{$\pm$0.4}  & 29.3\tiny{$\pm$2.3} & 4.6\tiny{$\pm$0.5}  &  126*   \\
    \textsc{GLIDE}              & 21.8\tiny{$\pm$2.1} & 9.3\tiny{$\pm$1.3}  &      &      &  1405* \\
    \textsc{EquiBind}           & 5.5\tiny{$\pm$1.2}  & 6.2\tiny{$\pm$0.3}  &  -   &  -   &  \textbf{0.04}  \\ \midrule
    \textsc{TANKBind}           & 20.4\tiny{$\pm$2.1} & 4.0\tiny{$\pm$0.2}  & 24.5\tiny{$\pm$2.1} & 3.4\tiny{$\pm$0.1}  &  0.7/2.5  \\
    \textsc{P2Rank+SMINA}       & 20.4\tiny{$\pm$2.2} & 6.9\tiny{$\pm$0.6}  & 33.2\tiny{$\pm$2.5} & 4.4\tiny{$\pm$0.5}  & 126* \\ 
    \textsc{P2Rank+GNINA}       & 28.8\tiny{$\pm$2.4} & 5.5\tiny{$\pm$0.7}  & 38.3\tiny{$\pm$2.6} & 3.4\tiny{$\pm$0.4}  & 127  \\ 
    \textsc{EquiBind+SMINA}     & 23.2\tiny{$\pm$2.2} & 6.5\tiny{$\pm$0.5}  & 38.6\tiny{$\pm$2.5} & 3.4\tiny{$\pm$0.4}  & 126*  \\
    \textsc{EquiBind+GNINA}     & 28.8\tiny{$\pm$2.3} & 4.9\tiny{$\pm$0.7}  & 39.1\tiny{$\pm$2.5} & 3.1\tiny{$\pm$0.4}  & 127   \\ \midrule
    \textbf{\textsc{DiffDock} (10)}  & 35.0\tiny{$\pm$2.5} & 3.6\tiny{$\pm$0.4}  & 40.7\tiny{$\pm$2.6} & 2.65\tiny{$\pm$0.2}  &  10     \\
    \textbf{\textsc{DiffDock} (40)}  & \textbf{38.2\tiny{$\pm$2.5}} & \textbf{3.3\tiny{$\pm$0.3}}  & \textbf{44.7\tiny{$\pm$2.6}} & \textbf{2.40\tiny{$\pm$0.2}}  & 40      \\
    \bottomrule
    \end{tabular}
    \end{center}
    \end{small}
\end{table}

\paragraph{Docking accuracy.}  \textsc{DiffDock} significantly outperforms all previous methods (Table~\ref{tab:results_main}). In particular, \textsc{DiffDock} obtains an impressive 38.2\% top-1 success rate (i.e., percentage of predictions with RMSD $<$2\AA{}\footnote{Most commonly used evaluation metric \cite{Alhossary2015QuickVina2, Hassan2017QVinaW, mcnutt2021gnina}}) when sampling 40 poses and 35.0\% when sampling just 10. This performance vastly surpasses that of state-of-the-art commercial software such as \textsc{GLIDE} (21.8\%, $p{=}2.7{\times} 10^{-7}$) and the previous state-of-the-art deep learning method TANKBind (20.4\%, $p{=}1.0{\times} 10^{-12}$). The use of ML-based pocket prediction in combination with search-based docking methods improves over the baseline performances, but even the best of these (EquiBind+GNINA)  reaches a success rate of only 28.8\% ($p{=}0.0003$).

Figure~\ref{fig:results_main}-\textit{left} shows the proportion of RMSDs below an arbitrary threshold $\epsilon$ with \textsc{DiffDock} exceeding previous methods for almost every possible $\epsilon$.\footnote{With the exception of very small $\epsilon<$1\AA{} where \textsc{GLIDE} performs better.} Figure~\ref{fig:results_main}-\textit{right} plots how the model's performance changes with the number of generative samples. Unlike regression methods like EquiBind, \textsc{DiffDock} is able to provide multiple diverse predictions of different likely poses, as highlighted in the top-5 performances. 

\paragraph{Inference runtime.} \textsc{DiffDock} holds its superior accuracy while being (on GPU) 3 to 12 times faster than the best search-based method, GNINA (Table~\ref{tab:results_main}). This high speed is critical for applications such as high throughput virtual screening for drug candidates or reverse screening for protein targets, where one often searches over a vast number of complexes. As a diffusion model, \textsc{DiffDock} is inevitably slower than the one-shot deep learning method \textsc{EquiBind}, but as shown in Figure~\ref{fig:results_main}-\textit{right} and Appendix~\ref{app:diffdock_ablations}, it can be significantly sped up without significant loss of accuracy.

\vspace{-6pt}
\begin{figure}[htb]
\begin{center}
\includegraphics[width=.485\textwidth]{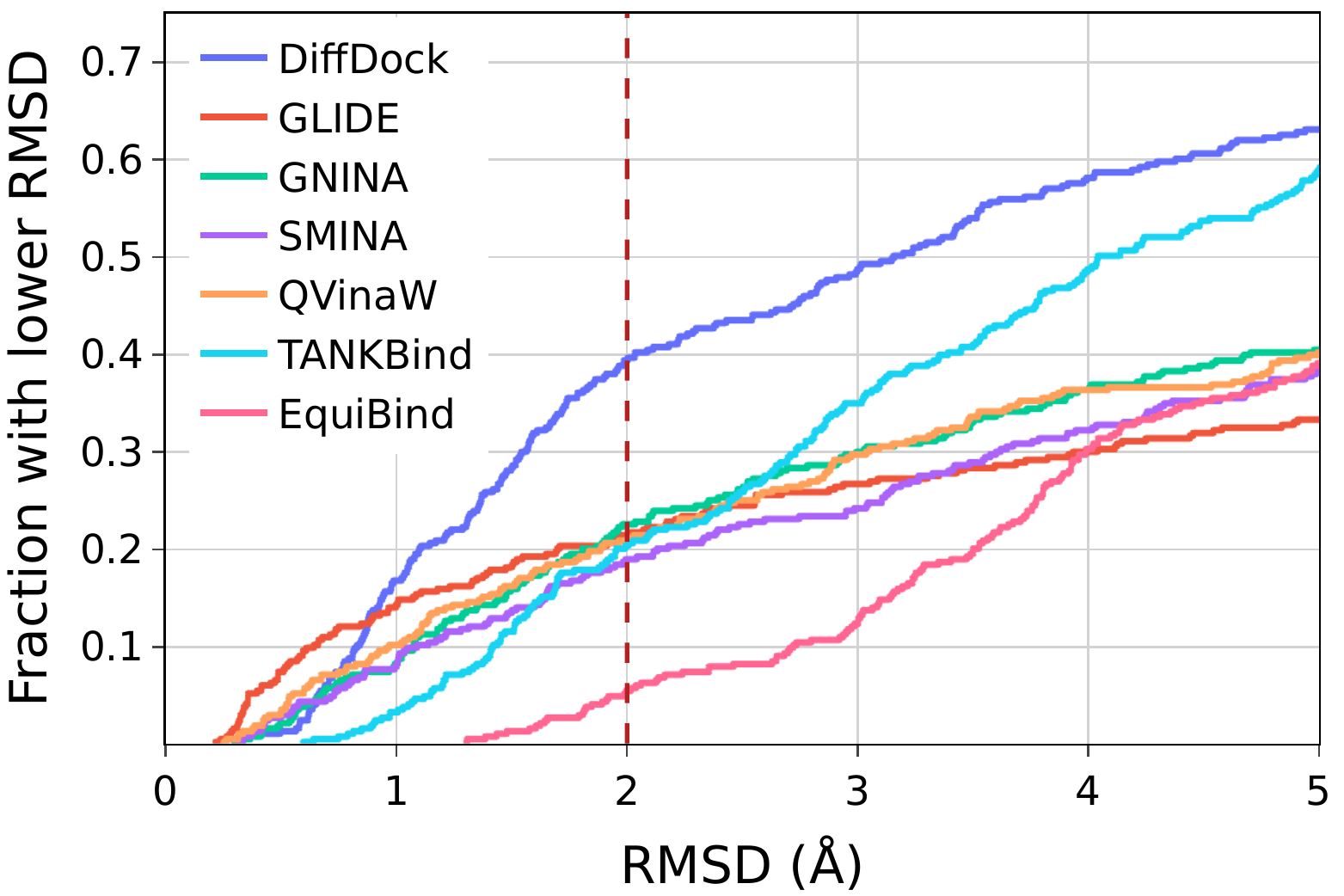}
\hspace{5pt}
\includegraphics[width=.485\textwidth]{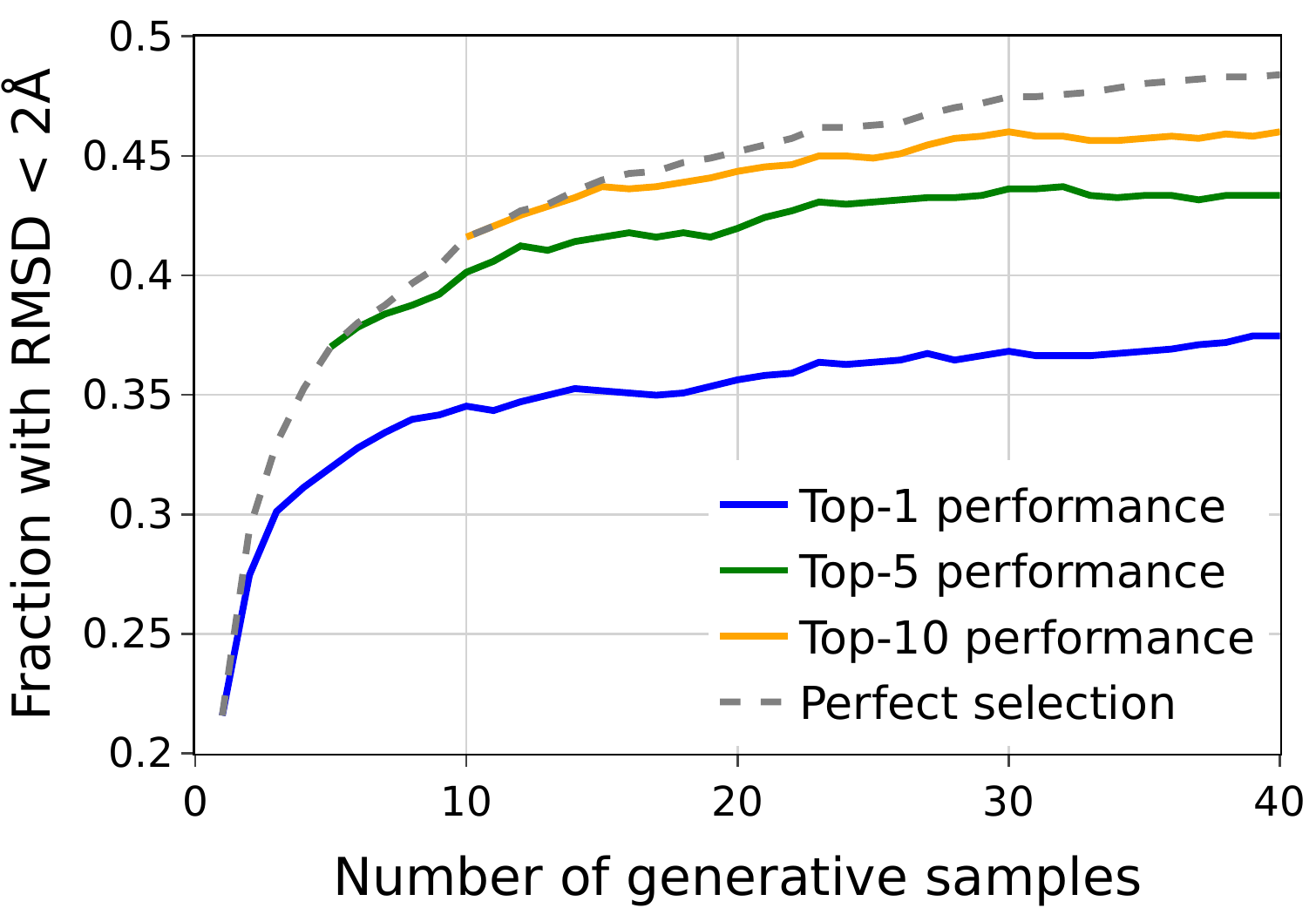}
\vspace{-6pt}
\caption{\textbf{Left:} cumulative density histogram of the methods' RMSD. \textbf{Right:} \textsc{DiffDock}'s performance as a function of the number of samples from the generative model. ``Perfect selection" refers to choosing the sample with the lowest RMSD. } 
\label{fig:results_main}
\end{center}
\end{figure}

\begin{figure}
\centering
\includegraphics[width=0.5\linewidth]{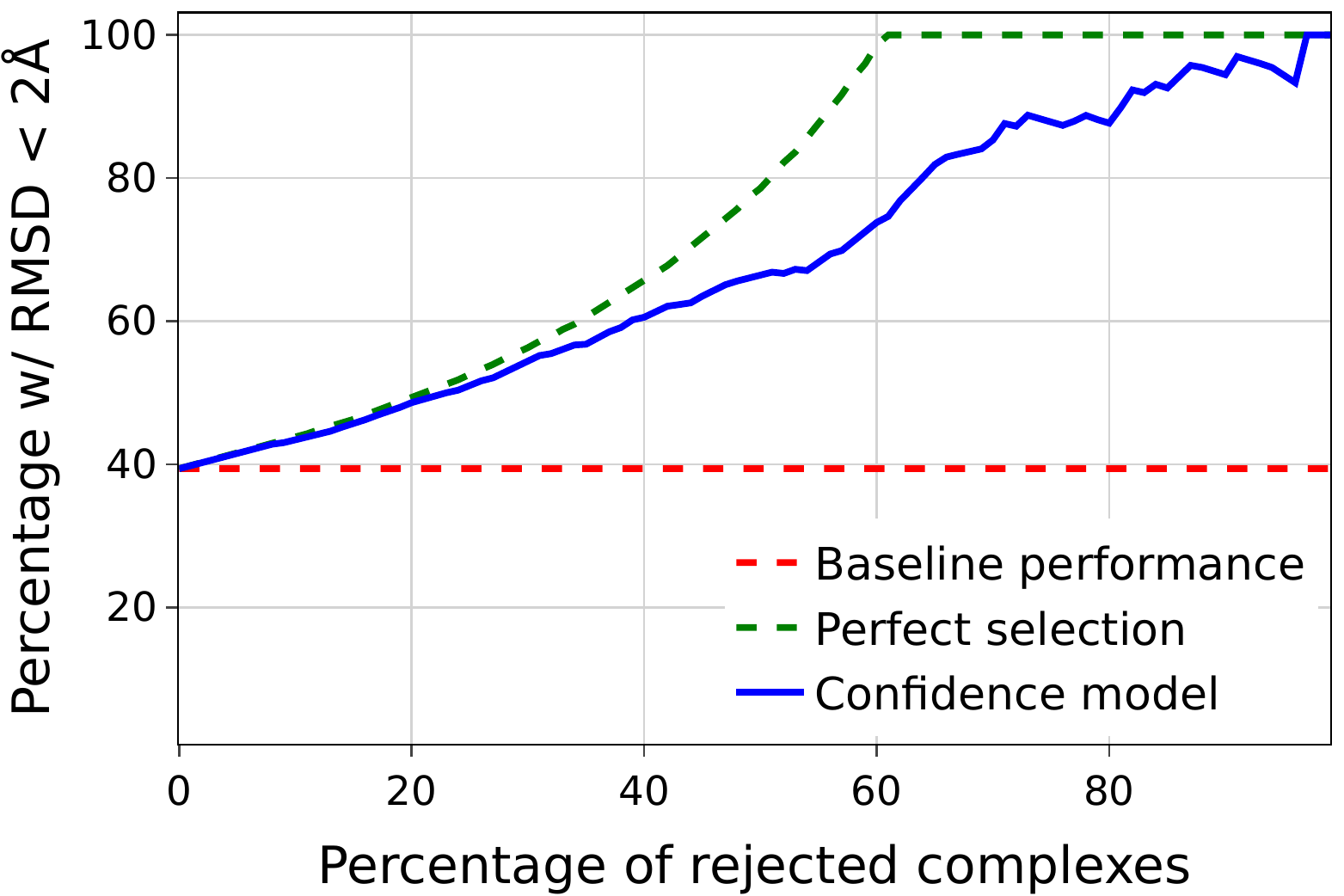}
\caption{\textbf{Selective accuracy.} Percentage of predictions with RMSD below 2\AA{} when only making predictions for the portion of the dataset where \textsc{DiffDock} is most confident.}
\label{fig:histograms_calibration}
\end{figure}

\paragraph{Selective accuracy of confidence score.} As the top-1 results show, \textsc{DiffDock}'s confidence model is very accurate in ranking the sampled poses for a given complex and picking the best one. We also investigate the \textit{selective accuracy} of the confidence model across \emph{different} complexes by evaluating how \textsc{DiffDock}'s accuracy increases if it only makes predictions when the confidence is above a certain threshold, known as \textit{selective prediction}. In Figure~\ref{fig:histograms_calibration}, we plot the success rate as we decrease the percentage of complexes for which we make predictions, i.e., increase the confidence threshold. When only making predictions for the top one-third of complexes in terms of model confidence, the success rate improves from 38\% to 83\%. Additionally, there is a high Spearman correlation of 0.68 between \textsc{DiffDock}'s confidence and the negative RMSD. Thus, the confidence score is a good indicator of the quality of \textsc{DiffDock}'s top-ranked sampled pose and provides a highly valuable confidence measure for downstream applications.

\begin{table}[t]
    \caption{\textbf{PDBBind-ESMFold blind apo-structure docking.} $^{\text{flex}}$ indicates that the side chain flexibility feature is turned on in the identified pocket. DiffDock refers to the same model described above, no further training or tuning was done on ESMFold structures. Further evaluation details are in Appendix~\ref{appx:baseline_details}. }
    \label{tab:results_main_esm}
     \begin{small}
     \begin{center}

    \begin{tabular}{l=c+c|+c+c}
    \toprule
      & \multicolumn{2}{c}{Top-1 RMSD (\AA{})} & \multicolumn{2}{c}{Top-5 RMSD (\AA{})} \\ \rule{0pt}{1.5ex}  
    
        Method & \,\%$<$2\, & \,Med.\, & \,\%$<$2\, & \,Med.\,  \\
    \midrule
    \textsc{GNINA}              & 2.0 & 22.3  & 4.0 & 14.22    \\
    \textsc{SMINA}              & 3.4 & 15.4  & 6.9 & 10.0     \\
    \textsc{EquiBind}           & 1.7 & 7.1  & - & -   \\ \midrule
    \textsc{TANKBind}           & 10.4 & 5.4  & 14.7 & 4.3    \\
    \textsc{P2Rank+SMINA}       & 4.6 & 10.0  & 10.3 & 7.0   \\ 
    \textsc{P2Rank+GNINA}       & 8.6 & 11.2  & 12.8 &  7.2   \\ 
    \textsc{EquiBind+SMINA}     & 4.3 & 8.3  & 11.7 & 5.8    \\
    \textsc{EquiBind+GNINA}     & 10.2 & 8.8  & 18.6 & 5.6     \\ \midrule
    \textsc{SMINA+SMINA}$^{\text{flex}}$       & 3.4 & 12.6  & 8.3 & 11.6   \\ 
    \textsc{GNINA+GNINA}$^{\text{flex}}$       & 1.7 & 22.1  & 5.1 &  20.0   \\ 
    \textsc{EquiBind+SMINA}$^{\text{flex}}$     & 4.3 & 7.3  & 11.7 & 5.8    \\
    \textsc{EquiBind+GNINA}$^{\text{flex}}$     & 6.6 & 9.8  & 14.6 & 6.1     \\ \midrule
    \textbf{\textsc{DiffDock} (10)}  & \textbf{21.7} & \textbf{5.0}  & \textbf{31.9} & \textbf{3.3}       \\
    \textbf{\textsc{DiffDock} (40)}  & 20.3 & 5.1  & 31.3 & \textbf{3.3}       \\
    \bottomrule
    \end{tabular}
    \end{center}
    \end{small}
\end{table}

\paragraph{Apo-structure docking.} Previous work \cite{wong2022benchmarking} highlighted that traditional search-based docking methods are not well adapted to dock molecules to apo-structures especially when these have been generated computationally. These observations are confirmed in the results in Table \ref{tab:results_main_esm} where search-based methods obtain top-1 accuracies of only 10\% or below. This is most likely due to their reliance on trying to find key-lock matches that makes them inflexible to imperfect protein structures, even when built-in options allowing side-chain flexibility are activated the results do not improve. This problem has, so far, largely prevented the computational protein folding revolution, started by AlphaFold2, to have a significant effect on the modeling of protein-ligand binding interactions \cite{wong2022benchmarking}.

Instead, the results presented in Table \ref{tab:results_main_esm} show that DiffDock is able to retain a larger proportion of its accuracy placing the top-ranked ligand below 2\AA{} away on 22\% of the complexes. This ability to better generalize to imperfect structures, even without retraining, can be attributed to a combination of (1) the robustness of the diffusion model to small perturbations in the backbone atoms, and (2) the fact that \textsc{DiffDock} does not use the exact position of side chains in the score model and is therefore forced to implicitly model their flexibility.

\chapter{Conclusion} \label{chapter:conclusion}

\section{Summary}

In this thesis, we have presented a novel approach to the fundamental class of problems around learning the 3D structure of molecules and their interactions. This approach, referred to as Intrinsic Diffusion Modeling (IDM), tackles the dynamic and uncertain nature of these structures by learning a diffusion generative model. Moreover, IDM remedies to the high dimensionality and data scarcity characterizing the problems in this class by leveraging scientific knowledge in the form of the specification of the main degrees of freedom of the systems under analysis. In order to leverage this knowledge in an efficient and generalizable way, we define a mapping of the extrinsic manifold of flexibility to a simpler intrinsic manifold, define the diffusion process on the intrinsic manifold and learn an extrinsic-to-intrinsic score model. 

We hypothesized this approach could provide significant runtime and accuracy improvement because it drastically reduces the dimensionality and increases the smoothness of the space over which we are generating while maintaining the useful inductive biases of the objects over which the model operates. In fact, we showed that instantiations of IDM tailored to the problems of molecular conformer generation and molecular docking significantly outperform existing scalable computational approaches achieving an unpreceded level of accuracy.

For molecular conformer generation, we presented, in Chapter \ref{chapter:torsional}, \emph{torsional diffusion}, which uses the IDM framework to restrict the diffusion process to the torsion angles, the most flexible degrees of freedom in molecular conformations. Torsional diffusion is the first machine learning model to significantly outperform standard cheminformatics methods and is orders of magnitude faster than previous Euclidean diffusion models. Using the exact likelihoods provided by our model, we also train the first system-agnostic Boltzmann generator.

In Chapter \ref{chapter:diffdock}, we presented \textsc{DiffDock}, an instantiation of the intrinsic diffusion modeling framework tailored to the task of molecular docking. This represents a paradigm shift from previous deep learning approaches, which use regression-based frameworks, to a generative modeling approach that is better aligned with the objective of molecular docking. The intrinsic diffusion process over the manifold describing the main degrees of freedom produces a fast and accurate generative model. 

Empirically, \textsc{DiffDock} outperforms the state-of-the-art by very large margins on PDBBind, has fast inference times, and provides confidence estimates with high selective accuracy. Moreover, unlike previous methods, it retains a large part of its accuracy even when run on apo and computationally generated protein structures, opening the way for the revolution brought by accurate protein folding methods in the modeling of protein ligand interactions.

\section{Future directions}

There are several avenues for future work that the work presented in this thesis opens up. Firstly, there is the improvement and establishment of the tools presented in chapters \ref{chapter:torsional} and \ref{chapter:diffdock}. The established benchmark for conformer generation in the machine learning community, GEOM, used to train \textit{torsional diffusion}, is composed of conformers derived with the metadynamics tool CREST with molecules simulated in a vacuum. This raises two concerns, firstly, the accuracy of CREST is not on par with more expensive computational methods like DFT or crystallography data, and secondly, chemists are typically interested in solvents very different from vacuum. Training \textit{torsional diffusion} on more accurate conformers and conditioning its generation on different solvents is an avenue for future work with high impact potential. 

When studying the interaction between a protein and a small molecule, researchers are typically not only interested in the pose with which the molecule binds to the protein, predicted by \textsc{DiffDock}, but also the affinity of such interaction. Physically this corresponds to free energy and its accurate prediction is one of the most impactful open problems in computational biophysics due to its importance in the field of drug discovery. Free energy is a thermodynamic property that depends on both the ``strength'' (enthalpy) of the interaction and its ``tightness'' (entropy), therefore, towards the goal of its accurate prediction generative methods, like \textsc{DiffDock}, providing the binding structure conformational ensemble will a key component.

Finally, for these tools to be adopted and facilitate research in chemistry and biology, it is important that they are distributed with efficient and easy-to-use libraries and programs. The development of these tools is of critical importance for the impact that these methods will have on scientific research and industry.

A second class of avenues for future work consists of the use of the Intrinsic Diffusion Modeling paradigm to tackle new problems or extend the existing methods to further degrees of flexibility. Below I list some of the problems and degrees of flexibility that I believe could be effectively tackled with IDM:
\begin{enumerate}
    \item \textbf{Molecular rings and cycles.} Since both torsional diffusion and DiffDock model the flexibility of conformers based on the torsion angles of rotatable bonds, they assume that the conformation of cycles is fixed and rely for its prediction on RDKit. While this works fine for small rings typically present in drug-like molecules, it suffers for larger and more flexible rings, especially for macrocycles. These degrees of freedom could be integrated with torsion angles in the IDM framework by, for example, employing the ring puckering coordinates \cite{cremer1975general} to model the flexibility of ring conformations as points on hyperspheres.
    \item \textbf{Protein flexibility.} \textsc{DiffDock} assumes that the structure of the protein is fixed and preserves the structure that was given as input. Although preliminary results have shown that \textsc{DiffDock} is robust to inaccuracies in the structure given as input, the fixed protein assumption prevents us to study how the protein conformation changes upon binding, a factor that can be very important for evaluating the affinity of the interaction. Modeling protein flexibility, both in general and upon binding, is therefore a very important problem where an IDM-based approach could provide significant improvements over existing methods. While some follow-up work \cite{wu2022protein} applied the torsional diffusion framework to the full protein molecular graph, this approach is problematic because of the large lever-arm effect that changing a torsion in the backbone can have on very distant parts of the protein. Instead, I believe that a promising approach is to use torsional flexibility to model sidechain flexibility (where the lever-arm effect is limited) and use some local flexibility scheme such as the backrub motion \cite{davis2006backrub} to model movements in the backbone.
    \item \textbf{Protein-protein interactions.} The IDM framework could be also applied to model protein-protein interactions, a fundamental problem in structural biology. One promising approach to this problem could involve combining the $SO(3)$ and $\mathbb{T}(3)$ components of \textsc{DiffDock} to model the rigid protein-protein docking problem and the protein flexibility components discussed above to model the conformation of each of the proteins.
\end{enumerate}

Finally, it is also a very exciting avenue of future work the extension of the IDM framework to model more general and complex problems. Some interesting avenues of research in this direction are:
\begin{enumerate}
    \item automatically discovering from data extrinsic and intrinsic manifolds that well describe degrees of freedom of a generation problem;
    \item relaxing the condition that the diffusion is done exclusively on the extrinsic manifold, but, instead, using such manifold as a soft constraint or inductive bias to make the full dimensional diffusion more efficient;
    \item improve the framework to train IDM from an energy or reward function presented in Section \ref{sec:boltzmann}, making it more efficient and effective;
    \item successively improve or jointly train the score model and the confidence or energy model presented in \textsc{DiffDock};
    \item support and design forward diffusion processes that more closely align with physical priors leading to more stable conformations even before relaxation.
\end{enumerate}

Overall the work presented in this thesis makes me very optimistic that diffusion generative models will have a profound impact in many areas of structural biology and persuaded that a more careful and effective design of the domain and process of the diffusion will be critical to achieving these results.

\appendix
\chapter{Proofs}

\setcounter{proposition}{0}

\section{Chapter 3: Torsional Diffusion} \label{app:propositions}

Reported in this section are the proofs of the propositions in Chapter \ref{chapter:torsional}. These were primarily developed by Bowen Jing and Jeffrey Chang.

\subsection{Definitions}\label{app:def}

Consider a molecular graph $G = (\mathcal{V}, \mathcal{E})$ and its space of possible conformers $\mathcal{C}_G$. A conformer is an assignment $\mathcal{V} \mapsto \mathbb{R}^3$ of each atom to a point in 3D-space, defined up to global rototranslation. For notational convenience, we suppose there is an ordering of nodes such that we can regard a mapping as a vector in $\mathbb{R}^{3n}$ where $n=|\mathcal{V}|$. Then a conformer $C \in \mathcal{C}_G$ is a set of $SE(3)$-equivalent vectors in $\mathbb{R}^{3n}$---that is, $\mathcal{C}_G \cong \mathbb{R}^{3n}/SE(3)$. This defines the space of conformers in terms of \emph{extrinsic} (or Cartesian) coordinates.

An \emph{intrinsic} (or internal) coordinate is a function over $\mathcal{C}_G$---i.e., it is an $SE(3)$-invariant function over $\mathbb{R}^{3n}$. There are four types of such coordinates typically considered:

\textbf{Bond lengths}. For $(a, b) \in \mathcal{E}$, the bond length $l_{ab} \in [0, \infty)$ is defined as $|x_{a} - x_b|$.

\textbf{Bond angles}. For $a, b, c \in \mathcal{V}$ such that $a, c \in \mathcal{N}(b)$, the bond angle $\alpha_{abc} \in [0, \pi]$ is defined by
\begin{equation}
    \cos \alpha_{abc} := \frac{(x_c - x_b)\cdot(x_a - x_b)}{|x_c - x_b||x_a - x_b|}
\end{equation}

\textbf{Chirality}. For $a \in \mathcal{V}$ with 4 neighbors $b, c, d, e \in \mathcal{N}(a)$, the chirality ${z}_{abcd}\in \{-1, 1\}$ is defined as
\begin{align}
    {z}_{abcde} := \sign\det\begin{pmatrix} 1 & 1 & 1 & 1 \\ x_b - x_a & x_c -x_a & x_d-x_a & x_e - x_a\end{pmatrix}
\end{align}
Similar quantities are defined for atoms with other numbers of neighbors. Chirality is often considered part of the specification of the molecule, rather than the conformer. 

\textbf{Torsion angles}. For $(b, c) \in \mathcal{E}$, with a choice of reference neighbors $a\in\mathcal{N}(b)\setminus\{c\}, d\in\mathcal{N}(c)\setminus\{b\}$, the torsion angle $\tau_{abcd} \in [0, 2\pi)$ is defined as the dihedral angle between planes $abc$ and $bcd$:
\begin{equation}\label{eq:torsion_def}
\begin{aligned}
    \cos \tau_{abcd} &= \frac{\mathbf{n}_{abc} \cdot \mathbf{n}_{bcd}}{|\mathbf{n}_{abc}||\mathbf{n}_{bcd}|}\\
    \sin \tau_{abcd} &= \frac{\mathbf{u}_{bc} \cdot (\mathbf{n}_{abc} \times \mathbf{n}_{bcd})}{|\mathbf{u}_{bc}||\mathbf{n}_{abc}||\mathbf{n}_{bcd}|}
\end{aligned}
\end{equation}
    
where $\mathbf{u}_{ab} = x_b - x_a$ and $\mathbf{n}_{abc}$ is the normal vector $\mathbf{u}_{ab}\times\mathbf{u}_{bc}$. Note that $\tau_{abcd} = \tau_{dcba}$---i.e., the dihedral angle is the same for four consecutively bonded atoms regardless of the direction in which they are considered.

A \textbf{complete set of intrinsic coordinates} of the molecule is a set of such functions $(f_1, f_2, \ldots)$ such that $F(C) = (f_1(C), f_2(C), \ldots)$ is a bijection. In other words, they fully specify a unique element of $\mathcal{C}_G$ without overparameterizing the space. In general there exist many possible such sets for a given molecular graph. We will not discuss further how to find such sets, as our work focuses on manipulating molecules in a way that holds fixed all $l, \alpha, z$ and only modifies (a subset of) torsion angles $\tau$.

As presently stated, the \textbf{torsion angle about a bond} $(b, c)\in\mathcal{E}$ is ill-defined, as it could be any $\tau_{abcd}$ with $a\in\mathcal{N}(b)\setminus\{c\}, d\in\mathcal{N}(c)\setminus\{b\}$. However, any complete set of intrinsic coordinates needs to only have at most one such $\tau_{abcd}$ for each bond $(b, c)$ \cite{ganea2021geomol}. Thus, we often refer to \emph{the} torsion angle about a bond $(b_i, c_i)$ as $\tau_i$ when reference neighbors $a_i, b_i$ are not explicitly stated.

\subsection{Torsion update} \label{app:proof_update}

Given a freely rotatable bond $(b_i, c_i)$, by definition removing $(b_i, c_i)$ creates two connected components $\mathcal{V}(b_i), \mathcal{V}(c_i)$. Then, consider torsion angle $\tau_j$ at a different bond $(b_j, c_j)$ with neighbor choices $a_j \in \mathcal{N}(b_j), d_j \in \mathcal{N}(c_j), a_j \neq c_j, d_j \neq b_j$. Without loss of generality, there are two cases
\begin{itemize}
    \item Case 1: $a_j, b_j, c_j, d_j \in \mathcal{V}(b_i)$
    \item Case 2: $d_j \in \mathcal{V}(c_i)$ and $a_j, b_j, c_j \in \mathcal{V}(b_i)$
\end{itemize} 
Note that in Case 2, $c_j = b_i$ and $d_j = c_i$ must hold because there is only one edge between $\mathcal{V}(b_i), \mathcal{V}(c_i)$. With these preliminaries we now restate the proposition:

\begin{proposition} 
Let $(b_i, c_i)$ be a rotatable bond, let $\mathbf{x}_{\mathcal{V}(b_i)}$ be the positions of atoms on the $b_i$ side of the molecule, and let $R(\boldsymbol{\theta}, x_{c_i}) \in SE(3)$ be the rotation by Euler vector $\boldsymbol{\theta}$ about $x_{c_i}$. Then for $C, C' \in \mathcal{C}_G$, if $\tau_i$ is any definition of the torsion angle around bond $(b_i, c_i)$,
\begin{equation} 
    \begin{aligned}
        \tau_i(C') &= \tau_i(C) + \theta\\
        \tau_j(C') &= \tau_j(C) \quad \forall j\neq i
    \end{aligned}
    \qquad \text{if} \qquad
    \exists \mathbf{x} \in C, \mathbf{x'} \in C'\ldotp \quad
    \begin{aligned}
    \mathbf{x}'_{\mathcal{V}(b_i)} &=  \mathbf{x}_{\mathcal{V}(b_i)} \\
    \mathbf{x}'_{\mathcal{V}(c_i)} &=  R\left(\theta \, \mathbf{\hat r}_{b_ic_i}, x_{c_i} \right)\mathbf{x}_{\mathcal{V}(c_i)}
    \end{aligned}
\end{equation}
where $\mathbf{\hat r}_{b_ic_i} = (x_{c_i} - x_{b_i})/||x_{c_i}-x_{b_i}||$.
\end{proposition}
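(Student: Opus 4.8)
The plan is to treat the line through $x_{b_i}$ and $x_{c_i}$ as a fixed axis and track what the prescribed motion does to the (at most) four atoms appearing in any dihedral.

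\textbf{Setup.} The first observation is that the rotation axis --- the line through $x_{c_i}$ in direction $\hat r_{b_ic_i}$ --- is exactly the line through $x_{b_i}$ and $x_{c_i}$, since $\hat r_{b_ic_i}$ is parallel to $x_{c_i}-x_{b_i}$. Hence $R := R(\theta\,\hat r_{b_ic_i}, x_{c_i})$ fixes every point of that line, in particular both $x_{b_i}$ and $x_{c_i}$; it acts as the identity on $\mathbf x_{\mathcal V(b_i)}$ and as one rigid rotation on $\mathbf x_{\mathcal V(c_i)}$. Because removing $(b_i,c_i)$ disconnects $G$ into $\mathcal V(b_i)$ and $\mathcal V(c_i)$ with $(b_i,c_i)$ the only crossing edge, all distances and angles internal to each side are preserved, and the bond $(b_i,c_i)$ itself has both endpoints on the axis; thus every bond length, bond angle, and chirality is unchanged, so $C'$ is a genuine conformer with the same local structure $L$ as $C$. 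I would record this as a preliminary remark before treating the torsion angles.

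\textbf{Other torsions are fixed.} For $j\neq i$ I would invoke the dichotomy already set up in Appendix~\ref{app:proof_update}: either (Case 1) all four atoms $a_j,b_j,c_j,d_j$ defining $\tau_j$ lie in $\mathcal V(b_i)$, in which case none of the four positions moves and $\tau_j(C')=\tau_j(C)$ trivially; or (Case 2, and its mirror obtained from $\tau_{abcd}=\tau_{dcba}$) one has $c_j=b_i$, $d_j=c_i$, with $a_j,b_j,c_j\in\mathcal V(b_i)$. In that case $x_{a_j},x_{b_j},x_{c_j}$ are fixed because they lie in $\mathcal V(b_i)$, and $x_{d_j}=x_{c_i}$ is fixed because it lies on the rotation axis; so again all four positions are unchanged and $\tau_j$ is a function of those positions, hence $\tau_j(C')=\tau_j(C)$.

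\textbf{The torsion $\tau_i$ shifts by $\theta$.} Fix any admissible reference neighbors $a\in\mathcal N(b_i)\setminus\{c_i\}$ and $d\in\mathcal N(c_i)\setminus\{b_i\}$ defining $\tau_i=\tau_{ab_ic_id}$; note $a\in\mathcal V(b_i)$ and $d\in\mathcal V(c_i)$. I would first rewrite the $\cos/\sin$ formulas \eqref{eq:torsion_def} into the equivalent statement that $\tau_{ab_ic_id}$ is the angle, measured with the right-hand orientation about $\hat r:=\hat r_{b_ic_i}$, from the component of $x_a-x_{b_i}$ orthogonal to $\hat r$ to the component of $x_d-x_{c_i}$ orthogonal to $\hat r$. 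Under the prescribed move $x_a,x_{b_i},x_{c_i}$ do not move, so the first of these two vectors is unchanged, while $x_d\mapsto Rx_d$ with $R$ a rotation by $\theta$ about an axis parallel to $\hat r$; hence $x_d-x_{c_i}$ is rotated by $\theta$ about $\hat r$, and so is its orthogonal component. Therefore the signed planar angle between the two orthogonal components increases by exactly $\theta$, i.e.\ $\tau_i(C')=\tau_i(C)+\theta$, and since $a,d$ were arbitrary the same holds for every definition of the torsion angle about $(b_i,c_i)$.

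\textbf{Main obstacle.} The only genuine work is the orientation bookkeeping in the last paragraph: checking from \eqref{eq:torsion_def} that the dihedral really equals the signed planar angle between the two orthogonal projections with the orientation induced by $\hat r_{b_ic_i}$ pointing from $b_i$ to $c_i$ --- so that a positive rotation of the $c_i$-side yields $+\theta$ rather than $-\theta$ --- and that this sign is the same for every choice of $a$ and $d$. Everything else (preservation of the local structure, invariance of the remaining torsions) is immediate once one notes that $x_{b_i}$ and $x_{c_i}$ both lie on the rotation axis.
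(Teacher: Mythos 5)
Your proof is correct, and for the $\tau_i$ part it takes a genuinely different route from the paper. The paper proves $\tau_i(C')=\tau_i(C)+\theta$ computationally: it applies the Rodrigues rotation formula to compute $\mathbf{u}'_{cd}$, expands $\mathbf{n}'_{bcd}=\mathbf{u}_{bc}\times\mathbf{u}'_{cd}$, checks $|\mathbf{n}'_{bcd}|=|\mathbf{n}_{bcd}|$, and then verifies $\cos\tau'_i=\cos(\tau_i+\theta)$ and $\sin\tau'_i=\sin(\tau_i+\theta)$ by plugging into the defining formulas \eqref{eq:torsion_def} and invoking the angle-addition identities. You instead re-read \eqref{eq:torsion_def} as saying that $\tau_{ab_ic_id}$ is the signed planar angle, with the right-hand orientation about $\hat r_{b_ic_i}$, from the projection of $x_a-x_{b_i}$ onto the plane orthogonal to the bond axis to the projection of $x_d-x_{c_i}$ onto that plane; once that identification is in place, the shift by $\theta$ is immediate because the rotation acts as the identity on the first projection and as planar rotation by $\theta$ on the second. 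Your characterization is correct (a short check with $\mathbf{n}_{ab_ic_i}=-v_a^\perp\times \mathbf{u}_{b_ic_i}$ and $\mathbf{n}_{b_ic_id}=\mathbf{u}_{b_ic_i}\times v_d^\perp$ reduces the $\cos$ and $\sin$ in \eqref{eq:torsion_def} to exactly the in-plane cosine and the $\hat r$-component of $v_a^\perp\times v_d^\perp$), and you correctly flag this identification and its sign convention as the one step that needs care. The trade-off: your geometric argument is shorter and explains why the torsion shifts by exactly $+\theta$ independently of the reference neighbors $a,d$; the paper's Rodrigues computation is more mechanical but requires no auxiliary lemma about the dihedral. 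For the $\tau_j$ invariance with $j\neq i$, both proofs use the same two-case analysis (all four defining atoms fixed, or the only moving-side atom is $c_i$ itself, which lies on the axis and so does not move). Your preliminary remark that the local structure $L$ is preserved is a harmless aside not needed for the proposition as stated, since $C'\in\mathcal{C}_G$ is part of the hypothesis.
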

\begin{small}
\begin{proof}
First we show $\tau_i(C') = \tau_i(C)+\theta$, for which it suffices to show $\tau_i(\mathbf{x}')=\tau_i(\mathbf{x})+\theta$. Because $a_i, b_i \in \mathcal{V}(b_i)$, $x_{a_i}' = x_{a_i}$ and $x_{b_i}' = x_{b_i}$. Since the rotation of $\mathbf{x}_{\mathcal{V}(c_i)}$ is centered at $x_{c_i}$, we have $x_{c_i}' = x_{c_i}$ as well. Now we consider $d_i$ and $\mathbf{u}'_{cd} = x'_{d_i} - x_{c_i}$. By the Rodrigues rotation formula,
\begin{equation}
\begin{aligned}
    \mathbf{u}'_{cd} 
    &= \mathbf{u}_{cd}\cos\theta + \frac{\mathbf{n}_{bcd}}{|\mathbf{u}_{bc}|}\sin\theta + \frac{\mathbf{u}_{bc}}{|\mathbf{u}_{bc}|}\left(\frac{\mathbf{u}_{bc}}{|\mathbf{u}_{bc}|}\cdot \mathbf{u}_{cd} \right)(1-\cos\theta)
\end{aligned}
\end{equation}
Then we have
\begin{equation}
\begin{aligned}
    \mathbf{n}'_{bcd} = \mathbf{u}_{bc} \times \mathbf{u}'_{cd}
    &=\mathbf{n}_{bcd}\cos\theta - \left(\mathbf{n}_{bcd} \times \frac{\mathbf{u}_{bc}}{|\mathbf{u}_{bc}|}\right)\sin\theta
\end{aligned}\end{equation}
To obtain $|\mathbf{n}'_{bcd}|$, note that since $\mathbf{n}_{bcd}\perp \mathbf{u}_{bc}$, 
\begin{equation}
    \bigg|\mathbf{n}_{bcd} \times \frac{\mathbf{u}_{bc}}{|\mathbf{u}_{bc}|}\bigg| = |\mathbf{n}_{bcd}|
\end{equation}
which gives $|\mathbf{n}'_{bcd}| = |\mathbf{n}_{bcd}|$.
Thus,
\begin{equation}
\begin{aligned}    
    \cos\tau'_i &= \frac{\mathbf{n}_{abc} \cdot \mathbf{n}'_{bcd}}{|\mathbf{n}_{abc}||\mathbf{n}_{bcd}|} = \frac{\mathbf{n}_{abc} \cdot \mathbf{n}_{bcd}}{|\mathbf{n}_{abc}||\mathbf{n}_{bcd}|} \cos\theta - \frac{\mathbf{n}_{abc} \cdot \left(\mathbf{n}_{bcd}\times\mathbf{u}_{bc}\right)}{|\mathbf{n}_{abc}||\mathbf{n}_{bcd}||\mathbf{u}_{bc}|} \sin\theta \\ 
    &= \cos\tau_i\cos\theta - \sin\tau_i\sin\theta = \cos (\tau_i + \theta)
\end{aligned}
\end{equation}
Similarly, 
\begin{equation}
\begin{aligned}
    \sin\tau'_i &= \frac{\mathbf{u}_{bc} \cdot (\mathbf{n}_{abc} \times \mathbf{n}'_{bcd})}{|\mathbf{u}_{bc}||\mathbf{n}_{abc}||\mathbf{n}_{bcd}|} = \frac{\mathbf{u}_{bc} \cdot (\mathbf{n}_{abc} \times \mathbf{n}_{bcd})}{|\mathbf{u}_{bc}||\mathbf{n}_{abc}||\mathbf{n}_{bcd}|}\cos\theta - \frac{\mathbf{u}_{bc} \cdot (\mathbf{n}_{abc} \times \left(\mathbf{n}_{bcd}\times\mathbf{u}_{bc}\right))}{|\mathbf{u}_{bc}|^2|\mathbf{n}_{abc}||\mathbf{n}_{bcd}|}\sin\theta \\
    &= \sin\tau_i\cos\theta + \cos\tau_i\sin\theta = \sin(\tau_i + \theta)
\end{aligned}
\end{equation}
Therefore, $\tau'_i = \tau_i+\theta$

Now we show $\tau'_j = \tau_j$ for all $j\neq i$. Consider any such $j$. For Case 1, $x'_{a_j} = x_{a_j}, x'_{b_j} = x_{b_j}, x'_{c_j} = x_{c_j}, x'_{d_j} = x_{d_j}$ so clearly $\tau'_j = \tau_j$. For Case 2, $x'_{a_j} = x_{a_j}, x'_{b_j} = x_{b_j}, x'_{c_j} = x_{c_j}$ immediately. But because $d_j = c_i$, we also have $x'_{d_j} = x_{d_j}$. Thus,  $\tau'_j = \tau_j$.
\end{proof}
\end{small}
\subsection{Parity equivariance}  \label{app:proof_parity}
\begin{proposition}
    If $p(\boldsymbol{\tau}(C) \mid L(C)) = p(\boldsymbol{\tau}(-C) \mid L(-C))$, then for all diffusion times $t$,
    \begin{equation}
        \nabla_{\boldsymbol{\tau}} \log p_t(\boldsymbol{\tau}(C) \mid L(C)) = -\nabla_{\boldsymbol{\tau}} \log p_t(\boldsymbol{\tau}(-C) \mid L(-C)) 
    \end{equation}
\end{proposition}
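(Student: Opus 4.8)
The plan is to propagate the hypothesis $p(\boldsymbol{\tau}(C)\mid L(C)) = p(\boldsymbol{\tau}(-C)\mid L(-C))$ from $t=0$ through the forward diffusion, exploiting the fact that the toroidal heat kernel is an \emph{even} function of its displacement, and then to differentiate.

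First I would record how parity inversion $\mathbf{x}\mapsto-\mathbf{x}$ acts on internal coordinates, using the definitions of Appendix~\ref{app:def}. For any bonded triple, $\mathbf{u}_{ab} = x_b - x_a \mapsto -\mathbf{u}_{ab}$, so the normals $\mathbf{n}_{abc} = \mathbf{u}_{ab}\times\mathbf{u}_{bc}$ are \emph{invariant}; hence $\cos\tau_{abcd}$ is invariant while $\sin\tau_{abcd} = \mathbf{u}_{bc}\cdot(\mathbf{n}_{abc}\times\mathbf{n}_{bcd})/(\cdots)$ changes sign, and bond lengths and bond angles are manifestly invariant. Since the reference-neighbor choice is combinatorial (identical for $C$ and $-C$), this gives $\boldsymbol{\tau}(-C) = -\boldsymbol{\tau}(C) \pmod{2\pi}$, and I would write $\bar L := L(-C)$ for the mirrored local structure. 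Because the torsion angles of conformers with a fixed local structure $L$ range over all of $\mathbb{T}^m$, the hypothesis is then equivalent to $p_0(\boldsymbol{\tau}\mid L) = p_0(-\boldsymbol{\tau}\mid\bar L)$ for every $\boldsymbol{\tau}\in\mathbb{T}^m$, where $p_0(\cdot\mid L)$ denotes the data density over torsion angles conditioned on $L$.

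Next I would diffuse. By \eqref{eq:torus_score} the perturbation kernel is a wrapped normal, so $p_{t\mid 0}(\boldsymbol{\tau}'\mid\boldsymbol{\tau}_0) = \kappa_t(\boldsymbol{\tau}' - \boldsymbol{\tau}_0)$ for a density $\kappa_t$ on $\mathbb{T}^m$ with $\kappa_t(-\boldsymbol{\delta}) = \kappa_t(\boldsymbol{\delta})$. Writing the diffused marginal as the convolution $p_t(\boldsymbol{\tau}\mid L) = \int_{\mathbb{T}^m}\kappa_t(\boldsymbol{\tau}-\boldsymbol{\tau}_0)\,p_0(\boldsymbol{\tau}_0\mid L)\,d\boldsymbol{\tau}_0$, I substitute $p_0(\boldsymbol{\tau}_0\mid L) = p_0(-\boldsymbol{\tau}_0\mid\bar L)$, change variables $\boldsymbol{\tau}_0\mapsto-\boldsymbol{\tau}_0$, and use the evenness of $\kappa_t$ to obtain $p_t(\boldsymbol{\tau}\mid L) = p_t(-\boldsymbol{\tau}\mid\bar L)$ for all $t$ and all $\boldsymbol{\tau}$. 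Evaluating at $\boldsymbol{\tau} = \boldsymbol{\tau}(C)$, $L = L(C)$, $\bar L = L(-C)$, together with $-\boldsymbol{\tau}(C) = \boldsymbol{\tau}(-C)$, recovers $p_t(\boldsymbol{\tau}(C)\mid L(C)) = p_t(\boldsymbol{\tau}(-C)\mid L(-C))$.

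Finally I would take logarithms and differentiate: from $\log p_t(\boldsymbol{\tau}\mid L) = \log p_t(-\boldsymbol{\tau}\mid\bar L)$ the chain rule (the Jacobian of $\boldsymbol{\tau}\mapsto-\boldsymbol{\tau}$ is $-\mathbf{I}$) gives $\nabla_{\boldsymbol{\tau}}\log p_t(\boldsymbol{\tau}\mid L) = -\bigl(\nabla\log p_t\bigr)(-\boldsymbol{\tau}\mid\bar L)$, and evaluating at $\boldsymbol{\tau}=\boldsymbol{\tau}(C)$ with $-\boldsymbol{\tau}(C)=\boldsymbol{\tau}(-C)$, $\bar L = L(-C)$ yields exactly the claimed $\nabla_{\boldsymbol{\tau}}\log p_t(\boldsymbol{\tau}(C)\mid L(C)) = -\nabla_{\boldsymbol{\tau}}\log p_t(\boldsymbol{\tau}(-C)\mid L(-C))$. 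The step I expect to require the most care is the passage through the diffusion: one must justify that on the compact manifold $\mathbb{T}^m$ the diffused marginal is genuinely the convolution of $p_0$ with the (even) wrapped-normal heat kernel — precisely the content of the toroidal construction around \eqref{eq:torus_score} and of De~Bortoli et al.\ — and one must note that the sign flip $\tau_i\mapsto-\tau_i$ does not depend on the arbitrary reference-neighbor choice, so that the identity $\boldsymbol{\tau}(-C) = -\boldsymbol{\tau}(C)$ is well posed.
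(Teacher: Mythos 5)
Your proof is correct and follows essentially the same route as the paper's: derive $\boldsymbol{\tau}(-C) = -\boldsymbol{\tau}(C)$ from the parity action on Eq.~\ref{eq:torsion_def}, propagate the $t{=}0$ symmetry through the forward diffusion by a change of variables in the convolution against the even wrapped-normal kernel, and finish with the chain rule for $\boldsymbol{\tau}\mapsto-\boldsymbol{\tau}$. You are slightly more explicit than the paper in two useful places — making the cosine/sine sign analysis concrete, and noting that the hypothesis must be read as the pointwise identity $p_0(\boldsymbol{\tau}\mid L) = p_0(-\boldsymbol{\tau}\mid \bar L)$ for all $\boldsymbol{\tau}$ and that the sign flip is independent of the reference-neighbor choice — but the underlying argument is identical.
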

\begin{small}

\begin{proof}
From Equation~\ref{eq:torsion_def} we see that for any torsion $\tau_i$, we have $\tau_i(-C) = -\tau_i(C)$; therefore $\boldsymbol{\tau}_i(-C) = -\boldsymbol{\tau}_i(C)$, which we denote $\boldsymbol{\tau}_-$. Also denote $\boldsymbol{\tau} := \boldsymbol{\tau}(C), p_t(\boldsymbol{\tau}) := p_t(\boldsymbol{\tau} \mid L(C))$ and $p'_t(\boldsymbol{\tau}_-) := p_t(\boldsymbol{\tau}_- \mid L(-C))$. We claim $p_t(\boldsymbol{\tau}) = p'_t(\boldsymbol{\tau}_-)$ for all $t$. Since the perturbation kernel (\eqref{eq:torus_score}) is parity invariant,
\begin{equation}
\begin{aligned}
    p'_t(\boldsymbol{\tau}_{-}) &= \int_{\mathbb{T}^m} p'_0(\boldsymbol{\tau}'_{-}) p_{t\mid 0}(\boldsymbol{\tau}_{-} \mid \boldsymbol{\tau}'_{-})\; d\boldsymbol{\tau}'_{-} \\
    &= \int_{\mathbb{T}^m} p_0(\boldsymbol{\tau}') p_{t\mid 0}(\boldsymbol{\tau} \mid \boldsymbol{\tau}')\; d\boldsymbol{\tau}'_{-} = p_t(\boldsymbol{\tau})
\end{aligned}
\end{equation}
Next, we have
\begin{equation}
\begin{aligned}
    \nabla_{\boldsymbol{\tau}}\log p'_t(\boldsymbol{\tau}_-) &= \frac{\partial\boldsymbol{\tau}_-}{\partial\boldsymbol{\tau}}\nabla_{\boldsymbol{\tau}^{-}}\log p'_t(\boldsymbol{\tau}_-) \\
    &= -\nabla_{\boldsymbol{\tau}}\log p_t(\boldsymbol{\tau}) 
\end{aligned}
\end{equation}
which concludes the proof.
\end{proof}
\end{small}

\subsection{Likelihood conversion} \label{app:proof_likelihood}

\begin{proposition}
 \label{prop:euclidean_app}
Let $\mathbf{x} \in C(\boldsymbol{\tau}, L)$ be a centered conformer in Euclidean space. Then,
\begin{equation}
    p_G(\mathbf{x} \mid L) = \frac{p_G(\boldsymbol{\tau} \mid L)}{ 8 \pi^2 \sqrt{\det g}}
    \quad \mathrm{where} \ \
    g_{\alpha\beta} = 
    \sum_{k=1}^{n} 
    J^{(k)}_{\alpha} \cdot J^{(k)}_{\beta}
\end{equation}
where the indices $\alpha,\beta$ are integers between 1 and $m+3$. For $1 \leq \alpha \leq m$, $J^{(k)}_\alpha$ is defined as
    \begin{align}
    \label{eqn:basisvec_app}
        J^{(k)}_{i} &= \tilde J^{(k)}_{i} - \frac 1 n \sum_{\ell=1}^{n} \tilde J^{(\ell)}_{i}
        \quad
        \mathrm{with} \ \
        \tilde J^{(\ell)}_{i} =
        \begin{cases}
            0 & \ell \in \mathcal{V}(b_i), \\
            \frac{\mathbf{x}_{b_i} - \mathbf{x}_{c_i}} {||\mathbf{x}_{b_i} - \mathbf{x}_{c_i}||}
            \times
            \left( \mathbf{x}_\ell - \mathbf{x}_{c_i} \right),
            & \ell \in \mathcal{V}(c_i),
        \end{cases}
    \end{align}
    and for $\alpha \in \{m+1, m+2, m+3\}$ as
    \begin{align}
    \label{eq:omegajacobian_app}
        J^{(k)}_{m+1} &= \mathbf{x}_k \times \hat{x},
        \qquad
        J^{(k)}_{m+2} = \mathbf{x}_k \times \hat{y},
        \qquad
        J^{(k)}_{m+3} = \mathbf{x}_k \times \hat{z},
        \qquad
    \end{align}
    where $(b_i, c_i)$ is the freely rotatable bond for torsion angle $i$, $\mathcal{V}(b_i)$ is the set of all nodes on the same side of the bond as $b_i$, and $\hat x, \hat y, \hat z$ are the unit vectors in the respective directions.
\end{proposition}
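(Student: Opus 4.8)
The plan is to exhibit $p_G(\mathbf{x}\mid L)$ as the image of the torus density $p_G(\boldsymbol{\tau}\mid L)$, tensored with a uniform distribution over global orientations, under an explicit parameterization of the centered-conformer manifold, and then read off the Jacobian factor $\sqrt{\det g}$ from the area formula; the matrix $g$ will turn out to be exactly the pull-back of the Euclidean metric of $\mathbb{R}^{3n}$ under that parameterization, and $8\pi^2$ the normalization of the orientational degrees of freedom. To set things up I would fix the local structure $L$ and restrict to the open dense subset of conformers on which the internal-coordinate reconstruction is a diffeomorphism (the ``almost surely'' caveat of the main text). Centering removes the translational degrees of freedom, as the proposition's footnote anticipates, so the conformers compatible with $L$ form a smooth $(m+3)$-dimensional submanifold $\mathcal{M}$ of $\{\mathbf{z}\in\mathbb{R}^{3n}:\sum_k\mathbf{z}_k=0\}$. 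I would then build a chart $\psi:(\boldsymbol{\tau},\boldsymbol{\omega})\mapsto\mathbf{x}$ of $\mathcal{M}$ by taking the target conformer $\mathbf{x}$ itself as reference, applying the torsion updates of Proposition~\ref{prop:torsion} so that its torsion angles become $\boldsymbol{\tau}$, re-centering, and then rotating by $\exp(\boldsymbol{\omega})$ about the centroid, with $\boldsymbol{\omega}$ ranging over the radius-$\pi$ ball (which covers $SO(3)$ up to a null set). With this choice $\psi(\boldsymbol{\tau},\boldsymbol{0})=\mathbf{x}$, so the columns of $D\psi$ at $(\boldsymbol{\tau},\boldsymbol{0})$ are expressed directly through the atom positions of $\mathbf{x}$.

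Next I would carry out the change of variables. Since the underlying physical law on conformers is $SE(3)$-invariant, conditioning on $L$ and pulling back through $\psi$ gives the product measure $p_G(\boldsymbol{\tau}\mid L)\,d\boldsymbol{\tau}$ times the Haar probability measure of $SO(3)$ in the rotation-vector chart, whose density with respect to Lebesgue $d\boldsymbol{\omega}$ is $\frac{1}{8\pi^2}\cdot\frac{2(1-\cos\lVert\boldsymbol{\omega}\rVert)}{\lVert\boldsymbol{\omega}\rVert^2}$; this integrates to $1$ over the radius-$\pi$ ball --- recovering the fact that $SO(3)$ has volume $8\pi^2$ in this parameterization --- and at $\boldsymbol{\omega}=\boldsymbol{0}$ it equals exactly $\frac{1}{8\pi^2}$. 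On the other side, $p_G(\mathbf{x}\mid L)$ is by definition the density of the same law with respect to the $(m+3)$-dimensional Hausdorff measure of $\mathcal{M}$, and the area formula says the pull-back of that Hausdorff measure through $\psi$ is $\sqrt{\det(D\psi^\top D\psi)}\;d\boldsymbol{\tau}\,d\boldsymbol{\omega}$ with $(D\psi^\top D\psi)_{\alpha\beta}=\partial_\alpha\psi\cdot\partial_\beta\psi=\sum_{k=1}^{n} J^{(k)}_\alpha\cdot J^{(k)}_\beta=g_{\alpha\beta}$. Equating the two expressions at $\mathbf{x}=\psi(\boldsymbol{\tau},\boldsymbol{0})$ gives $p_G(\mathbf{x}\mid L)\,\sqrt{\det g}=\frac{1}{8\pi^2}\,p_G(\boldsymbol{\tau}\mid L)$, which is the claim.

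It then remains to identify the columns $J^{(k)}_\alpha$ of $D\psi$ with the stated formulas. For $1\le\alpha=i\le m$: before re-centering, the torsion update about the freely rotatable bond $(b_i,c_i)$ holds the $b_i$-side fixed and rigidly rotates the $c_i$-side about the bond axis, so the velocity of atom $\ell$ vanishes for $\ell\in\mathcal{V}(b_i)$ and equals the cross product of the unit bond vector with $\mathbf{x}_\ell-\mathbf{x}_{c_i}$ for $\ell\in\mathcal{V}(c_i)$ --- the infinitesimal form of Proposition~\ref{prop:torsion}, with the sign/orientation of the axis immaterial since only the Gram matrix $g$ enters; re-centering subtracts the mean velocity $\frac1n\sum_\ell\tilde J^{(\ell)}_i$, yielding \eqref{eqn:basisvec_app}. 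For $\alpha\in\{m+1,m+2,m+3\}$: the derivative at $\boldsymbol{\omega}=\boldsymbol{0}$ of a rotation about axis $\hat{e}_j$ through the centroid sends $\mathbf{x}_k$ to $\hat{e}_j\times\mathbf{x}_k=\pm\,\mathbf{x}_k\times\hat{e}_j$, and since such a rotation preserves the centroid of a centered conformer these three columns already sum to zero and need no centering correction, matching \eqref{eq:omegajacobian_app}. Substituting into $g$ finishes the proof; that $\det g$ does not depend on which $SO(3)$-representative of the centered conformer one chooses is consistent with $p_G(\mathbf{x}\mid L)$ being a well-defined function on $\mathcal{M}$ and follows from the $SO(3)$-action being by isometries of $\mathbb{R}^{3n}$.

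The step I expect to be most delicate is the measure-theoretic bookkeeping in the second paragraph: one must ensure that $p_G(\mathbf{x}\mid L)$ is taken with respect to the induced Riemannian (Hausdorff) volume on $\mathcal{M}$, that the orientational factor contributes exactly $1/8\pi^2$ at the identity (equivalently, that the rotation-vector chart realizes $SO(3)$ with total volume $8\pi^2$, consistently with the $IGSO(3)$ density used elsewhere in this work), and that, by evaluating at $\boldsymbol{\omega}=\boldsymbol{0}$ where that chart is ``normal'', no first-order Jacobian contribution from the $SO(3)$ factor is missed. The explicit Jacobian computations of the third paragraph are routine once Proposition~\ref{prop:torsion} is available, and the translational degrees of freedom drop out exactly as the proposition's footnote indicates.
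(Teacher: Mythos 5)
Your proof is correct and follows essentially the same route as the paper's: parameterize the centered $(m+3)$-dimensional manifold by $(\boldsymbol{\tau},\boldsymbol{\omega})$, read off the Jacobian columns $J^{(k)}_\alpha$, form the Gram matrix $g$, and extract the $8\pi^2$ from the rotational degrees of freedom. Your torsion and rotation Jacobian computations, including the centering correction (and the observation that the rotational columns need none because the centroid is already zero), coincide with the paper's.

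The one place you diverge, usefully, is in how the $8\pi^2$ is extracted. The paper asserts that ``none of the components of the metric tensor depend explicitly on $\omega$'' and then integrates $\sqrt{\det g}\,d^3\omega$ to get $8\pi^2\sqrt{\det g}$; taken at face value this is imprecise, since in exponential coordinates the rotational block of $g$ does vary with $\omega$ (only $\det g$ is rotation-invariant, and $\int_{|\omega|\le\pi}d^3\omega$ is not literally $8\pi^2$). What saves the computation is exactly the factor $\frac{2(1-\cos|\omega|)}{|\omega|^2}$ hidden in the $\omega$-dependence of $\sqrt{\det g}$, whose integral over the radius-$\pi$ ball is $8\pi^2$. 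Your version --- evaluating the chart at $\boldsymbol{\omega}=\boldsymbol{0}$, where the exponential chart is normal and the Haar-probability density is exactly $1/(8\pi^2)$, and comparing two expressions for the same density against Lebesgue $d\boldsymbol{\tau}\,d\boldsymbol{\omega}$ --- makes the bookkeeping explicit and avoids the imprecise claim, so it is a slightly cleaner rendering of the same argument rather than a different proof strategy.
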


\begin{proof}\begin{small}
Let $M$ be $(m+3)$-dimensional manifold embedded in $3n$-dimensional Euclidean space formed by the set of all centered conformers with fixed local structures but arbitrary torsion angles and orientation. A natural set of coordinates for $M$ is $q^\alpha = \{\tau_1, \tau_2, \ldots, \tau_m, \omega_x, \omega_y, \omega_z\}$, where $\tau_i$ is the torsion angle at bond $i$ and $\omega_x, \omega_y, \omega_z$ define the global rotation about the center of mass:
\begin{equation}
    \mathbf{x}_k = \tilde {\mathbf{x}}_k  - \frac 1 n \sum_{\ell=1}^n {\tilde{\mathbf{x}}}_\ell
    \quad
    \mathrm{where}
    \quad
    {\tilde{\mathbf{x}}}_\ell = e^{\Lambda(\omega)} \mathbf{x}^\prime_k,
    \quad
    \Lambda(\omega) =
    \begin{pmatrix}
        0 & -\omega_z & \omega_y \\
        \omega_z & 0 & -\omega_x \\
        -\omega_y & \omega_x & 0
    \end{pmatrix}.
\end{equation}
Here ${\mathbf{x}}_k^\prime$ is the position of atom $k$ as determined by the torsion angles, without centering or global rotations, and $\omega_x, \omega_y, \omega_z$ are rotation about the $x$, $y$, and $z$ axis respectively.

Consider the set of covariant basis vectors
\begin{equation}
    \mathbf{J}_\alpha = \frac{\partial \mathbf{x}}{\partial q^\alpha}.
\end{equation}
and corresponding the covariant components of the metric tensor,
\begin{equation}
    g_{\alpha \beta} = \mathbf{J}_\alpha \cdot \mathbf{J}_\beta = \frac{\partial \mathbf{x}}{\partial q^\alpha} \cdot \frac{\partial \mathbf{x}}{\partial q^\beta}.
\end{equation}
The conversion factor between torsional likelihood and Euclidean likelihood is given by
\begin{equation}
    \int \sqrt{\det \mathbf{g}} \, d^{3} \omega,
    \label{eq:conversionfactor_app}
\end{equation}
where $\sqrt{\det \mathbf{g}} \, d^{m+3}q$ is the invariant volume element on $M$ \cite{carroll2019spacetime}, and the integration over $\omega$ marginalizes over the uniform distribution over global rotations. The calculation of Eq.~\ref{eq:conversionfactor_app} proceeds as follows.

Let the position of the $k$'th atom be $x_k$, and let the three corresponding components of $\mathbf{J}_\alpha$ be $J^{(k)}_\alpha$. For $1 \leq i \leq m$, $J^{(k)}_i$is given by
\begin{equation}
   J^{(k)}_i = \frac{\partial}{\partial \tau_i}
   \left(
        \tilde {\mathbf{x}}_k
        - \frac 1 n \sum_{\ell=1}^n {\tilde{\mathbf{x}}}_\ell
   \right)
   = \tilde J^{(k)}_{i} - \frac 1 n \sum_{\ell=1}^{n} \tilde J^{(\ell)}_{i}
\end{equation}
where $\tilde J^{(k)}_{i} := \partial \tilde {\mathbf{x}}_{k}/\partial \tau_i$ is the displacement of atom $k$ upon an infinitesmal change in the torsion angle $\tau_i$, without considering the change in the center of mass. Clearly $\tilde J^{(b_i)}_{i} = \tilde J^{(c_i)}_{i} = 0$ because neither $b_i$ nor $c_i$ itself is displaced; furthermore, all atoms on the $b$ side of torsioning bond are not displaced, so $J^{(k)}_{i} = 0$ for all $k \in \mathcal{N}(b_i)$. The remaining atoms, in $\mathcal{N} (c_i)$, are rotated about the axis of the $(b_i,c_i)$ bond. The displacement per infinitesimal $\partial \tau_i$ is given by the cross product of the unit normal along the rotation axis, $(\tilde x_{c_i} - \tilde x_{b_i}) / {||\tilde x_{c_i} - \tilde x_{b_i}||}$, with the displacement from rotation axis, $\tilde x_k - \tilde x_{b_i}$. This cross product yields ${J}^{(k)}_\alpha$ in Eq.~\ref{eqn:basisvec_app}, where the tildes are dropped as relative positions do not depend on center of mass. For $\alpha \in \{m+1, m+2, m+3\}$, a similar consideration of the cross product with the rotation axis yields Eq.~\ref{eq:omegajacobian_app}. Finally, since none of the components of the metric tensor depend explicitly on $\omega$, the integration over $\omega$ in Eq.~\ref{eq:conversionfactor_app} is trivial and yields the volume over $SO(3)$ of $8 \pi^2$ \cite{chirikjian2011stochastic}, proving the proposition.
\end{small}\end{proof}

\section{Chapter 4: DiffDock} \label{app:diffdock_propositions}

Reported in this section are the proofs of the propositions in Chapter \ref{chapter:diffdock}. These were primarily developed by Bowen Jing.

\subsection{Zero momentum} \label{app:proof_momentum}

\begin{proposition}
Let $\bfy(t) := A_\text{tor}(t\boldsymbol{\theta}, \xlig)$ for some $\boldsymbol{\theta}$ and where $t\boldsymbol{\theta} = (t\theta_1, \ldots t\theta_m)$. Then the linear and angular momentum are zero: $\frac{d}{dt} \bar\bfy|_{t=0} = 0$ and $\sum_i (\xx - \bar\xlig) \times \frac{d}{dt}\bfy_i|_{t=0} = 0$ where $\bar\xx = \frac{1}{n}\sum_i \xx_i$.
\end{proposition}
\begin{proof}
    Let $\bfy(t) = R(t)(B(t, \boldsymbol{\theta}, \xx)-\bar\xx) + \bar\xx + \pp(t)$ where $B(t, \boldsymbol{\theta}, \cdot)$ = $B_{1,t\theta_1}\circ \cdots B_{m, t\theta_m}$ and $R(t), \pp(t)$ are the rotation (around $\bar\xx$) and translation associated with the optimal RMSD alignment between $B(t, \boldsymbol{\theta}, \xx)$ and $\xx$. By definition of $\rmsd$, for any $t$, $R(t)$ and $\pp(t)$ minimize
    \begin{equation}
         \lnorm\bfy(t)-\xx\rnorm = \lnorm R(t)(B(t, \boldsymbol{\theta}, \xx)-\bar\xx) + \bar\xx + \pp(t) - \xx\rnorm
    \end{equation}
    For infinitesimal $t = dt$, the RHS becomes
    \begin{equation}
    \begin{aligned}
        \text{RHS} &= \lnorm R(dt)(B(dt, \boldsymbol{\theta}, \xx)-\bar\xx) + \bar\xx + \pp(dt) - \xx\rnorm \\
        &= \lnorm \left(R'(0) \, dt + R(0)\right)\left(B'(0, \boldsymbol{\theta}, \xx)\, dt + B(0, \boldsymbol{\theta}, \xx)-\bar\xx\right) + \bar\xx + \pp'(0)\, dt + \pp(0) - \xx\rnorm\\
        &= \lnorm R'(0) \left(\xx -\bar\xx \right)\, +B'(0, \boldsymbol{\theta}, \xx)\, + \pp'(0)\, \rnorm \, dt
    \end{aligned}
    \end{equation}
    where we have used $R(0) = I$, $B(0, \boldsymbol{\theta}, \xx) = \xx$, and $\pp(0) = 0$. Thus, we see that RMSD alignment implies that the derivatives of $R(t), \pp(t)$ minimize the norm of
    \begin{equation}
        \bfy'(0) = R'(0) \left(\xx -\bar\xx \right)\, +B'(0, \boldsymbol{\theta}, \xx)\, + \pp'(0)
    \end{equation}
    This expression represents the instantaneous velocity of the points $\bfy_i$ at $t=0$. We now show that minimizing the velocity results in zero linear and angular momentum.
    
    We abbreviate $B'(t, \boldsymbol{\theta}, \xx(0))_i := \bb_i$ and $\pp' = \vv$. Further, let $\rr_i = \xx_i - \bar\xx$, such that the rotational contribution to the velocity can be written in terms of an angular velocity vector $\boldsymbol{\omega}$. With this, at $t=0$ we have
    \begin{equation}
        \bfy'_i = \bb_i + \boldsymbol{\omega} \times \rr_i + \vv
    \end{equation}
    We thus obtain the squared norm as
    \begin{equation}
    \begin{footnotesize}
        \begin{aligned}
            \sum_i \lnorm \bfy'_i \rnorm^2 &= \sum_i (\bb_i + \boldsymbol{\omega} \times \rr_i + \vv)\cdot(\bb_i + \boldsymbol{\omega} \times \rr_i+ \vv) \\
            &= \sum_i \left[\lnorm \bb_i \rnorm^2 + 2\bb_i\cdot(\boldsymbol{\omega} \times \rr_i) + 2\bb_i\cdot\vv + (\boldsymbol{\omega} \times \rr_i) \cdot (\boldsymbol{\omega} \times \rr_i) + 2 (\boldsymbol{\omega} \times \rr_i) \cdot \vv + \lnorm \vv \rnorm^2\right]\\
            &= \sum_i \lnorm \bb_i \rnorm^2 + 2\boldsymbol{\omega} \cdot \sum_i (\rr_i \times \bb_i) + 2\left(\sum_i \bb_i\right)\cdot\vv +  n\lnorm \vv \rnorm^2 + \boldsymbol{\omega}^T\mathcal{I}(\rr)\boldsymbol{\omega}
        \end{aligned}
    \end{footnotesize}
    \end{equation}
    where we have used the fact that $\sum_i \rr_i = 0$ and where $\mathcal{I}(\rr) = \left(\sum_i \rr_i \cdot \rr_i\right)I - \sum_i \rr_i \rr_i^T$ is the $3\times 3$ \emph{inertia tensor}. To minimize the squared norm (and thus the norm itself), we set gradients with respect to $\vv, \boldsymbol{\omega}$ to zero. This gives
    \begin{equation}
        \vv = -\frac{1}{n}\sum_i \bb_i \quad \text{and} \quad \boldsymbol{\omega} = -\mathcal{I}(\rr)^{-1}\left(\sum_i \rr_i \times \bb_i\right)
    \end{equation}
    Now with $\bfy'_i = \bb_i + \boldsymbol{\omega} \times \rr_i + \vv$ we evaluate the linear momentum
    \begin{align}
        \frac{1}{n}\sum_i \bfy_i' = \frac{1}{n}\left(\sum_i \bb_i + \boldsymbol{\omega}\times\sum_i\rr_i + n\vv\right) = 0
    \end{align}
    which is zero by direct substitution of $\vv$. Similarly, we evaluate the angular momentum
    \begin{equation}
    \begin{aligned}
        \sum_i \rr_i \times \yy'_i &= \sum_i \rr_i \times \bb_i + \sum_i \rr_i \times (\boldsymbol{\omega} \times \rr_i) + \sum_i \rr_i \times \vv \\
        &= \sum_i \rr_i \times \bb_i + \mathcal{I}(\rr)\boldsymbol{\omega} = 0
    \end{aligned}
    \end{equation}
    which is zero by direct substitution of $\boldsymbol{\omega}$. Thus, the linear and angular momentum are zero at $t=0$ for arbitrary $\xx$. 
\end{proof}

Note that since we did not use the particular form of $B(t\boldsymbol{\theta}, \xx)$ in the above proof, we have shown that RMSD alignment can be used to disentangle rotations and translations from the infinitesimal action of any arbitrary function.

\subsection{Map is bijection} \label{app:proof_bijection}

\begin{proposition}
    For a given seed conformation $\cc$, the map $A(\cdot, \cc): \PP \rightarrow \mathcal{M}_\cc$ is a bijection.
\end{proposition}
\begin{proof}
    Since we defined $\mathcal{M}_\cc = \{ A(g, \cc) \mid g \in \PP\}$, $A(\cdot, \cc)$ is automatically surjective. We now show that it is injective. Assume for the sake of contradiction that $A(\cdot, \cc)$ is not injective, so that there exist elements of the product space $g_1, g_2 \in \PP$ with $g_1 \neq g_2$ but with $A(g_1, \cc) = A(g_2, \cc) = \cc'$. That is,
    \begin{equation}\label{eq:injective}
        A_\text{tr}(\rr_1, A_\text{rot}(R_1, A_\text{tor}(\boldsymbol{\theta}_1, \cc))) = A_\text{tr}(\rr_2, A_\text{rot}(R_2, A_\text{tor}(\boldsymbol{\theta}_2, \cc)))
    \end{equation}
    which we abbreviate as $\cc^{(1)} = \cc^{(2)}$. Since only $A_\text{tr}$ changes the center of mass $\sum_i \cc_i /n$, we have $\sum_i \cc^{(1)}_i/n = \sum_i \cc_i/n + \rr_1$ and $\sum_i \cc^{(2)}_i/n = \sum_i \cc_i/n + \rr_2$. However, since $\cc^{(1)} = \cc^{(2)}$, this implies $\rr_1 = \rr_2$. Next, consider the torsion angles $\boldsymbol{\tau}_1 = (\tau^{(1)}_1, \ldots \tau^{(1)}_m)$ of $\cc^{(1)}$ corresponding to some choice of dihedral angles at each rotatable bond. Because $A_\text{tr}$ and $A_\text{rot}$ are rigid-body motions, only $A_\text{tor}$ changes the dihedral angles; in particular, by definition we have $\tau^{(1)}_i \cong \tau_i + \theta^{(1)}_i \mod 2\pi$ and $\tau^{(2)}_i \cong \tau_i + \theta^{(2)}_i \mod 2\pi$ for all $i=1,\ldots m$. However, because $\tau^{(1)}_i = \tau^{(2)}_i$, this means $\theta^{(1)}_i \cong \theta^{(2)}_i$ for all $i$ and therefore $\boldsymbol{\theta}_1 = \boldsymbol{\theta}_2$ (as elements of $SO(2)^m$). Now denote $\cc^\star = A_\text{tor}(\boldsymbol{\theta}_1, \cc) = A_\text{tor}(\boldsymbol{\theta}_2, \cc)$ and apply $A_\text{tr}(-\rr_1, \cdot) = A_\text{tr}(-\rr_2, \cdot)$ to both sides of Equation \ref{eq:injective}. We then have
    \begin{equation}
        A_\text{rot}(R_1, \cc^\star) = A_\text{rot}(R_2, \cc^\star)
    \end{equation}
    which further leads to
    \begin{equation}
        \cc^\star - \bar\cc^\star = R_1^{-1}R_2(\cc^\star-\bar\cc^\star)
    \end{equation}
    In general, this does not imply that $R_1 = R_2$. However, $R_1 \neq R_2$ is possible only if $\cc^\star$ is \emph{degenerate}, in the sense that all points are collinear along the shared axis of rotation of $R_1, R_2$. However, in practice, conformers never consist of a collinear set of points, so we can safely assume $R_1 = R_2$. We now have $(\rr_1, R_1, \boldsymbol{\theta}_1) = (\rr_2, R_2, \boldsymbol{\theta}_2)$, or $g_1 = g_2$, contradicting our initial assumption. We thus conclude that $A(\cdot, \cc)$ is injective, completing the proof.
\end{proof}

\chapter{Methodological Details}

\section{Chapter 3: Torsional Diffusion}

\subsection{Score network architecture} \label{app:architecture}

\paragraph{Overview} To perform the torsion score prediction under these symmetry constraints we design an architecture formed by three components: an embedding layer, a series of $K$ interaction layers and a pseudotorque layer. The pseudotorque layer produces pseudoscalar torsion scores $\delta\tau := \partial \log p/\partial\tau$ for every rotatable bond.
Following the notation from Thomas et al. \cite{thomas2018tensor} and Batzner et al. \cite{batzner2021se}, we represent the node representations as $V_{acm}^{(k, l, p)}$ a dictionary with keys the layer $k$, rotation order $l$ and parity $p$ that contains tensors with shapes $[ |\mathcal{V}|, n_l, 2l+1 ]$ corresponding to the indices of the node, channel and representation respectively.  We use the \verb|e3nn| library \cite{e3nn} to implement our architecture.

\begin{figure}[h]
    \centering
    \includegraphics[width=0.8\textwidth]{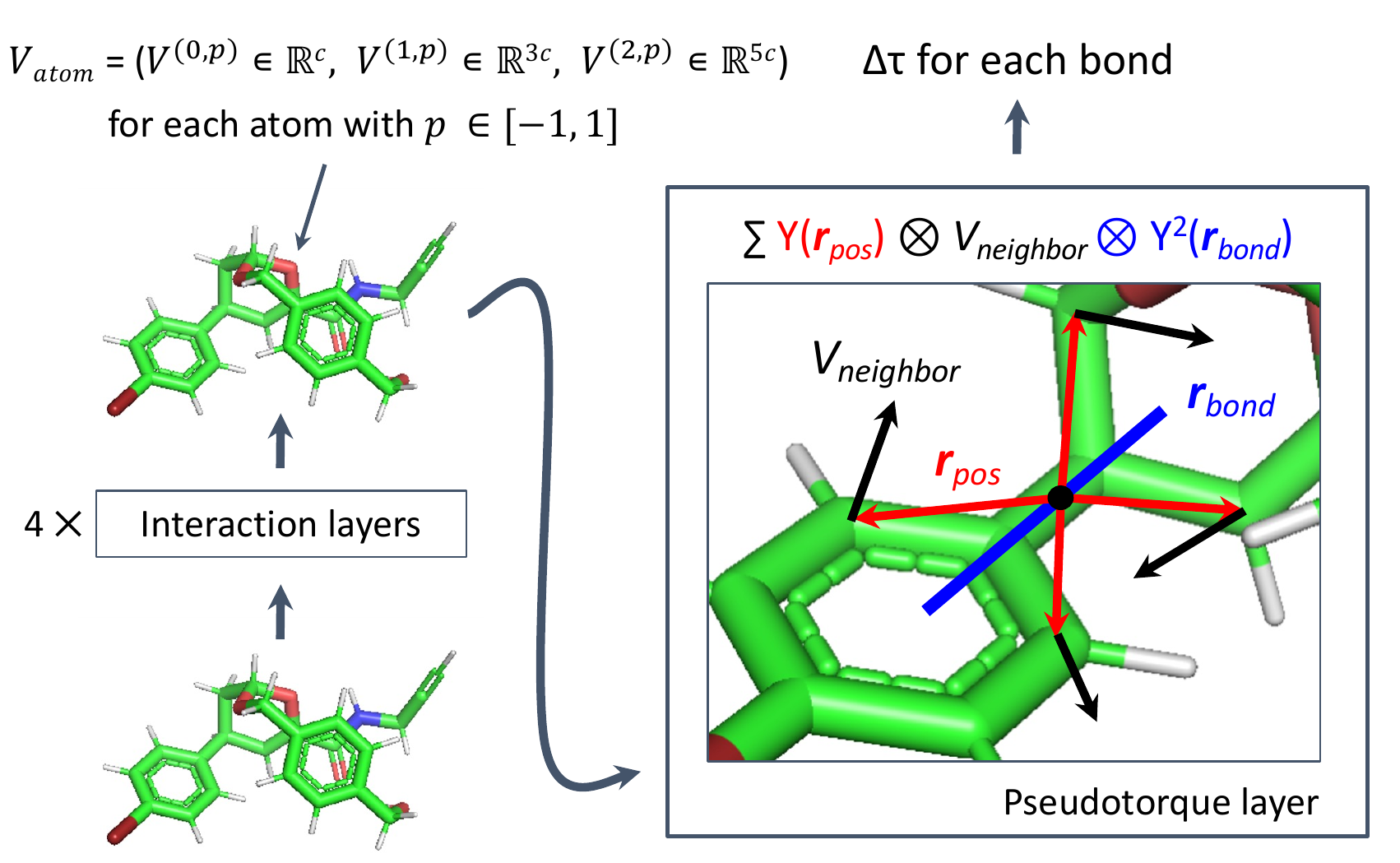}
    \caption{Overview of the architecture and visual intuition of the pseudotorque layer.}
    \label{fig:architecture}
\end{figure}

\paragraph{Embedding layer} In the embedding layer, we build a radius graph $(\mathcal{V}, \mathcal{E}_{r_{\max} })$ around each atom on top of the original molecular graph and generate initial scalar embeddings for nodes $V_a^{(0, 0, 1)}$ and edges $e_{ab}$ combining chemical properties, sinusoidal embeddings of time $\phi(t)$ \cite{vaswani2017attention} and, for the edges, a radial basis function representation of their length $\mu(r_{ab})$ \cite{schutt2017schnet}:
\begin{equation}
\begin{gathered}
\mathcal{E}_{r_{\max} } = \mathcal{E} \sqcup \{ (a,b) \mid r_{ab} < r_{\max} \}  \\
e_{ab} = \Upsilon^{(e)}(fe_{ab} || \mu(r_{ab}) || \phi(t)) \quad \forall (a,b) \in \mathcal{E}_{r_{\max}}  \\
V_a^{(0, 0, 1)} = \Upsilon^{(v)}(f_a || \phi(t))  \quad \forall a \in \mathcal{V}  
\end{gathered}
\end{equation}

where $\Upsilon^{(e)}$ and $\Upsilon^{(v)}$ are learnable two-layers MLPs, $r_{ab}$ is the Euclidean distance between atoms $a$ and $b$, $r_{\max} = 5\AA$ is the distance cutoff, $f_a$ are the chemical features of atom $a$, $f_{ab}$ are the chemical features of bond $(a,b)$ if it was part of $\mathcal{E}$ and 0 otherwise.

The node and edge chemical features $f_a$ and $f_{ab}$ are constructed as in Ganea et al. \cite{ganea2021geomol}. Briefly, the node features include atom identity, atomic number, aromaticity, degree, hybridization, implicit valence, formal charge, ring membership, and ring size, constituting a 74-dimensional vector for GEOM-DRUGS and 44-dimensional for QM9 (due to fewer atom types). The edge features are a 4 dimensional one-hot encoding of the bond type.

\paragraph{Interaction layers} The interaction layers are based on \texttt{e3nn} \cite{e3nn} convolutional layers. At each layer, for every pair of nodes in the graph, we construct messages using tensor products of the current irreducible representation of each node with the spherical harmonic representations of the normalized edge vector. These messages are themselves irreducible representations, which are weighted channel-wise by a scalar function of the current scalar representations of the two nodes and the edge and aggregated with Clebsch-Gordan coefficients.

At every layer $k$, for every node $a$, rotation order $l_o$, and output channel $c'$:
\begin{small}
\begin{equation}
\begin{gathered}
V_{ac'm_o}^{(k, l_o, p_o)} = \sum_{l_f, l_i, p_i} \sum_{m_f, m_i} C_{(l_i, m_i)(l_f, m_f)}^{(l_o,m_o)} \frac{1}{|\mathcal{N}_a|} \sum_{b \in \mathcal{N}_a} \sum_c \psi_{abc}^{(k, l_o, l_f, l_i, p_i)} \; Y_{m_f}^{(l_f)}(\hat{r}_{ab}) \; V_{bcm_i}^{(k-1,l_i,p_i)} \\
\text{with } \psi_{abc}^{(k,l_o, l_f, l_i, p_i)} = \Psi_c^{(k, l_o, l_f, l_i, p_i)}(e_{ab} || V_a^{(k-1, 0, 1)} || V_b^{(k-1, 0, 1)})
\end{gathered}\end{equation}\end{small}
where the outer sum is over values of $l_f, l_i, p_i$ such that $|l_i-l_f| \le l_o \le l_i+l_f$ and $(-1)^{l_f}p_i = p_o$, $C$ indicates the Clebsch-Gordan coefficients \cite{thomas2018tensor}, $\mathcal{N}_a = \{b \mid (a,b) \in \mathcal{E}_{\max} \}$ the neighborhood of $a$ and $Y$ the spherical harmonics. The rotational order of the nodes representations $l_o$ and $l_i$  and of the spherical harmonics of the edges ($l_f$) are restricted to be at most 2. All the learnable weights are contained in $\Psi$, a dictionary of MLPs that compute per-channel weights based on the edge embeddings and scalar features of the outgoing and incoming node.

\paragraph{Pseudotorque layer} The final part of our architecture is a pseudotorque layer that predicts a pseudoscalar score $\delta\tau$ for each rotatable bond from the per-node outputs of the interaction layers. For every rotatable bond, we construct a tensor-valued filter, centered on the bond, from the tensor product of the spherical harmonics with a $l=2$ representation of the \emph{bond axis}. Since the parity of the $l=2$ spherical harmonic is even, this representation does not require a choice of bond direction. The filter is then used to convolve with the representations of every neighbor on a radius graph, and the products which produce pseudoscalars are passed through odd-function (i.e., with $\tanh$ nonlinearity and no bias) dense layers (not shown in equation \ref{eq:pseudotorque}) to produce a single prediction.

For all rotatable bonds $g = (g_0, g_1) \in \mathcal{E}_{\text{rot}}$ and $b \in \mathcal{V}$, let $r_{gb}$ and $\hat{r}_{gb}$ be the magnitude and direction of the vector connecting the center of bond $g$ and $b$.
\begin{equation}\label{eq:pseudotorque}
\begin{gathered}
\mathcal{E}_\tau = \{ (g,b) \mid g \in \mathcal{E}_r, b \in \mathcal{V}, r_{gb} < r_{\max} \} \quad \quad e_{gb} = \Upsilon^{(\tau)}(\mu(r_{gb}))\\
T_{gbm_o}^{(l_o, p_o)} = \sum_{m_g, m_r, l_r: p_o=(-1)^{l_r}} C_{(2, m_g)(l_r, m_r)}^{(l_o,m_o)} Y_{m_f}^{(2)}(\hat{r}_{g}) \; Y_{m_r}^{(l_r)}(\hat{r}_{gb})\\
\delta\tau_{g} = \sum_{l, p_f, p_i: p_fp_i=-1} \sum_{m_o, m_i} C_{(l, m_f)(l, m_i)}^{(0,0)} \frac{1}{|\mathcal{N}_g|} \sum_{b \in \mathcal{N}_g} \sum_c \gamma_{gcb}^{(l, p_i)} \; T_{gbm_f}^{(l, p_f)} \; V_{bcm_i}^{(K,l,p_i)}  \\
\text{with } \gamma_{gcb}^{(l, p_i)} = \Gamma_c^{(l, p_i)}(e_{gb} || V_b^{(K, 0, 1)} || V_{g_0}^{(K, 0, 1)} + V_{g_1}^{(K, 0, 1)})
\end{gathered}\end{equation}

where $\Upsilon^{(\tau)}$ and $\Gamma$ are MLPs with learnable parameters and $\mathcal{N}_g = \{b \mid (g,b) \in \mathcal{E}_{\tau} \}$.


\subsection{Conformer matching} \label{app:matching}

The conformer matching procedure, summarised in Algorithm \ref{alg:matching}, proceeds as follows. For a molecule with $K$ conformers, we first generate $K$ random local structure estimates $\hat L$ from RDKit. To match with the ground truth local structures, we compute the cost of matching each true conformer $C$ with each estimate $\hat L$ (i.e. a $K\times K$ cost matrix), where the cost is the best RMSD that can be achieved by modifying the torsions of the RDKit conformer with local structure $\hat L$ to match the ground truth conformer $C$. Note that in practice, we compute an upper bound to this optimal RMSD using the fast von Mises torsion matching procedure proposed by Stark et al. \cite{equibind}. 

We then find an optimal matching of true conformers $C$ to local structure estimates $\hat{L}$ by solving the linear sum assignment problem over the approximate cost matrix \cite{crouse2016implementing}. Finally, for each matched pair, we find the true optimal $\hat{C}$ by running a differential evolution optimization procedure over the torsion angles \cite{mendez2021geometric}. The complete assignment resulting from the linear sum solution guarantees that there is no distributional shift in the local structures seen during training and inference. 

\begin{algorithm}[h]
\caption{Conformer matching}\label{alg:matching}
\KwIn{true conformers of $G$ $[C_1, ... C_K]$}
\KwOut{approximate conformers for training  $[\hat{C}_1, ... \hat{C}_K]$}
generate local structures $[\hat{L}_1, ... \hat{L}_K]$ with RDKit\;
\For{$(i,j)$ \textbf{in} $[1,K]\times [1,K]$}{
    $C_{\text{temp}}$ = von\_Mises\_matching($C_i$, $\hat{L}_j$)\;
    cost[i,j] = $\rmsd(C_i$, $C_{\text{temp}})$\;
}
assignment = linear\_sum\_assignment(cost)\;
\For{$i\leftarrow 1$ \KwTo $K$}{
    j = assignment[i]\;
    $\hat{C}_i$ = differential\_evolution($C_i$, $\hat{L}_j$, RMSD)\;
}
\end{algorithm}

Table \ref{tab:matching_results} shows the average RMSD between a ground truth conformer $C_i$ and its matched conformer $\hat{C}_i$. The average RMSD of 0.324 \AA $\,$ obtained via conformer matching provides an approximate lower bound on the achievable AMR performance for methods that do not change the local structure and take those from RDKit (further discussion in Appendix \ref{app:discuss_rdkit}).

\begin{table}[t]
 \caption{Average $\rmsd(c_i, \hat{c}_i)$ achieved by different variants of conformer matching. "Original RDKit" refers to the RMSD between a random RDKit conformer and a ground truth conformer without any optimization. In "Von Mises optimization" and "Differential evolution," the torsions of the RDKit conformer are adjusted using the respective procedures, but the pairing of RDKit and ground truth conformers is still random. In "Conformer matching," the cost-minimizing assignment prior to differential evolution provides a 15\% improvement in average RMSD. The results are shown for a random 300-molecule subset of GEOM-DRUGS.}
 \label{tab:matching_results}
 \vspace{10pt}
 \centering
 \begin{tabular}{lc} \toprule
 Matching method & RMSD (\AA) \\ \midrule
 Original RDKit           & 1.448     \\
 Von Mises optimization   & 0.728     \\
 Differential evolution   & 0.379     \\
 Conformer matching       & 0.324    \\ \bottomrule
 \end{tabular}
 \end{table}

\section{Chapter 4: DiffDock} 

\subsection{Training and Inference} \label{app:train_inf}

In this section we present the training and inference procedures of the diffusion generative model. First, however, there are a few subtleties of the generative approach to molecular docking that are worth mentioning. Unlike the standard generative modeling setting where the dataset consists of many samples drawn from the data distribution, each training example $(\xx^\star, \yy)$ of protein structure $\yy$ and ground-truth ligand pose $\xx^\star$ is the \emph{only sample} from the corresponding conditional distribution $p_{\xx^\star}(\cdot \mid \yy)$ defined over $\mathcal{M}_{\xx^\star}$. Thus, the innermost training loop iterates over distinct conditional distributions $p_{\xx^\star}(\cdot \mid \yy)$, along with a single sample from that distribution, rather than over samples from a common data distribution $p_\text{data}(\xx)$.

As discussed in Section~\ref{sec:diffusion_model}, during inference, $\cc$ is the ligand structure generated with a method such as RDKit. However, during training we require $\mathcal{M}_\cc = \mathcal{M}_{\xx^\star}$ in order to define a bijection between $\cc \in \mathcal{M}_{\xx^\star}$ and $\PP$. If we take $\cc \in \mathcal{M}_{\xx^\star}$, there will be a distribution shift between the manifolds $\mathcal{M}_\cc$ considered at training time and those considered at inference time. To circumvent this issue, at training time we predict $\cc$ with RDKit and replace $\xx^\star$ with $\argmin_{\xx^\dagger \in \mathcal{M}_\cc} \rmsd(\xx^\star, \xx^\dagger)$ using the conformer matching procedure described in Jing et al. \cite{jing2022torsional}.

The above paragraph may be rephrased more intuitively as follows: during inference, the generative model docks a ligand structure generated by RDKit, keeping its non-torsional degrees of freedom (e.g., local structures) fixed. At training time, however, if we train the score model with the local structures of the ground truth pose, this will not correspond to the local structures seen at inference time. Thus, at training time, we replace the ground truth pose by generating a ligand structure with RDKit and aligning it to the ground truth pose while keeping the local structures fixed.

With these preliminaries, we now continue to the full procedures (Algorithms~\ref{alg:diffdock_training} and \ref{alg:inference_training}). The training and inference procedures of a score-based diffusion generative model on a Riemannian manifold consist of (1) sampling and regressing against the score of the diffusion kernel during training; and (2) sampling a geodesic random walk with the score as a drift term during inference \cite{de2022riemannian}. Because we have developed the diffusion process on $\PP$ but continue to provide the score model with elements in $\mathcal{M}_\cc \subset \mathbb{R}^{3n}$, the full training and inference procedures involve repeatedly interconverting between the two spaces using the bijection given by the seed conformation $\cc$.

\begin{algorithm}[h]
\caption{Training procedure (single epoch)}\label{alg:diffdock_training}
\KwIn{Training pairs $\{(\xx^\star, \yy)\}$, RDKit predictions $\{\cc\}$}
\ForEach{$\cc, \xx^\star, \yy$}{
    Let $\xx_0 \gets \argmin_{\xx^\dagger \in \mathcal{M}_\cc} \rmsd(\xx^\star, \xx^\dagger)$\;
    Compute $(\rr_0, R_0, \boldsymbol{\theta}_0) \gets A^{-1}_\cc(\xx_0)$\;
    Sample $t \sim \uni([0,1])$\;
    Sample $\Delta\rr, \Delta R, \Delta \boldsymbol{\theta}$ from diffusion kernels $p^\text{tr}_t(\cdot \mid 0), p^\text{rot}_t(\cdot \mid 0), p^\text{tor}_t(\cdot \mid 0)$\;
    Set $\rr_t \gets \rr_0 + \Delta \rr$\;
    Set $R_t \gets (\Delta R)R_0$\;
    Set $\boldsymbol{\theta}_t \gets \boldsymbol{\theta}_0 + \Delta\boldsymbol{\theta}\mod 2\pi$\;
    Compute $\xx_t \gets A((\rr_t, R_t, \boldsymbol{\theta}_t), \cc)$\;
    Predict scores $\alpha \in \mathbb{R}^3, \beta \in \mathbb{R}^3, \gamma \in \mathbb{R}^m = \ss(\xx_t, \cc, \yy, t)$ \;
    Take optimization step on loss $\mathcal{L} = \lnorm\alpha - \nabla p^\text{tr}_t(\Delta \rr \mid 0) \rnorm^2 + \lnorm\beta - \nabla p^\text{rot}_t(\Delta R \mid 0) \rnorm^2 + \lnorm\gamma - \nabla p^\text{tor}_t(\Delta\boldsymbol{\theta} \mid 0) \rnorm^2$
}
\end{algorithm}

\begin{algorithm}[h]
\caption{Inference procedure}\label{alg:inference_training}
\KwIn{RDKit prediction $\cc$, protein structure $\yy$ (both centered at origin)}
\KwOut{Sampled ligand pose $\xx_0$}
Sample $\boldsymbol{\theta}_N \sim \uni(SO(2)^m)$, $ R_N \sim \uni(SO(3))$, $\rr_N \sim \mathcal{N}(0, \sigma_\text{tor}^2(T))$\;
Let $\xx_N = A((\rr_N, R_N, \boldsymbol{\theta}_N), \cc)$\;
\For{n $\leftarrow N$ \KwTo $1$}{
    Let $t = n/N$ and $\Delta \sigma^2_\text{tr} = \sigma^2_\text{tr}(n/N) - \sigma^2_\text{tr}((n-1)/N)$ and similarly for $\Delta \sigma^2_\text{rot}, \Delta \sigma^2_\text{tor}$\;
    Predict scores $\alpha \in \mathbb{R}^3, \beta \in \mathbb{R}^3, \gamma \in \mathbb{R}^m \gets \ss(\xx_n, \cc, \yy, t)$\;
    Sample $\mathbf{z}_\text{tr}, \mathbf{z}_\text{rot}, \mathbf{z}_\text{tor}$ from $\mathcal{N}(0, \Delta\sigma^2_\text{tr}),\mathcal{N}(0, \Delta\sigma^2_\text{rot}), \mathcal{N}(0, \Delta\sigma^2_\text{tor}) $ respectively\;
    Set $\rr_{n-1} \gets \rr_0 + \Delta\sigma^2_\text{tr} \alpha + \mathbf{z}_\text{tr}$\;
    Set $R_{n-1} \gets \mathbf{R}(\Delta\sigma^2_\text{rot} \beta + \mathbf{z}_\text{rot})R_n)$\;
    Set $\boldsymbol{\theta}_{n-1} \gets \boldsymbol{\theta}_{n} + (\Delta\sigma^2_\text{tor} \gamma + \mathbf{z}_\text{tor})\mod 2\pi$\;
    Compute $\xx_{n-1} \gets A((\rr_{n-1}, R_{n-1}, \boldsymbol{\theta}_{n-1}), \cc)$\;
}
Return $\xx_0$\;
\end{algorithm}

However, as noted in the main text, the dependence of these procedures on the exact choice of $\cc$ is potentially problematic, as it suggests that at inference time, the model distribution may be different depending on the orientation and torsion angles of $\cc$. Simply removing the dependence of the score model on $\cc$ is not sufficient since the update steps themselves still occur on $\PP$ and require a choice of $\cc$ to be mapped to $\mathcal{M}_\cc$. However, notice that the update steps---in both training and inference---consist of (1) sampling the diffusion kernels at the origin; (2) applying these updates to the point on $\PP$; and (3) transferring the point on $\PP$ to $\mathcal{M}_\cc$ via $A(\cdot, \cc)$. Might it instead be possible to apply the updates to 3D ligand poses $\xx \in \mathcal{M}_\cc$ \emph{directly}?

It turns out that the notion of applying these steps to ligand poses ``directly'' corresponds to the formal notion of \emph{group action}. The operations $A_\text{tr}, A_\text{rot}, A_\text{tor}$ that we have already defined are formally group actions if they satisfy $A_{(\cdot)}(g_1g_2, \xx) = A(g_1, A(g_2, \xx))$. While true for $A_\text{tr}, A_\text{rot}$, this is not generally true for $A_\text{tor}$ if we take $SO(2)^m$ to be the direct product group; however, the approximation is increasingly good as the magnitude of the torsion angle updates decreases. If we then define $\PP$ to be the direct product group of its constituent groups, $A$ is a group action of $\PP$ on $\mathcal{M}_\cc$, as the operations of $A_\text{tr}, A_\text{rot}, A_\text{tor}$ commute and are (under the approximation) individually group actions.

The implication of $A$ being a group action can be seen as follows. Let $\delta = g_bg_a^{-1}$ be the update which brings $g_a \in \PP$ to $g_b \in \PP$ via left multiplication, and let $\xx_a, \xx_b$ be the corresponding ligand poses $A(g_a,\cc), A(g_b,\cc)$. Then
\begin{equation}
    \xx_b = A(g_bg_a^{-1}g_a, \cc) = A(\delta, \xx_a)
\end{equation}
which means that the updates $\delta$ can be applied directly to $\xx_a$ using the operation $A$. The training and inference procedures then become Algorithm \ref{alg:approx-training} and \ref{alg:approx-inference} below. The initial conformer $\cc$ is no longer used, except in the initial steps to define the manifold---to find the closest point to $\xx^\star$ in training, and to sample $\xx_N$ from the prior over $\mathcal{M}_\cc$ in inference.

Conceptually speaking, this procedure corresponds to ``forgetting'' the location of the origin element on $\mathcal{M}_\cc$, which is permissible because a change of the origin to some equivalent seed $\cc' \in \mathcal{M}_\cc$ merely translates---via right multiplication by $A^{-1}_\cc(\cc')$---the original and diffused data distributions on $\PP$, but does not cause any changes on $\mathcal{M}_\cc$ itself. The training and inference routines involve updates---formally left multiplications---to group elements, but as left multiplication on the group corresponds to group actions on $\mathcal{M}_\cc$, the updates can act on $\mathcal{M}_\cc$ directly, without referencing the origin $\cc$.

We find that the approximation of $A$ as a group action works quite well in practice and use Algorithms \ref{alg:approx-training} and \ref{alg:approx-inference} for all training and experiments discussed in the paper. Of course, disentangling the torsion updates from rotations in a way that makes $A_\text{tor}$ exactly a group action would justify the procedure further, and we regard this as a possible direction for future work.

\begin{algorithm}[h]
\caption{Approximate training procedure (single epoch)}\label{alg:approx-training}
\KwIn{Training pairs $\{(\xx^\star, \yy)\}$, RDKit predictions $\{\cc\}$}
\ForEach{$\cc, \xx^\star, \yy$}{
    Let $\xx_0 \gets \argmin_{\xx^\dagger \in \mathcal{M}_\cc} \rmsd(\xx^\star, \xx^\dagger)$\;
    Sample $t \sim \uni([0,1])$\;
    Sample $\Delta\rr, \Delta R, \Delta \boldsymbol{\theta}$ from diffusion kernels $p^\text{tr}_t(\cdot \mid 0), p^\text{rot}_t(\cdot \mid 0), p^\text{tor}_t(\cdot \mid 0)$\;
    Compute $\xx_t \gets A((\Delta\rr, \Delta R, \Delta \boldsymbol{\theta}), \xx_0)$\;
    Predict scores $\alpha \in \mathbb{R}^3, \beta \in \mathbb{R}^3, \gamma \in \mathbb{R}^m = \ss(\xx_t, \yy, t)$ \;
    Take optimization step on loss $\mathcal{L} = \lnorm\alpha - \nabla p^\text{tr}_t(\Delta \rr \mid 0) \rnorm^2 + \lnorm\beta - \nabla p^\text{rot}_t(\Delta R \mid 0) \rnorm^2 + \lnorm\gamma - \nabla p^\text{tor}_t(\Delta\boldsymbol{\theta} \mid 0) \rnorm^2$
}
\end{algorithm}

\begin{algorithm}[h]
\caption{Approximate inference procedure}\label{alg:approx-inference}
\KwIn{RDKit prediction $\cc$, protein structure $\yy$ (both centered at origin)}
\KwOut{Sampled ligand pose $\xx_0$}
Sample $\boldsymbol{\theta}_N \sim \uni(SO(2)^m)$, $ R_N \sim \uni(SO(3))$, $\rr_N \sim \mathcal{N}(0, \sigma_\text{tor}^2(T))$\;
Let $\xx_N = A((\rr_N, R_N, \boldsymbol{\theta}_N), \cc)$\;
\For{n $\leftarrow N$ \KwTo $1$}{
    Let $t = n/N$ and $\Delta \sigma^2_\text{tr} = \sigma^2_\text{tr}(n/N) - \sigma^2_\text{tr}((n-1)/N)$ and similarly for $\Delta \sigma^2_\text{rot}, \Delta \sigma^2_\text{tor}$\;
    Predict scores $\alpha \in \mathbb{R}^3, \beta \in \mathbb{R}^3, \gamma \in \mathbb{R}^m \gets \ss(\xx_n, \yy, t)$\;
    Sample $\mathbf{z}_\text{tr}, \mathbf{z}_\text{rot}, \mathbf{z}_\text{tor}$ from $\mathcal{N}(0, \Delta\sigma^2_\text{tr}),\mathcal{N}(0, \Delta\sigma^2_\text{rot}), \mathcal{N}(0, \Delta\sigma^2_\text{tor}) $ respectively\;
    Set $\Delta\rr \gets \rr_0 + \Delta\sigma^2_\text{tr} \alpha + \mathbf{z}_\text{tr}$\;
    Set $\Delta R \gets  \mathbf{R}(\Delta\sigma^2_\text{rot} \beta + \mathbf{z}_\text{rot})$\;
    Set $\Delta\boldsymbol{\theta} \gets \Delta\sigma^2_\text{tor} \gamma + \mathbf{z}_\text{tor}$\;
    Compute $\xx_{n-1} \gets A((\Delta\rr, \Delta R, \Delta\boldsymbol{\theta}), \xx_n)$\;
}
Return $\xx_0$\;
\end{algorithm}

\subsection{Architecture Details} \label{app:diffdock_architecture}

We use convolutional networks based on tensor products of irreducible representations (irreps) of $SO(3)$ \cite{thomas2018tensor} as architecture for both the score and confidence models. In particular, these are implemented using the \verb|e3nn| library \cite{e3nn}. Below, $\otimes_w$ refers to the spherical tensor product of irreps with path weights $w$, and $\oplus$ refers to normal vector addition (with possibly padded inputs). Features have multiple channels for each irrep.
Both the architectures can be decomposed into three main parts: embedding layer, interaction layers, and output layer. We outline each of them below.

\subsubsection{Embedding layer}

\paragraph{Geometric heterogeneous graph.} Structures are represented as heterogeneous geometric graphs with nodes representing ligand (heavy) atoms, receptor residues (located in the position of the $\alpha$-carbon atom), and receptor (heavy) atoms (only for the confidence model). Because of the high number of nodes involved, it is necessary for the graph to be sparsely connected for runtime and memory constraints. Moreover, sparsity can act as a useful inductive bias for the model, however, it is critical for the model to find the right pose that nodes that might have a strong interaction in the final pose to be connected during the diffusion process. Therefore, to build the radius graph, we connect nodes using cutoffs that are dependent on the types of nodes they are connecting:
\begin{enumerate}
    \item Ligand atoms-ligand atoms, receptor atoms-receptor atoms, and ligand atoms-receptor atoms interactions all use a cutoff of 5\AA{}, standard practice for atomic interactions. For the ligand atoms-ligand atoms interactions we also preserve the covalent bonds as separate edges with some initial embedding representing the bond type (single, double, triple and aromatic). For receptor atoms-receptor atoms interactions, we limit at 8 the maximum number of neighbors of each atom. Note that the ligand atoms-receptor atoms only appear in the confidence model where the final structure is already set.
    \item Receptor residues-receptor residues use a cutoff of 15 \AA{} with 24 as the maximum number of neighbors for each residue.
    \item Receptor residues-ligand atoms use a cutoff of $20+3*\sigma_{tr}$ \AA{} where $\sigma_{tr}$ represents the current standard deviation of the diffusion translational noise present in each dimension (zero for the confidence model). Intuitively this guarantees that with high probability, any of the ligands and receptors that will be interacting in the final pose the diffusion model converges to are connected in the message passing at every step.
    \item Finally, receptor residues are connected to the receptor atoms that form the corresponding amino-acid.
\end{enumerate}

\paragraph{Node and edge featurization.} For the receptor residues, we use the residue type as a feature as well as a language model embedding obtained from ESM2 \cite{Lin2022ESM2}. The ligand atoms have the following features: atomic number; chirality; degree; formal charge; implicit valence; the number of connected hydrogens; the number of radical electrons; hybridization type; whether or not it is in an aromatic ring; in how many rings it is; and finally, 6 features for whether or not it is in a ring of size 3, 4, 5, 6, 7, or 8. These are concatenated with sinusoidal embeddings of the diffusion time \cite{vaswani2017attention} and, in the case of edges, radial basis embeddings of edge length \cite{schutt2017schnet}. These scalar features of each node and edge are then transformed with learnable two-layer MLPs (different for each node and edge type) into a set of scalar features that are used as initial representations by the interaction layers.

\paragraph{Notation} Let $(\mathcal{V}, \mathcal{E})$ represent the heterogeneous graph, with $\mathcal{V} = (\mathcal{V}_\ell, \mathcal{V}_r)$ respectively ligand atoms and receptor residues (receptor atoms $\mathcal{V}_a$, present in the confidence model, are for simplicity not included here), and similarly $\mathcal{E}=(\mathcal{E}_{\ell\ell},\mathcal{E}_{\ell r},\mathcal{E}_{r\ell},\mathcal{E}_{rr})$. Let $\mathbf{h}_a$ be the node embeddings (initially only scalar channels) of node $a$, $e_{ab}$ the edge embeddings of $(a,b)$, and $\mu(r_{ab})$ radial basis embeddings of the edge length. Let $\sigma_{tr}^2$, $\sigma_{rot}^2$, and $\sigma_{tor}^2$ represent the variance of the diffusion kernel in each of the three components: translational, rotational and torsional.

\subsubsection{Interaction layers}

At each layer, for every pair of nodes in the graph, we construct messages using tensor products of the current node features with the spherical harmonic representations of the edge vector. The weights of this tensor product are computed based on the edge embeddings and the \emph{scalar} features---denoted $\mathbf{h}^0_a$---of the outgoing and incoming nodes. The messages are then aggregated at each node and used to update the current node features. For every node $a$ of type $t_a$:
\begin{equation}
\begin{gathered}
\mathbf{h}_a \leftarrow \mathbf{h}_a \underset{t\in \{ \ell, r\}}{\oplus} \textsc{BN}^{(t_a,t)} \Bigg( \frac{1}{|\mathcal{N}_a^{(t)}|}\sum_{b \in \mathcal{N}_a^{(t)}} Y(\hat r_{ab}) \; \otimes_{\psi_{ab}} \; \mathbf{h}_b \Bigg) \\
\text{with} \; \psi_{ab} = \Psi^{(t_a,t)}(e_{ab}, \mathbf{h}^0_a, \mathbf{h}^0_b)
\end{gathered}
\end{equation}
Here, $t$ indicates an arbitrary node type, $\mathcal{N}_a^{(t)} = \{b \mid (a,b) \in \mathcal{E}_{t_a t}\}$ the neighbors of $a$ of type $t$, $Y$ are the spherical harmonics up to $\ell=2$, and $\textsc{BN}$ the (equivariant) batch normalisation. The orders of the output are restricted to a maximum of $\ell=1$. All learnable weights are contained in $\Psi$, a dictionary of MLPs, which uses different sets of weights for different edge types (as an ordered pair so four types for the score model and nine for the confidence) and different rotational orders. 

\subsubsection{Output layer} 

The ligand atom representations after the final interaction layer are used in the output layer to produce the required outputs. This is where the score and confidence architecture differ significantly. On one hand, the score model's output is in the tangent space $T_\rr \mathbb{T}_3 \oplus T_R SO(3) \oplus T_{\boldsymbol{\theta}} SO(2)^m$. This corresponds to having two $SE(3)$-equivariant output vectors representing the translational and rotational score predictions and $m$ $SE(3)$-invariant output scalars representing the torsional score. For each of these, we design final tensor-product convolutions inspired by classical mechanics. On the other hand, the confidence model outputs a single $SE(3)$-invariant scalar representing the confidence score. Below we detail how each of these outputs is generated.

\paragraph{Translational and rotational scores.} The translational and rotational score intuitively represent, respectively, the linear acceleration of the center of mass of the ligand and the angular acceleration of the rest of the molecule around the center. Considering the ligand as a rigid object and given a set of forces and masses at each ligand, a tensor product convolution between the atoms and the center of mass would be capable of computing the desired quantities. Therefore, for each of the two outputs, we perform a convolution of each of the ligand atoms with the (unweighted) center of mass $c$. \begin{equation}
\begin{gathered}
\mathbf{v} \leftarrow \frac{1}{|\mathcal{V}_\ell|}\sum_{a \in \mathcal{V}_\ell} Y(\hat r_{ca}) \; \otimes_{\psi_{ca}} \; \mathbf{h}_a \\
\text{with} \; \psi_{ca} = \Psi(\mu(r_{ca}), \mathbf{h}^0_a)
\end{gathered}
\end{equation}
We restrict the output of $\mathbf{v}$ to a single odd and a single even vectors (for each of the two scores). Since we are using coarse-grained representations of the protein, the score will neither be even nor odd; therefore, we sum the even and odd vector representations of $\mathbf{v}$. Finally, the magnitude (but not direction) of these vectors is adjusted with an MLP taking as input the current magnitude and the sinusoidal embeddings of the diffusion time. Finally, we (revert the normalization) by multiplying the outputs by $1/\sigma_{tr}$ for the translational score and by the expected magnitude of a score in $SO(3)$ with diffusion parameter $\sigma_{rot}$ (precomputed numerically). 

\paragraph{Torsional score. } To predict the $m$ $SE(3)$-invariant scalar describing the torsional score, we use a pseudotorque layer similar to that of Jing et al. \cite{jing2022torsional}. This predicts a scalar score $\delta\tau$ for each rotatable bond from the per-node outputs of the atomic convolution layers. For rotatable bond $g = (g_0, g_1)$ and $b \in \mathcal{V_\ell}$, let $r_{gb}$ and $\hat{r}_{gb}$ be the magnitude and direction of the vector connecting the center of bond $g$ and $b$. We construct a convolutional filter $T_g$ for each bond $g$ from the tensor product of the spherical harmonics with a $\ell=2$ representation of the {bond axis} $\hat{r}_g$:\footnote{Since the parity of the $\ell=2$ spherical harmonic is even, this representation is indifferent to the choice of bond direction.}
\begin{equation}
T_g(\hat{r}) := Y^2(\hat{r}_{g}) \otimes Y(\hat{r})
\end{equation}
$\otimes$ is the full (i.e., unweighted) tensor product as described in Geiger et al. \cite{geiger2022e3nn}, and the second term contains the spherical harmonics up to $\ell=2$ (as usual). This filter (which contains orders up to $\ell=3$) is then used to convolve with the representations of every neighbor on a radius graph:
\begin{equation}
\begin{gathered}
\mathcal{E}_\tau = \{ (g,b) \mid g \text{ a rotatable bond}, b \in \mathcal{V}_\ell\} \\ 
\quad e_{gb} = \Upsilon^{(\tau)}(\mu(r_{gb})) \quad \forall (g,b)\in\mathcal{E}_\tau\\
\mathbf{h}_g = \frac{1}{|\mathcal{N}_g|} \sum_{b \in \mathcal{N}_g} T_{g}(\hat r_{gb}) \otimes_{\gamma_{gb}} \mathbf{h}_b \\
\text{with} \; \gamma_{gb} = \Gamma(e_{gb}, \mathbf{h}_{b}^0, \mathbf{h}_{g_0}^{0} + \mathbf{h}_{g_1}^0)
\end{gathered}\end{equation}
Here, $\mathcal{N}_g = \{b \mid (g,b) \in \mathcal{E}_{\tau} \}$ and $\Upsilon^{(\tau)}$ and $\Gamma$ are MLPs with learnable parameters. Since unlike Jing et al. \cite{jing2022torsional}, we use coarse-grained representations the parity also here is neither even nor odd, the irreps in the output are restricted to arrays both even $\mathbf{h}'_g$ and odd $\mathbf{h}''_g$ scalars. Finally, we produce a single scalar prediction for each bond:
\begin{equation}
\delta\tau_g = \Pi(\mathbf{h}'_g+\mathbf{h}''_g)
\end{equation}
where $\Pi$ is a two-layer MLP with $\tanh$ nonlinearity and no biases. This is also ``denormalized" by multiplying by the expected magnitude of a score in $SO(2)$ with diffusion parameter $\sigma_{tor}$.

\paragraph{Confidence output.} The single $SE(3)$-invariant scalar representing the confidence score output is instead obtained by concatenating the even and odd final scalar representation of each ligand atom, averaging these feature vectors among the different atoms, and finally applying a three layers MLP (with batch normalization).

\chapter{Further Discussion}

\section{Chapter 3: Torsional Diffusion}  \label{app:discuss}

\subsection{RDKit local structures} \label{app:discuss_rdkit}

In this section, we provide empirical justification for the claim that cheminformatics methods like RDKit already provide accurate local structures. It is well known in chemistry that bond lengths and angles take on a very narrow range of values due to strong energetic constraints. However, it is not trivial to empirically evaluate the claim due to the difficulty in defining a distance measure between a pair of local structures. In this section, we will employ two sets of observations: marginal error distributions and matched conformer RMSD.

\begin{figure}[h!]
    \centering
    \includegraphics[width=0.49\textwidth]{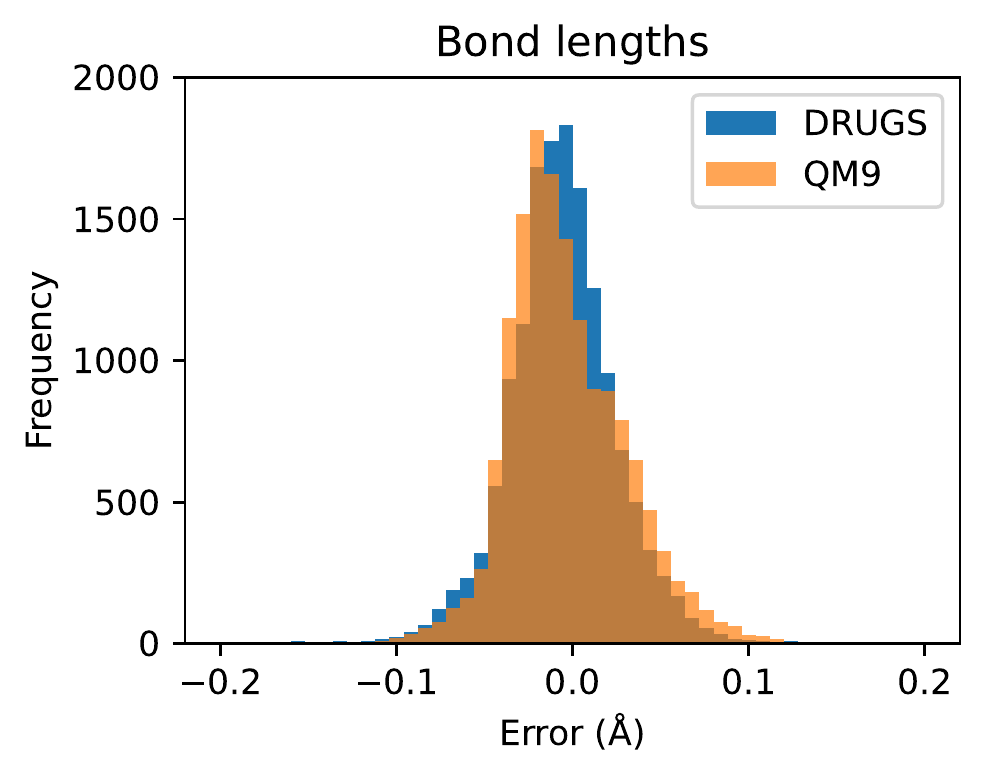}
    \includegraphics[width=0.49\textwidth]{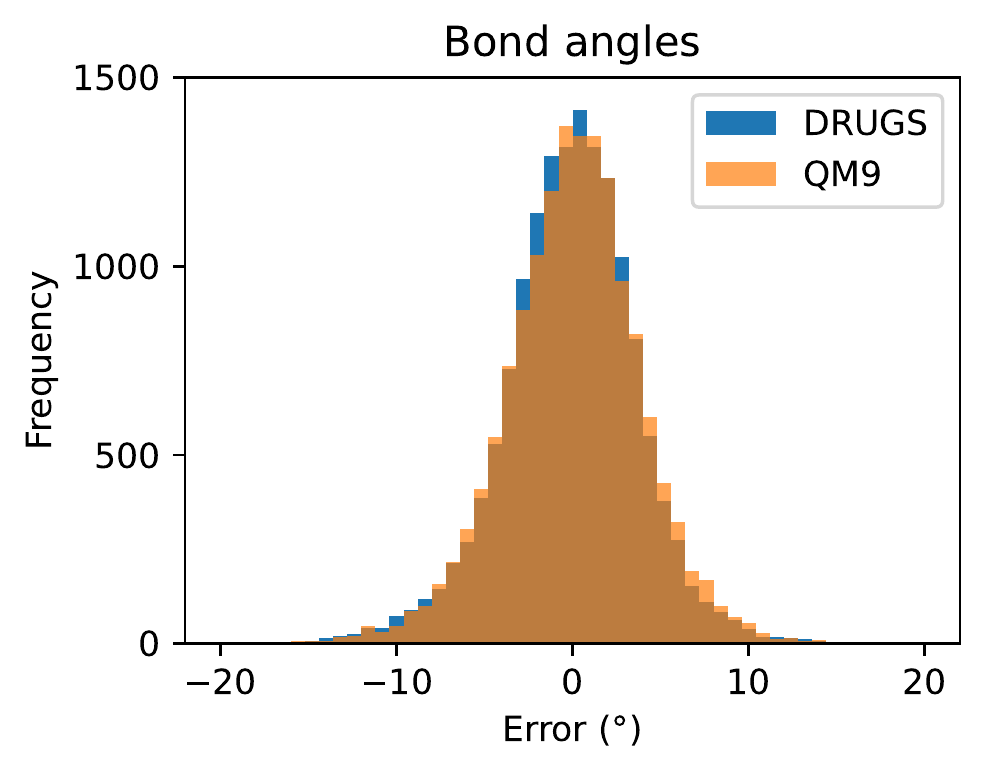}
    \caption{Histogram of the errors in 15000 predicted bond lengths and angles from randomly sampled molecules in GEOM-DRUGS and GEOM-QM9.}
    \label{fig:marginal_local_errors}
\end{figure}

\paragraph{Marginal error distributions} We examine the distribution of errors of the bond lengths and angles in a random RDKit conformer relative to the corresponding lengths and angles in a random CREST conformer (Figure \ref{fig:marginal_local_errors}). The distributions are narrow and uni-modal distributions around zero, with a RMSE of 0.03 \AA \, for bond lengths and 4.1\textdegree\, for bond angles on GEOM-DRUGS. Comparing DRUGS and QM9, the error distribution does not depend on the size of the molecule. Although it is difficult to determine how these variations will compound or compensate for each other in the global conformer structure, the analysis demonstrates that bond lengths and angles have little flexibility (i.e., no strong variability among conformers) and are accurately predicted by RDKit.

\paragraph{Matched conformer RMSD} We can more rigorously analyze the quality of a local structure $\hat{L}$ with respect to a given reference conformer $C$ by computing the minimum RMSD that can be obtained by combining $\hat{L}$ with optimal torsion angles. That is, we consider the RMSD distance of $C$ to the closest point on the manifold of possible conformers with local structure $\hat{L}$: $\rmsd_{\min}(C, \hat L) := \min_\tau \rmsd(C, \hat{C})$ where $\hat{C} = (\hat{L}, \tau)$. 

Conveniently, $\hat{C}$ is precisely the output of the differential evolution in Appendix \ref{app:matching}. Thus, the average RSMD reported in the last row of Table \ref{tab:matching_results} is the expected $\rmsd_{\min}$ of an optimal assignment of RDKit local structures to ground-truth conformers. This distance---0.324 \AA \, on GEOM-DRUGS---is significantly smaller than the error of the current state-of-the-art conformer generation methods. Further, it is only slightly larger than the average $\rmsd_{\min}$ of 0.284 \AA \, resulting from matching a ground truth conformer to the local structure of another randomly chosen ground truth conformer, which provides a measure of the variability among ground truth local structures. These observations support the claim that the accuracy of existing approaches on drug-like molecules can be significantly improved via better conditional sampling of torsion angles.

\subsection{Limitations of torsional diffusion}



As demonstrated in Section \ref{sec:experiments}, torsional diffusion significantly improves the accuracy and reduces the denoising runtime for conformer generation. However, torsional diffusion also has a number of limitations that we will discuss in this section.

\paragraph{Conformer generation} The first clear limitation is that the error that torsional diffusion can achieve is lower bounded by the \textit{quality of the local structure} from the selected cheminformatics method. As discussed in Appendix~\ref{app:discuss_rdkit}, this corresponds to the mean RMSD obtained after conformer matching, which is 0.324 \AA \, with RDKit local structures on DRUGS. Moreover, due to the the local structure \textit{distributional shift}, conformer matching (or another method bridging the shift) is required to generate the training set. However, the resulting conformers are not the minima of the (unconditional or even conditional) potential energy function. Thus, the learning task becomes less physically interpretable and potentially more difficult; empirically we observe this clearly in the training and validation score-matching losses. We leave to future work the exploration of \textit{relaxations} of the rigid local structures assumption in a way that would still leverage the predominance of torsional flexibility in molecular structures, while at the same time allowing some flexibility in the independent components.

\paragraph{Rings} The largest source of flexibility in molecular conformations that is not directly accounted for by torsional diffusion is the variability in \textit{ring conformations}. Since the torsion angles at bonds inside cycles cannot be independently varied, our framework treats them as part of the local structure. Therefore, torsional diffusion relies on the local structure sampler $p_G(L)$ to accurately model cycle conformations. Although this is true for a large number of relatively small rings (especially aromatic ones) present in many drug-like molecules, it is less true for puckered rings, fused rings, and larger cycles. In particular, torsional diffusion does not address the longstanding difficulty that existing cheminformatics methods have with macrocycles---rings with 12 or more atoms that have found several applications in drug discovery \cite{driggers2008exploration}. We hope, however, that the idea of restricting diffusion processes to the main sources of flexibility will motivate future work to define diffusion processes over cycles conformations combined with free torsion angles.

\paragraph{Boltzmann generation} With Boltzmann generators we are typically interested in sampling the Boltzmann distribution over the entire (Euclidean) conformational space $p_G(C)$. However, the procedure detailed in Section \ref{sec:energy} generates (importance-weighted) samples from the Boltzmann distribution \textit{conditioned} on a given local structure $p_G(C \mid L)$. To importance sample from the full Boltzmann distribution $p_G(C)$, one would need a model $p_G(L)$ over local structures that also provides exact likelihoods. This is not the case with RDKit or, to the best of our knowledge, other existing models, and therefore an interesting avenue for future work.

\chapter{Experimental Details}

\section{Chapter 3: Torsional Diffusion} \label{app:torsional_exp_details}

\subsection{Dataset details}

\paragraph{Splits} We follow the data processing and splits from Ganea et al. \cite{ganea2021geomol}. The splits are random with train/validation/test of 243473/30433/1000 for GEOM-DRUGS and 106586/13323/1000 for GEOM-QM9. GEOM-XL consists of only a test split (since we do not train on it), which consists of all 102 molecules in the MoleculeNet dataset with at least 100 atoms. For all splits, the molecules whose CREST conformers all have a canonical SMILES different from the SMILES of the molecule (meaning a reacted conformer), or that cannot be handled by RDKit, are filtered out. 

\paragraph{Dataset statistics} As can be seen in Figure \ref{fig:data_stats}, the datasets differ significantly in molecule size as measured by number of atoms or rotatable bonds. Particularly significant is the domain shift between DRUGS and XL, which we leverage in our experiments by testing how well models trained on DRUGS generalize to XL.

\begin{figure}[h!]
    \centering
    \includegraphics[width=0.49\textwidth]{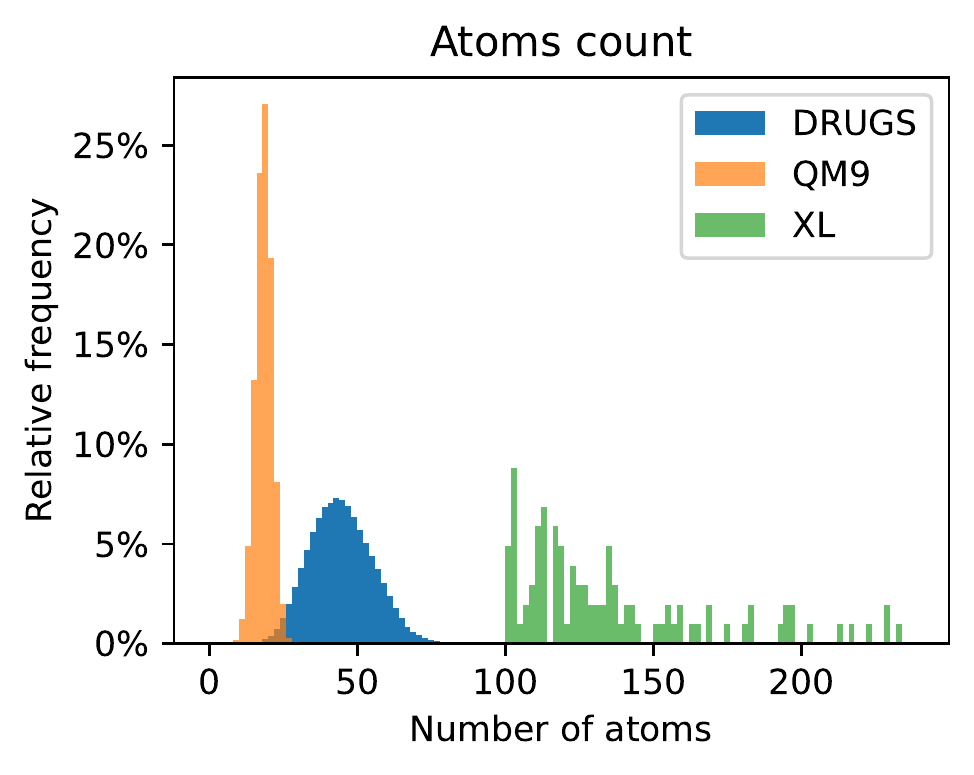}
    \includegraphics[width=0.49\textwidth]{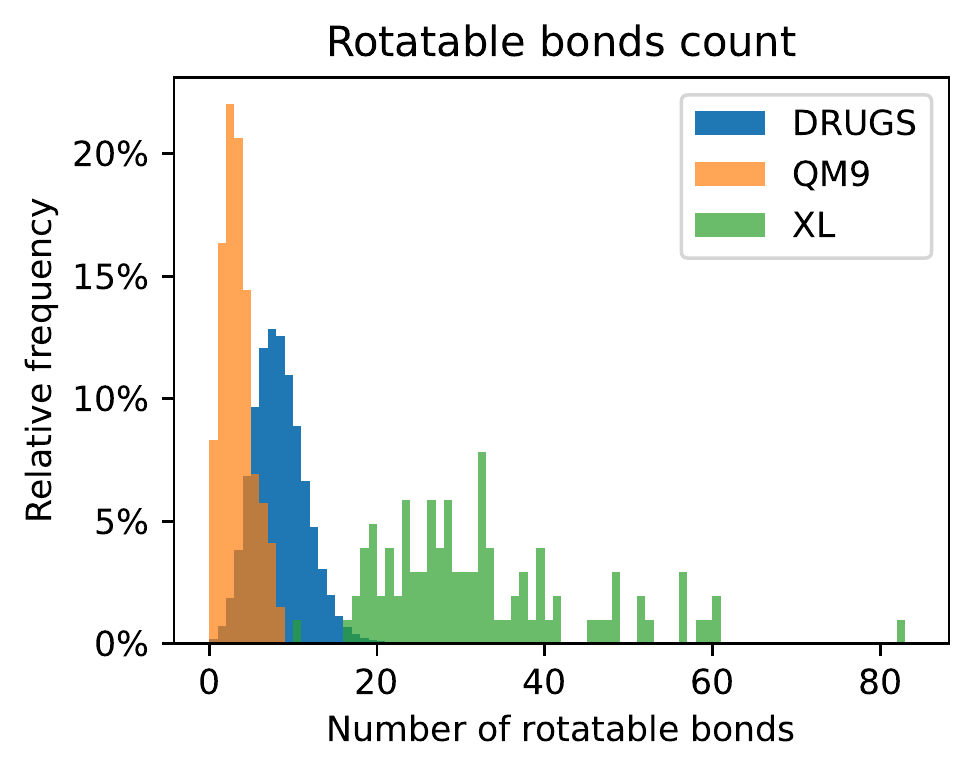}
    \caption{Statistics about the atoms and rotatable bonds counts in the three different datasets.}
    \label{fig:data_stats}
\end{figure}

\paragraph{Boltzmann generator} The torsional Boltzmann generator described in Section \ref{sec:boltzmann} is trained and tested on molecules from GEOM-DRUGS with 3--7 rotatable bonds. The training (validation) set consists of 10000 (400) such randomly selected molecules from the DRUGS training (validation) set. The test set consists of all the 453 molecules present in the DRUGS test set with 3--7 rotatable bonds.

\subsection{Training and tuning details}

\paragraph{Conformer generation} For conformer ensemble generation on GEOM-DRUGS, the torsional diffusion models were trained on NVIDIA RTX A6000 GPUs for 250 epochs with the Adam optimizer (taking from 4 to 11 days on a single GPU). The hyperparameters tuned on the validation set were (in bold the value that was chosen): initial learning rate (0.0003, \textbf{0.001}, 0.003), learning rate scheduler patience (5, \textbf{20}), number of layers (2, \textbf{4}, 6), maximum representation order (1st, \textbf{2nd}), $r_{\max}$ (\textbf{5\AA}, 7\AA, 10\AA) and batch norm (\textbf{True}, False). Finally, for low-temperature sampling, the three relevant parameters $\lambda$, $\psi$, and $\sigma_d$ were chosen with an inference sweep with Bayesian optimizer. All the other default hyperparameters used can be found in the attached code. For GEOM-XL the same trained model was used; for GEOM-QM9 a new model with the same hyperparameters was trained.

\paragraph{Torsional Boltzmann generators} We start from a torsional diffusion model pretrained on GEOM-DRUGS, and train for 250 epochs (6-9 days on a single GPU). A separate model is trained for every temperature. The resampling procedure with 5 steps is run for every molecule every $\max(5, ESS)$ epochs, where $ESS$ is computed for the current set of 32 samples. The only hyperparameter tuned (at temperature 300K) is $\sigma_{\min}$, the noise level at which to stop the reverse diffusion process.


We further improve the training procedure of torsional Boltzmann generators by implementing \textit{annealed training}. The Boltzmann generator for some temperature $T$ is trained at epoch $k$ by using the Boltzmann distribution at temperature $T' = T + (3000 - T)/k$ as the target distribution for that epoch. Intuitively, this trains the model at the start with a smoother distribution that is easier to learn, which gradually transforms into the desired distribution.

\subsection{Evaluation details}

\paragraph{Ensemble RMSD} As evaluation metrics for conformer generation, \cite{ganea2021geomol} and following works have used the so-called Average Minimum RMSD (AMR) and Coverage (COV) for Precision (P) and Recall (R) measured when generating twice as many conformers as provided by CREST. For $K=2L$ let $\{C^*_l\}_{l \in [1, L]}$ and $\{C_k\}_{k \in [1, K]}$ be respectively the sets of ground truth and generated conformers:
\begin{equation}
\begin{aligned}
\text{COV-R} &:= \frac{1}{L}\, \bigg\lvert\{ l \in [1..L]: \exists k \in [1..K], \rmsd(C_k, C^*_l) < \delta \, \bigg\rvert\\
\text{AMR-R} &:= \frac{1}{L} \sum_{l \in [1..L]}  \min_{ k\in [1..K]} \rmsd(C_k, C^*_l)
\end{aligned}
\end{equation}

where $\delta$ is the coverage threshold. The precision metrics are obtained by swapping ground truth and generated conformers.

In the XL dataset, due to the size of the molecules, we compute the RMSDs without testing all possible symmetries of the molecules, therefore the obtained RMSDs are an upper bound, which we find to be very close in practice to the permutation-aware RSMDs.

\paragraph{Runtime evaluation} We benchmark the methods on CPU (Intel i9-9920X) to enable comparison with RDKit. The number of threads for RDKit, \verb|numpy|, and \verb|torch| is set to 8. We select 10 molecules at random from the GEOM-DRUGS test set and generate 8 conformers per molecule using each method. Script loading and model loading times are not included in the reported values.

\paragraph{Boltzmann generator} To evaluate how well the torsional Boltzmann generator and the AIS baselines sample from the conditional Boltzmann distribution, we report their median effective sample size (ESS) \cite{kish1965survey} given the importance sampling weights $w_i$ of 32 samples for each molecule:
\begin{equation}
ESS = \frac{\big( \sum_{i=1}^{32} w_i \big)^2}{\sum_{i=1}^{32} w_i^2}
\end{equation}
This approximates the number of independent samples that would be needed from the target Boltzmann distribution to obtain an estimate with the same variance as the one obtained with the importance-weighted samples.

For the baseline annealed importance samplers, the transition kernel is a single Metropolis-Hastings step with the wrapped normal distributions on $\mathbb{T}^m$ as the proposal. We run with a range of kernel variances: $0.25, 0.5, 0.3, 0.5, 0.75, 1, 1.5. 2$; and report the best result. We use an exponential annealing schedule; i.e., $p_n \propto p_0^{1-n/N}p_N^{n/N}$ where $p_0$ is the uniform distribution and $p_N$ is the target Boltzmann density.

\section{Chapter 4: DiffDock} 

In general, all our code is available at \url{https://github.com/gcorso/DiffDock}. This includes running the baselines, runtime calculations, training and inference scripts for \textsc{DiffDock}, the PDB files of \textsc{DiffDock}'s predictions for all 363 complexes of the test set, and visualization videos of the reverse diffusion.

\subsection{Experimental Setup} \label{appx:experimental_setup}
\paragraph{Data.} We use the molecular complexes in PDBBind \cite{liu2017PDBBind} that were extracted from the Protein Data Bank (PDB) \cite{berman2003PDB}. We employ the time-split of PDBBind proposed by St\"ark et al. \cite{equibind} with 17k complexes from 2018 or earlier for training/validation and 363 test structures from 2019 with no ligand overlap with the training complexes. This is motivated by the further adoption of the same split \cite{Lu2022TankBind} and the critical assessment of PDBBind splits by Volkov et al. \cite{Volkov2022PDBBindSplits} who favor temporal splits over artificial splits based on molecular scaffolds or protein sequence/structure similarity. For completeness, we also report the results on protein sequence similarity splits in Appendix~\ref{app:diffdock_furtherres}. We download the PDBBind data as it is provided by EquiBind from \texttt{https://zenodo.org/record/6408497}. These files were preprocessed with Open Babel before adding any potentially missing hydrogens, correcting hydrogens, and correctly flipping histidines with the \texttt{reduce} library available at \texttt{https://github.com/rlabduke/reduce}.

\paragraph{Metrics.} To evaluate the generated complexes, we compute the heavy-atom RMSD between the predicted and the crystal ligand atoms when the protein structures are aligned. To account for permutation symmetries in the ligand, we use the symmetry-corrected RMSD of sPyRMSD \cite{spyrmsd2020}. For these RMSD values, we report the percentage of predictions that have an RMSD that is less than 2\AA{}.  We choose 2\AA{} since much prior work considers poses with an RMSD less that 2\AA{} as ``good" or successful \cite{Alhossary2015QuickVina2, Hassan2017QVinaW, mcnutt2021gnina}. This is a chemically relevant metric, unlike the mean RMSD as detailed in Section~\ref{sec:generative_modeling} since for further downstream analyses such as determining function changes, a prediction is only useful below a certain RMSD error threshold. Less relevant metrics such as the mean RMSD are provided in Appendix~\ref{app:diffdock_furtherres}.

\subsection{Implementation details} \label{appx:hyperparameters}
\paragraph{Training Details.} We use Adam \cite{kingma2014adam} as optimizer for the diffusion and the confidence model. The diffusion model with which we run inference uses the exponential moving average of the weights during training, and we update the moving average after every optimization step with a decay factor of 0.999. The batch size is 16. We run inference with 20 denoising steps on 500 validation complexes every 5 epochs and use the set of weights with the highest percentage of RMSDs less than 2\AA{} as the final diffusion model. We trained our final score model on four 48GB RTX A6000 GPUs for 850 epochs (around 18 days). The confidence model is trained on a single 48GB GPU. For inference, only a single GPU is required. Scaling up the model size seems to improve performance and future work could explore whether this trend continues further. For the confidence model uses the validation cross-entropy loss is used for early stopping and training only takes 75 epochs. Code to reproduce all results including running the baselines or to perform docking calculations for new complexes is available at \url{https://github.com/gcorso/DiffDock}.

\paragraph{Hyperparameters.} For determining the hyperparameters of \textsc{DiffDock}'s score model, we trained smaller models (3.97 million parameters) that fit into 48GB of GPU RAM before scaling it up to the final model (20.24 million parameters) that was trained on four 48GB GPUs. The smaller models were only trained for 250 or 300 epochs, and we used the fraction of predictions with an RMSD below 2\AA{} on the validation set to choose the hyperparameters. Table~\ref{tab:hyperparameters} shows the main hyperparameters we tested and the final parameters of the large model we use to obtain our results. We only did little tuning for the minimum and maximum noise levels of the three components of the diffusion. For the translation, the maximum standard deviation is 19\AA{}. We also experimented with second-order features for the Tensor Field Network but did not find them to help. The complete set of hyperparameters next to the main ones we describe here can be found in our repository. From the start we have divided the inference schedule into 20 time steps, the effect of using more or fewer steps for inference is discussed in Appendix \ref{app:ablations}. As we found that the large-scale diffusion models overfit the training data on low-levels of noise we stop the diffusion early after 18 steps. At the last diffusion step no noise is added.

The confidence model has 4.77 million parameters and the parameters we tried are in Table~\ref{tab:hyperparameters_confidence_model}. We generate 28 different training poses for the confidence model (for which it predicts whether or not they have an RMSD below 2\AA{}) with a small score model. The score model used to generate the training samples for the confidence model does not need to be the same one that the model will be applied to at inference time.

\begin{table}[htpb]
\caption[SearchSpace]{The hyperparameter options we searched through for \textsc{DiffDock}'s score model. This was done with small models before scaling up to a large model. The parameters shown here that impact model size (bottom half of the table) are those of the large model. The final parameters for the large \textsc{DiffDock} model are marked in \textbf{bold}.}
\label{tab:hyperparameters}
\begin{center}
\begin{small}
\begin{tabular}{lc}
\toprule
Parameter & Search Space  \\    
\midrule
using all atoms for the protein graph & Yes, \textbf{No}\\
using language model embeddings & \textbf{Yes}, No\\
using ligand hydrogens & Yes, \textbf{No}\\
using exponential moving average & \textbf{Yes}, {No}\\
maximum number of neighbors in protein graph & 10, 16, \textbf{24}, 30\\
maximum neighbor distance in protein graph & 5, 10, \textbf{15}, 18, 20, 30\\
distance embedding method & \textbf{sinusoidal}, gaussian \\
dropout & 0, 0.05, \textbf{0.1}, 0.2 \\
learning rates & 0.01, 0.008, 0.003, \textbf{0.001}, 0.0008, 0.0001\\
batch size & 8, \textbf{16}, 24\\
non linearities & \textbf{ReLU} \\
\midrule
convolution layers & 6 \\
number of scalar features &  48 \\
number of vector features &  10 \\
\bottomrule
\end{tabular}
\end{small}
\end{center}
\vskip -0.1in
\end{table}

\begin{table}[htpb]
\caption[SearchSpace]{The hyperparameter options we searched through for \textsc{DiffDock}'s confidence model. The final parameters are marked in \textbf{bold}.}
\label{tab:hyperparameters_confidence_model}
\begin{center}
\begin{small}
\begin{tabular}{lc}
\toprule
Parameter & Search Space  \\    
\midrule
using all atoms for the protein graph & \textbf{Yes}, {No}\\
using language model embeddings & \textbf{Yes}, No\\
using ligand hydrogens & \textbf{No}\\
using exponential moving average & \textbf{No}\\
maximum number of neighbors in protein graph & 10, 16, \textbf{24}, 30\\
maximum neighbor distance in protein graph & 5, 10, \textbf{15}, 18, 20, 30\\
distance embedding method & \textbf{sinusoidal} \\
dropout & 0, 0.05, \textbf{0.1}, 0.2 \\
learning rates & 0.03, 0.003, \textbf{0.0003}, 0.00008\\
batch size & \textbf{16}\\
non linearities & \textbf{ReLU} \\
\midrule
convolution layers & 5 \\
number of scalar features &  24 \\
number of vector features &  6 \\
\bottomrule
\end{tabular}
\end{small}
\end{center}
\vskip -0.1in
\end{table}

\paragraph{Runtime.} Similar to all the baselines, the preprocessing times are not included in the reported runtimes. For \textsc{DiffDock} the preprocessing time is negligible compared to the rest of the inference time where multiple reverse diffusion steps are performed. Preprocessing mainly consists of a forward pass of ESM2 to generate the protein language model embeddings, RDKit's conformer generation, and the conversion of the protein into a radius graph. We measured the inference time when running on an RTX A100 40GB GPU when generating 10 samples. The runtimes we report for generating 40 samples and ranking them are extrapolations where we multiply the runtime for 10 samples by 4. In practice, this only gives an upper bound on the runtime with 40 samples, and the actual runtime should be faster.

\paragraph{Statistical intervals and significance.} In order to provide estimates about the variance of the performance measures reported, in Tables \ref{tab:results_main} and \ref{tab:results_unseen} we report the standard deviation of the performance under 1000 independent resamples with replacement of the test set (bootstrapping). For determining the statistical significance of the superior performance of our method we used the paired two-sample t-test implemented in \texttt{scipy} \cite{virtanen2020scipy}. For Autodock Vina, we took the results from Lu et al. \cite{Lu2022TankBind}, and therefore we were not able to run estimates of the intervals. 

\subsection{Baselines details} \label{appx:baseline_details}
Our scripts to run the baselines are available at \url{https://github.com/gcorso/DiffDock}. For obtaining the runtimes of the different methods, we always used 16 CPUs except for GLIDE as explained below. The runtimes do not include any preprocessing time for any of the methods. For instance, the time that it takes to run P2Rank is not included for TANKBind, and P2Rank + SMINA/GNINA since this receptor preparation only needs to be run once when docking many ligands to the same protein. In applications where different receptors  are processed (such as reverse screening), the experienced runtimes for TANKBind and P2Rank + SMINA/GNINA will thus be higher.

We note that for all these baselines we have used the default hyperparameters unless specified differently below. Modifying some of these hyperparameters (for example the scoring method's exhaustiveness) will change the runtime and performance tradeoffs (e.g., if the searching routine is left running for longer then better poses are likely to be found), however, we leave these analyses to future work.

\paragraph{SMINA} \cite{koes2013smina} improves Autodock Vina with a new scoring-function and user-friendliness. The default parameters were used with the exception of setting \texttt{--num\_modes 10}. To define the search box, we use the automatic box creation option around the receptor with the default buffer of 4\AA{} on all 6 sides.

\paragraph{GNINA} \cite{mcnutt2021gnina} builds on SMINA by additionally using a learned 3D CNN for scoring. The default parameters were used with the exception of setting \texttt{--num\_modes 10}. To define the search box, we use the automatic box creation option around the receptor with the default buffer of 4\AA{} on all 6 sides.

\paragraph{QuickVina-W} \cite{Hassan2017QVinaW} extends the speed-optimized QuickVina 2 \cite{Alhossary2015QuickVina2} for blind docking. We reuse the numbers from St\"ark et al. \cite{equibind} which had used the default parameters except for increasing the exhaustiveness to 64. The files were preprocessed with the \texttt{prepare\_ligand4.py} and \texttt{prepare\_receptor4.py} scripts of the MGLTools library as it is recommended by the QuickVina-W authors.

\paragraph{Autodock Vina} \cite{trott2010autodock} is older docking software that does not perform as well as the other more recent search-based baselines, but it is a well-established tool. We reuse the numbers reported in TANKBind \cite{Lu2022TankBind}

\paragraph{GLIDE} \cite{halgren2004glide} is a strong heavily used commercial docking tool. These methods all use biophysics based scoring-functions. We reuse the numbers from St\"ark et al. \cite{equibind} since we do not have a license. Running GLIDE involves running their command line tools for preprocessing the structures into the files required to run the docking algorithm. As explained by St\"ark et al. \cite{equibind}, the very high runtime of GLIDE with 1405 seconds per complex is partially explained by the fact that GLIDE only uses a single thread when processing a complex. This fact and the parallelization options of GLIDE are explained here \url{https://www.schrodinger.com/kb/1165}. With GLIDE, it is possible to start data-parallel processes that compute the docking results for a different complex in parallel. However, each process also requires a separate software license.

\paragraph{EquiBind} \cite{equibind}, we reuse the numbers reported in their paper and generate the predictions that we visualize with their code at \url{https://github.com/HannesStark/EquiBind}.

\paragraph{TANKBind} \cite{Lu2022TankBind}, we use the code associated with the paper at \url{https://github.com/luwei0917/TankBind}. The runtimes do not include the runtime of P2Rank or any preprocessing steps. In Table~\ref{tab:results_main} we report two runtimes (0.72/2.5 sec). The first is the runtime when making only the top-1 prediction and the second is for producing the top-5 predictions. Producing only the top-1 predictions is faster since TANKBind produces distance predictions that need to be converted to coordinates with a gradient descent algorithm and this step only needs to be run once for the top-1 prediction, while it needs to be run 5 times for producing 5 outputs. To obtain our runtimes we run the forward pass of TANKBind on GPU (0.28 seconds) with the default batch size of 5 that is used in their GitHub repository. To compute the time the distances-to-coordinates conversion step takes, we run the file \texttt{baseline\_run\_tankbind\_parallel.sh} in our repository, which parallelizes the computation across 16 processes which we also run on an Intel Xeon Gold 6230 CPU. This way, we obtain 0.44 seconds runtime for the conversion step of the top-1 prediction (averaged over the 363 complexes of the testset). 

\paragraph{P2Rank} \cite{krivak2018p2rank}, is a tool that predicts multiple binding pockets and ranks them. We use it for running TANKBind and P2Rank + SMINA/GNINA. We download the program from \url{https://github.com/rdk/p2rank} and run it with its default parameters.

\paragraph{EquiBind + SMINA/GNINA} \cite{equibind}, the bounding box in which GNINA/SMINA searches for binding poses is constructed around the prediction of EquiBind with the \texttt{--autobox\_ligand} option of GNINA/SMINA. EquiBind is thus used to find the binding pocket and SMINA/GNINA to find the exact final binding pose. We use \texttt{--autobox\_add 10} to add an additional 10\AA{} on all 6 sides of the bounding box following \cite{equibind}.

\paragraph{P2Rank + SMINA/GNINA.} The bounding box in which GNINA/SMINA searches for binding poses is constructed around the pocket center that P2Rank predicts as the most likely binding pocket. P2Rank is thus used to find the binding pocket and SMINA/GNINA to find the exact final binding pose. The diameter of the search box is the diameter of a ligand conformer generated by RDKit with an additional 10\AA{} on all 6 sides of the bounding box.
\chapter{Further Results}

\section{Chapter 3: Torsional Diffusion} \label{app:torsional_results}

\subsection{Small molecules ensemble RMSD}

We also train and evaluate our model on the small molecules from GEOM-QM9 and report the performance in Table \ref{tab:results_qm9}. For these smaller molecules, cheminformatics methods already do very well and, given the very little flexibility and few rotatable bonds present, the accuracy of local structure significantly impacts the performance of torsional diffusion. RDKit achieves a mean recall AMR just over 0.23\AA, while torsional diffusion based on RDKit local structures results in a mean recall AMR of 0.178\AA. This is already very close lower bound of 0.17\AA \, that can be achieved with RDKit local structures (as approximately calculated by conformer matching). Torsional diffusion does significantly better than other ML methods, but is only on par with or slightly worse than OMEGA, which, evidently, has a better local structures for these small molecules.

\begin{table}[h!]
\caption{Performance of various methods on the GEOM-QM9 dataset test-set ($\delta=0.5$\AA). Again GeoDiff was retrained on the splits from \cite{ganea2021geomol}. } \label{tab:results_qm9}
\begin{tabular}{l|cccc|cccc} \toprule
                & \multicolumn{4}{c|}{Recall} & \multicolumn{4}{c}{Precision}  \\
                  & \multicolumn{2}{c}{Coverage $\uparrow$} & \multicolumn{2}{c|}{AMR $\downarrow$} & \multicolumn{2}{c}{Coverage $\uparrow$} & \multicolumn{2}{c}{AMR $\downarrow$} \\
Method & Mean & Med & Mean & Med & Mean & Med & Mean & Med \\ \midrule
RDKit            & 85.1          & \textbf{100.0} & 0.235          & 0.199          & 86.8          & \textbf{100.0} & 0.232          & 0.205          \\
OMEGA            & 85.5          & \textbf{100.0} & \textbf{0.177} & \textbf{0.126} & 82.9          & \textbf{100.0} & 0.224          & \textbf{0.186} \\
GeoMol           & 91.5          & \textbf{100.0} & 0.225          & 0.193          & 86.7          & \textbf{100.0} & 0.270          & 0.241          \\
GeoDiff          & 76.5          & \textbf{100.0} & 0.297          & 0.229          & 50.0          & 33.5           & 0.524          & 0.510          \\ \midrule
Torsional diffusion             & \textbf{92.8} & \textbf{100.0} & 0.178          & 0.147          & \textbf{92.7} & \textbf{100.0} & \textbf{0.221} & 0.195             \\ \bottomrule
\end{tabular}
\end{table}

\subsection{Ablation experiments} \label{app:ablations}

In Table \ref{tab:ablations} we present a set of ablation studies to evaluate the importance of different components of the proposed torsional diffusion method:
\begin{enumerate}
    \item \textit{Baseline} refers to the model described and tested throughout the paper.
    \item \textit{First order irreps} refers to the same model but with node irreducible representations kept only until order $\ell = 1$ instead of $\ell = 2$; this worsens the average error by about 5\%, but results in a 41\% runtime speed-up. 
    \item \textit{Only D.E. matching} refers to a model trained on conformers obtained by a random assignment of RDKit local structures to ground truth conformers (without first doing an optimal assignment as in Appendix \ref{app:matching}); this performs only marginally worse than full conformer matching.
    \item \textit{Train on ground truth L} refers to a model trained directly on the ground truth conformers without {conformer matching} but tested (as always) on RDKit local structures; although the training and validation score matching loss of this model is significantly lower, its inference performance reflects the detrimental effect of the local structure distributional shift.
    \item \textit{No parity equivariance} refers to a model whose outputs are parity invariant instead of parity equivariant; the model cannot distinguish a molecule from its mirror image and fails to learn, resulting in performance on par with a random baseline.
    \item \textit{Random $\boldsymbol{\tau}$} refers to a random baseline using RDKit local structures and uniformly random torsion angles.
\end{enumerate}

\begin{table}[h!]
\caption{Ablation studies tested on conformer generation on GEOM-DRUGS. Refer to Appendix \ref{app:ablations} for an explanation of each entry. } \label{tab:ablations}
\vspace{5pt}
\begin{tabular}{l|cccc|cccc} \toprule
                & \multicolumn{4}{c|}{Recall} & \multicolumn{4}{c}{Precision}  \\
                  & \multicolumn{2}{c}{Coverage $\uparrow$} & \multicolumn{2}{c|}{AMR $\downarrow$} & \multicolumn{2}{c}{Coverage $\uparrow$} & \multicolumn{2}{c}{AMR $\downarrow$} \\
Method & Mean & Med & Mean & Med & Mean & Med & Mean & Med \\ \midrule
Baseline & \textbf{72.7} & {80.0} & \textbf{0.582} & \textbf{0.565} & \textbf{55.2} & \textbf{56.9} & \textbf{0.778} & \textbf{0.729}    \\ \midrule
First order irreps & 70.1 & 77.9 & 0.605 & 0.589 & 51.4 & 51.4 & 0.817 & 0.783\\
Only D.E. matching & 72.5 & \textbf{81.1} & 0.588 & 0.569 & 53.8 & 56.1 & 0.794 & 0.749 \\
Train on ground truth $L$ & 34.8 & 22.4 & 0.920 & 0.909 & 22.3 & 7.8 & 1.182 & 1.136\\
No parity equivariance & 30.5 & 12.5 & 0.928 & 0.929 & 17.9 & 3.9 & 1.234 & 1.217\\
\midrule
Random $\boldsymbol{\tau}$ & 30.9 & 13.2 & 0.922 & 0.923 & 18.2 & 4.0 & 1.228 & 1.217\\
\bottomrule
\end{tabular}
\end{table}

\subsection{Ensemble properties}

In Table \ref{tab:properties_appendix}, we report the median absolute errors of the Boltzmann-weighted properties of the generated vs CREST ensembles, with and without GFN2-xTB relaxation. For all methods, the errors without relaxation are far too large for the computed properties to be chemically useful---for reference, the thermal energy at room temperature is 0.59 kcal/mol. In realistic settings, relaxation of local structures is necessary for any method, after which errors from global flexibility become important. After relaxation, torsional diffusion obtains property approximations on par or better than all competing methods. 

\begin{table}[h!]
    \caption{Median absolute error of generated v.s. ground truth ensemble properties with and without relaxation. $E, \Delta\epsilon, E_{\min}$ in kcal/mol, $\mu$ in debye.}\label{tab:properties_appendix}
    \vspace{5pt}
    \centering
    \begin{tabular}{l|cccc|cccc}
    \toprule 
    & \multicolumn{4}{c|}{Without relaxation} &  \multicolumn{4}{c}{With relaxation} \\
    Method & $E$ & $\mu$ & $\Delta \epsilon$ & $E_{\min}$ & $E$ & $\mu$ & $\Delta \epsilon$ & $E_{\min}$ \\\midrule
    RDKit & 39.08 & 1.40 & 5.04 & 39.14 & 0.81 & 0.52 & 0.75 & 1.16 \\
    OMEGA & \textbf{16.47} & \textbf{0.78} & \textbf{3.25} & \textbf{16.45} & 0.68 & 0.66 & 0.68 & 0.69 \\
    GeoMol & 43.27 & 1.22 & 7.36 & 43.68 & 0.42 & \textbf{0.34} & 0.59 & 0.40 \\
    GeoDiff & 18.82 & 1.34 & 4.96 & 19.43 & 0.31 & {0.35} & 0.89 & 0.39 \\
    Tor. Diff. & 36.91 & 0.92 & 4.93 & 36.94 & \textbf{0.22} & {0.35} & \textbf{0.54} & \textbf{0.13} \\ \bottomrule
    \end{tabular}
\end{table}

\section{Chapter 4: DiffDock} 

\subsection{Physically plausible predictions}\label{appx:steric_clashes}

\begin{table}[htb]
    \caption{\textbf{Steric clashes.} Percentage of test complexes for which the predictions of the different methods exhibit steric clashes. Search-based methods never produced steric clashes.}
    \label{tab:steric_clashes}
     \begin{small}
     \begin{center}

    \begin{tabular}{lcc}
    \toprule
      & Top-1  & Top-5 \\
    
        Method & \% steric clashes  & \% steric clashes\\
    \midrule
    \textsc{EquiBind}           & 26  & -   \\
    \textsc{TANKBind}           & 6.6 & 3.6   \\ \midrule
    \textbf{\textsc{DiffDock} (10)}  & \textbf{2.8} & \textbf{0}  \\
    \textbf{\textsc{DiffDock} (40)}  & \textbf{2.2} & \textbf{2.2}  \\
    \bottomrule
    \end{tabular}
    \end{center}
    \end{small}
\end{table}

Due to the averaging phenomenon of regression-based methods such as TANKBind and EquiBind, they make predictions at the mean of the distribution. If aleatoric uncertainty is present, such as in case of symmetric complexes, this leads to predicting the ligand to be at an un-physical state in the middle of the possible binding pockets as visualized in Figure~\ref{fig:symmetric_complexes}. The Figure also illustrates how \textsc{DiffDock} does not suffer from this issue and is able to accurately sample from the modes.

In the scenario when epistemic uncertainty about the correct ligand conformation is present, this often results in ``squashed-up" predictions of the regression-based methods as visualized in Figure~\ref{fig:self_intersections}. If there is uncertainty about the correct conformer, the square error minimizing option is to put all atoms close to the mean.

\begin{figure}[t]
    \centering
    \includegraphics[width=\textwidth]{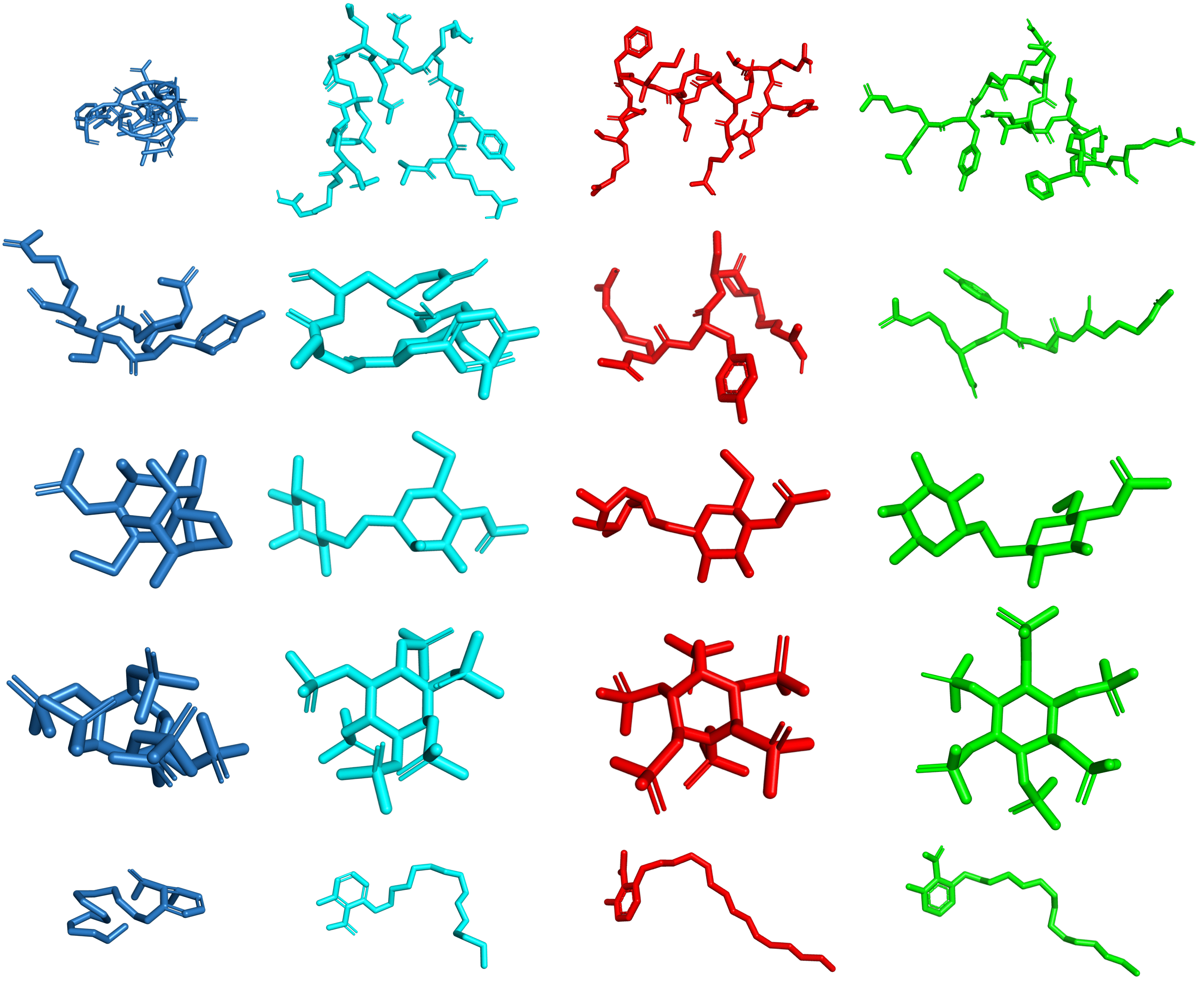}
    \caption{\textbf{Ligand self-intersections.} TANKBind (blue), EquiBind (cyan), \textsc{DiffDock} (red), and crystal structure (green). Due to the averaging phenomenon that occurs when epistemic uncertainty is present, the regression-based deep learning models tend to produce ligands with atoms that are close together, leading to self-intersections. \textsc{DiffDock}, as a generative model, does not suffer from this averaging phenomenon, and we never found a self-intersection in any of the investigated results of \textsc{DiffDock}.}
    \label{fig:self_intersections}
\end{figure}

These averaging phenomena in the presence of either aleatoric or epistemic uncertainty cause the regression-based methods to often generate steric clashes and self intersections. To investigate this quantitatively, we determine the fraction of test complexes for which the methods exhibit steric clashes. We define a ligand as exhibiting a steric clash if one of its heavy atoms is within 0.4\AA{} of a heavy receptor atom. This cutoff is used by protein quality assessment tools and in previous literature \cite{Ramachandran2011stericClashes}. Table~\ref{tab:steric_clashes} shows that \textsc{DiffDock}, as a generative model, produces fewer steric clashes than the regression-based baselines. We generally observe no unphysical predictions from \textsc{DiffDock} unlike the self intersections that, e.g., TANKBind produces (Figure~\ref{fig:self_intersections}) or its incorrect local structures (Figure~\ref{fig:plausible_local_structures}). This is also visible in the randomly chosen examples of Figure~\ref{fig:random_examples} and can be examined in our repository, where we provide all predictions of \textsc{DiffDock} for the test set.

\begin{figure}[t]
    \centering
    \includegraphics[width=\textwidth]{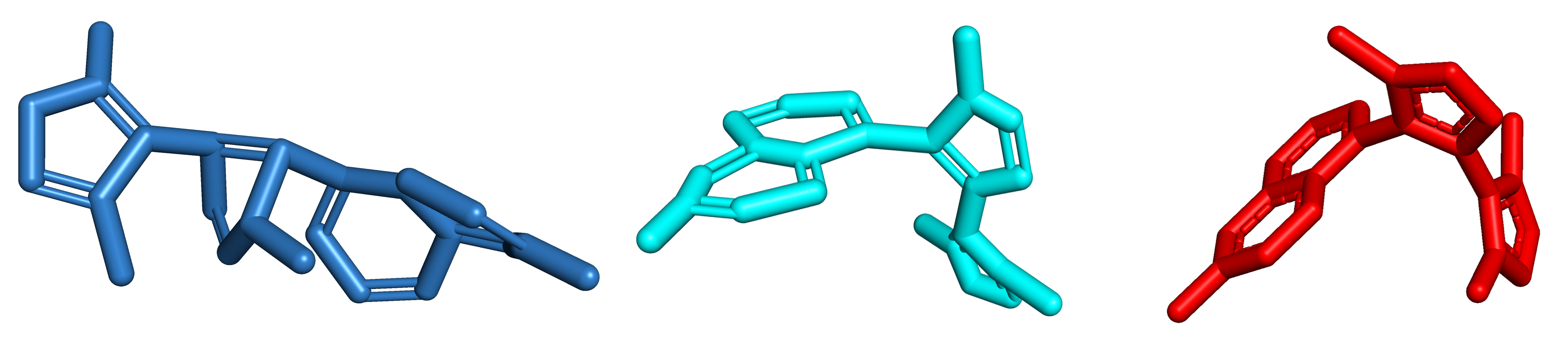}
    \caption{\textbf{Chemically plausible local structures.} TANKBind (blue), EquiBind (cyan), and \textsc{DiffDock} (red) structures for complex 6g2f. EquiBind (without their correction step) produces very unrealistic local structures and TANKBind, e.g., produces non-planar aromatic rings. \textsc{DiffDock}'s local structures are the realistic local structures of RDKit. }
    \label{fig:plausible_local_structures}
\end{figure}

\subsection{Further Results and Metrics} \label{app:diffdock_furtherres}

In this section, we present further evaluation metrics on the results presented in Table~\ref{tab:results_main}. In particular, for both top-1 (Table~\ref{tab:top1complete}) and top-5 (Table~\ref{tab:top5complete}) we report: 25th, 50th and 75th percentiles, the proportion below 2\AA{} and below 5\AA{} of both ligand RMSD and centroid distance. Moreover, while Volkov et al. \cite{Volkov2022PDBBindSplits} advocated against artificial protein set splits and for time-based splits, for completeness, in Table~\ref{tab:results_unseen} and Figure~\ref{fig:histogram_unseen_rec}, we report the performances of the different methods when evaluated exclusively on the portion of the test set where the UniProt IDs of the proteins are not contained in the data that is seen by \textsc{DiffDock} in its training and validation.

\begin{table*}[!h]
    \caption{\textbf{Top-1 PDBBind docking.}  }
    \label{tab:top1complete}
     \begin{small}
     \begin{center}
     \makebox[\textwidth][c]{
    \begin{tabular}{=l+c+c+c+c+c|+c+c+c+c+c}
    
    \toprule
     &\multicolumn{5}{c}{Ligand RMSD} & \multicolumn{5}{c}{Centroid Distance} \\
     &\multicolumn{3}{c}{Percentiles $\downarrow$} & \multicolumn{2}{c}{\begin{tabular}{@{}c@{}}\% below\\threshold $\uparrow$\end{tabular} }  &\multicolumn{3}{c}{Percentiles $\downarrow$} & \multicolumn{2}{c}{\begin{tabular}{@{}c@{}}\% below\\thresh. $\uparrow$\end{tabular} } \\
    
    \textbf{Methods} & 25th & 50th & 75th & 5 \AA{}  &  2 \AA{} & 25th & 50th & 75th & 5 \AA{}  &  2 \AA{} \\
    \midrule
    \textsc{Autodock Vina} & 5.7  &  10.7 &  21.4 & 21.2 & 5.5 & 1.9 &  6.2 & 20.1 & 47.1 & 26.5 \\
     \textsc{QVina-W} & 2.5  &  7.7 &  23.7 & 40.2 & 20.9 & 0.9 &  3.7 & 22.9 & 54.6 & 41.0 \\
     \textsc{GNINA} & 2.4 & 7.7   & 17.9 & 40.8  & 22.9 & 0.8 &  3.7 & 23.1 & 53.6 & 40.2 \\
    \textsc{SMINA} & 3.1 & 7.1 & 17.9 & 38.0 & 18.7 & 1.0 &  2.6 & 16.1 & 59.8 & 41.6 \\
    \textsc{GLIDE} (c.) & 2.6 & 9.3   & 28.1 & 33.6 & 21.8 & 0.8 &  5.6 & 26.9 & 48.7 & 36.1 \\
    \textsc{EquiBind} & 3.8 & 6.2 &  10.3 & 39.1 & 5.5 &  1.3 & 2.6 & 7.4 & 67.5&  40.0 \\ \midrule
    \textsc{TANKBind} & 2.5 & 4.0 & 8.5 & 59.0 & 20.4 & 0.9 & 1.8 & 4.4 & 77.1 &  55.1  \\ 
    \textsc{P2Rank+SMINA} & 2.9 & 6.9 & 16.0 & 43.0 & 20.4 & 0.8 & 2.6 & 14.8 & 60.1 & 44.1   \\ 
    \textsc{P2Rank+GNINA} & 1.7 & 5.5 & 15.9 & 47.8 & 28.8 & 0.6 & 2.2 & 14.6 & 60.9 & 48.3   \\ 
    \textsc{EquiBind+SMINA} & 2.4 & 6.5 & 11.2 & 43.6 & 23.2 & 0.7 & 2.1 & 7.3 & 69.3 & 49.2  \\ 
    \textsc{EquiBind+GNINA} & 1.8 & 4.9 & 13 & 50.3 & 28.8 & 0.6 & 1.9 & 9.9   & 66.5 & 50.8  \\ \midrule 
    \textbf{\textsc{DiffDock} (10)} & 1.5 & 3.6 & \textbf{7.1} & 61.7 & 35.0 & \textbf{0.5} & \textbf{1.2} & 3.3 & \textbf{80.7} & 63.1   \\ 
    \textbf{\textsc{DiffDock} (40)} & \textbf{1.4} & \textbf{3.3} & 7.3 & \textbf{63.2} & \textbf{38.2} & \textbf{0.5} & \textbf{1.2} & \textbf{3.2} & 80.5 & \textbf{64.5}   \\ 
     \bottomrule
    \end{tabular}}
    \end{center}
    \end{small}
\end{table*}

\begin{table*}[!h]
    \caption{\textbf{Top-5 PDBBind docking.}  }
    \label{tab:top5complete}
     \begin{small}
     \begin{center}
     \makebox[\textwidth][c]{
    \begin{tabular}{=l+c+c+c+c+c|+c+c+c+c+c}
    
    \toprule
     &\multicolumn{5}{c}{Ligand RMSD} & \multicolumn{5}{c}{Centroid Distance} \\
     &\multicolumn{3}{c}{Percentiles $\downarrow$} & \multicolumn{2}{c}{\begin{tabular}{@{}c@{}}\% below\\threshold $\uparrow$\end{tabular} }  &\multicolumn{3}{c}{Percentiles $\downarrow$} & \multicolumn{2}{c}{\begin{tabular}{@{}c@{}}\% below\\thresh. $\uparrow$\end{tabular} } \\
    
    \textbf{Methods} & 25th & 50th & 75th & 5 \AA{}  &  2 \AA{} & 25th & 50th & 75th & 5 \AA{}  &  2 \AA{} \\
    \midrule
    \textsc{GNINA} & 1.6 & 4.5 & 11.8 & 52.8 & 29.3 & 0.6 & 2.0 & 8.2 & 66.8 & 49.7   \\
    \textsc{SMINA} & 1.7 & 4.6 & 9.7 & 53.1 & 29.3 & 0.6 & 1.85 & 6.2 & 72.9 & 50.8  \\ \midrule
    \textsc{TANKBind} & 2.1 & 3.4 & 6.1 & 67.5 & 24.5 & 0.8 & 1.4 & 2.9 & 86.8 & 62.0  \\ 
    \textsc{P2Rank+SMINA} & 1.5 & 4.4 & 14.1 & 54.8 & 33.2 & 0.6 & 1.8 & 12.3 & 66.2 & 53.4   \\ 
    \textsc{P2Rank+GNINA} & 1.4 & 3.4 & 12.5 & 60.3 & 38.3 & 0.5 & 1.4 & 9.2 & 69.3 & 57.3  \\ 
    \textsc{EquiBind+SMINA} & 1.3 & 3.4 & 8.1 & 60.6 & 38.6 & 0.5 & 1.3 & 5.1 & 74.9 & 58.9   \\ 
    \textsc{EquiBind+GNINA} & 1.4 & 3.1 & 9.1 & 61.7 & 39.1 & 0.5 & 1.1 & 5.3 & 73.7 & 60.1  \\ \midrule 
    \textbf{\textsc{DiffDock} (10)} & \textbf{1.2} & 2.7 & \textbf{4.9} & 75.1 & 40.7 & 0.5 & 1.0 & 2.2 & 87.0 & 72.3   \\ 
    \textbf{\textsc{DiffDock} (40)} & \textbf{1.2} & \textbf{2.4} & 5.0 & \textbf{75.5} & \textbf{44.7} & \textbf{0.4} & \textbf{0.9} & \textbf{1.9} & \textbf{88.0} & \textbf{76.7}    \\ 
     \bottomrule
    \end{tabular}}
    \end{center}
    \end{small}
\end{table*}

\begin{figure}[!h]
\begin{center}
\includegraphics[width=.48\textwidth]{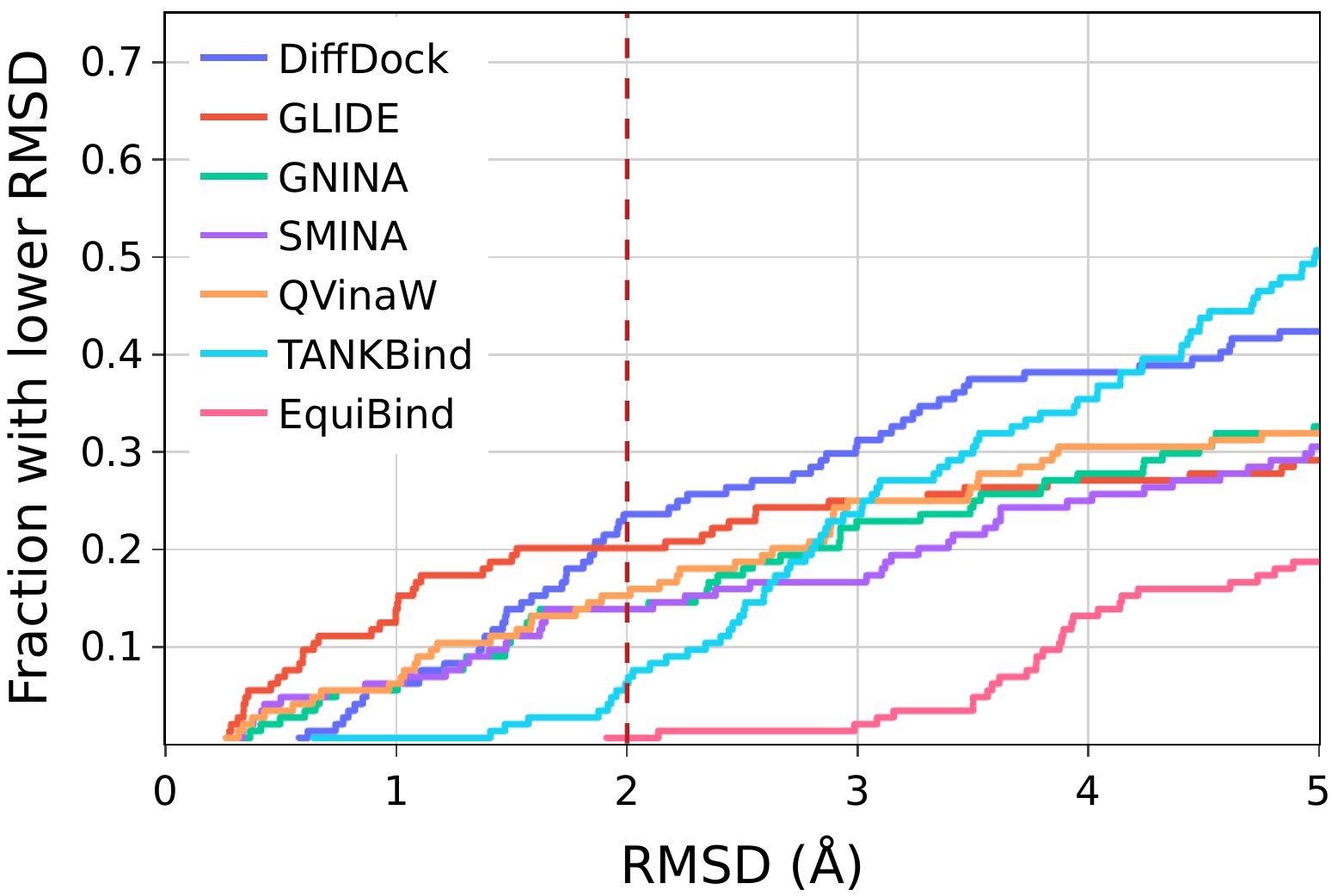}
\includegraphics[width=.48\textwidth]{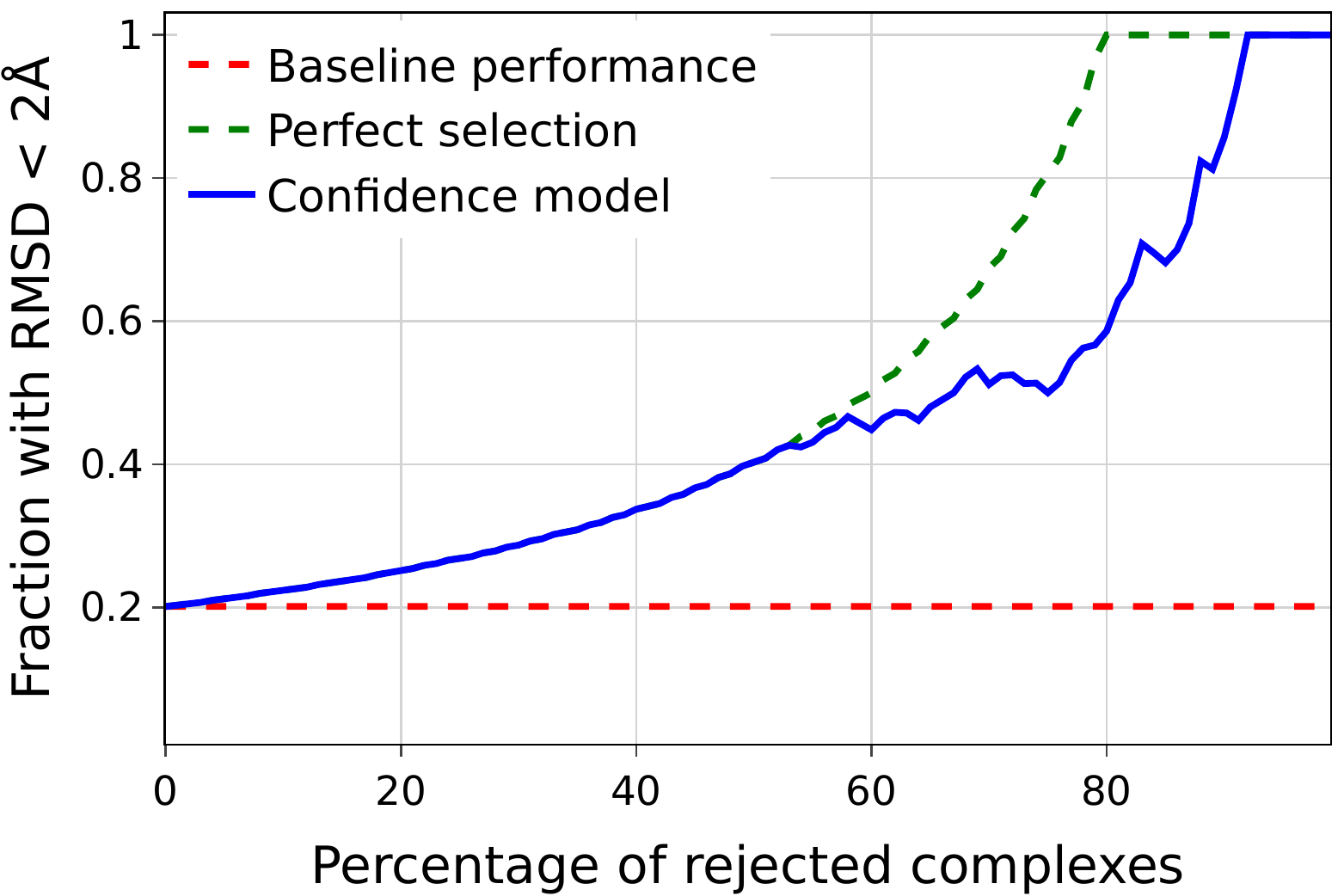}
\caption{\textbf{PDBBind docking on unseen receptors.} \textbf{Left:} cumulative density histogram of the methods' RMSD. \textbf{Right:} Percentage of predictions with RMSD below 2\AA{} when only making predictions for the portion of the dataset where \textsc{DiffDock} is most confident. }
\label{fig:histogram_unseen_rec}
\end{center}
 \vskip -0.1cm
\end{figure}

\begin{table}[!h]
    \caption{\textbf{PDBBind docking on unseen receptors.} Percentage of predictions for which the RMSD to the crystal structure is below 2\AA{} and the median RMSD. ``*" indicates the method run exclusively on CPU, ``-" means not applicable; some cells are empty due to infrastructure constraints. }
    \label{tab:results_unseen}
     \begin{small}
     \begin{center}

    \begin{tabular}{l=c+c|+c+c|c}
    \toprule
      & \multicolumn{2}{c}{Top-1 RMSD} & \multicolumn{2}{c}{Top-5 RMSD}  &  Average\\
    
        Method & \,\%$<$2\, & \,Med.\, & \,\%$<$2\, & \,Med.\, & \,Runtime (s)\, \\
    \midrule
     \textsc{Autodock Vina}             &  1.4 & 16.6 &      &      &   205*     \\
    \textsc{QVinaW}             &  15.3\tiny{$\pm$2.9} & 10.3\tiny{$\pm$2.3} &      &      &  49*     \\
    \textsc{GNINA}              &  14.0\tiny{$\pm$2.9} & 13.6\tiny{$\pm$2.6} & 23.0\tiny{$\pm$3.5} & 7.0\tiny{$\pm$1.1}  &  127    \\
    \textsc{SMINA}              &  14.0\tiny{$\pm$2.9} & 8.5\tiny{$\pm$2.6}  & 21.7\tiny{$\pm$3.5} & 6.7\tiny{$\pm$0.7}  &  126*    \\
    \textsc{GLIDE}              &  19.6\tiny{$\pm$3.3} & 18.0\tiny{$\pm$3.9} &      &      &  1405*   \\
    \textsc{EquiBind}           &  0.7\tiny{$\pm$0.7}  & 9.1\tiny{$\pm$0.6}  &  -   &  -   &  \textbf{0.04}   \\ 
    \textsc{TANKBind}           &  6.3\tiny{$\pm$2.0}  & \textbf{5.0\tiny{$\pm$0.2}}  & 11.1\tiny{$\pm$2.6} & 4.4\tiny{$\pm$0.3}  &  0.7/2.5  \\ \midrule
    \textbf{\textsc{DiffDock} (10)}  &  15.7\tiny{$\pm$3.1} & 6.1\tiny{$\pm$0.7} & 21.8\tiny{$\pm$3.3} & 4.2\tiny{$\pm$0.4} & 10  \\   
    \textbf{\textsc{DiffDock} (40)}  &  \textbf{20.8\tiny{$\pm$3.3}} & 6.2\tiny{$\pm$0.8}  & \textbf{28.7\tiny{$\pm$3.6}} & \textbf{3.9\tiny{$\pm$0.4}}  & 40   \\
    \bottomrule
    \end{tabular}
    \end{center}
    \end{small}
\end{table}

\subsection{Ablation studies} \label{app:diffdock_ablations}

Below we report the performance of our method over different hyperparameter settings. In particular, we highlight the different ways in which it is possible to control the tradeoff between runtime and accuracy in our method. These mainly are: (1) model size, (2) diffusion time, and (3) diffusion samples.

\paragraph{Model size.}  The final \textsc{DiffDock} score model has 20.24 million parameters from its 6 convolution layers with 48 scalar and 10 vector features. In Table~\ref{tab:model_size} we show the results for a smaller score model with 5 convolutions, 24 scalar, and 6 vector features resulting in 3.97 million parameters that can be trained on a single 48GB GPU. The confidence model used is the same for both score models. We find that scaling up the model size helped improve performance which we did as far as possible using four 48GB GPUs for training. Scaling the model size further is a promising avenue for future work.

\paragraph{Protein embeddings.}  As described in Appendix \ref{app:architecture}, the architecture uses as initial features of protein residues the language model embeddings from ESM2 \cite{Lin2022ESM2} in order for the model to more easily reason about the protein sequence. In Table~\ref{tab:model_size} we show that while these provide some improvements they are not necessary to obtain state-of-the-art performance.

\begin{table}[htb]
    \caption{\textbf{Model size and protein embeddings comparison.} All methods receive a small molecule and are tasked to find its binding location, orientation, and conformation. Shown is the percentage of predictions for which the RMSD to the crystal structure is below 2\AA{} and the median RMSD.}
    \label{tab:model_size}
     \begin{small}
     \begin{center}

    \begin{tabular}{l=c+c|+c+c|+c}
    \toprule
      & \multicolumn{2}{c}{Top-1 RMSD (\AA{})} & \multicolumn{2}{c}{Top-5 RMSD (\AA{})} & Average\\ \rule{0pt}{1.5ex}  
    
        Method & \,\%$<$2\, & \,Med.\, & \,\%$<$2\, & \,Med.\, & Runtime (s)  \\
    \midrule
     \textbf{\textsc{DiffDock-small-noESM} (10)}  &  26.2 & 4.7  & 32.0 & 3.2  &  7      \\
     \textbf{\textsc{DiffDock-small-noESM} (40)}  &   28.4 & 3.8  & 37.7 & 2.6  & 28      \\ \midrule
    \textbf{\textsc{DiffDock-small} (10)}  &  26.0 & 4.3  & 33.3 & 3.2  &  7      \\
    \textbf{\textsc{DiffDock-small} (40)}  &  31.1 & 4.0  & 38.0 & 2.7  & 28      \\
    \midrule
     \textbf{\textsc{DiffDock-noESM} (10)}  &  33.9 & 3.8  & 39.4 & 2.8  & 10      \\
     \textbf{\textsc{DiffDock-noESM} (40)}  &  34.2 & 3.5  & 42.7 & 2.4  & 40      \\ \midrule
    \textbf{\textsc{DiffDock} (10)}  &  35.0 & 3.6  & 40.7 & 2.7  &  10     \\
    \textbf{\textsc{DiffDock} (40)}  &  \textbf{38.2} & \textbf{3.3}  & \textbf{44.7} & \textbf{2.4}  & 40      \\
    \bottomrule
    \end{tabular}
    \end{center}
    \end{small}
\vspace{-0.2cm}
\end{table}

\paragraph{Diffusion steps. } Another hyperparameter determining the runtime of the method during inference is the number of steps we take during the reverse diffusion. Since these are applied sequentially \textsc{DiffDock}'s runtime scales approximately linearly with the number of diffusion steps. In the rest of the paper, we always use 20 steps, but in Figure~\ref{fig:diffusion_steps} we show how the performance of the model varies with the number of steps. We note that the model reaches nearly the full performance even with just 10 steps, suggesting that the model can be sped up 2x with a small drop in accuracy.

\begin{figure}[thb]
    \centering
    \includegraphics[width=0.65\textwidth]{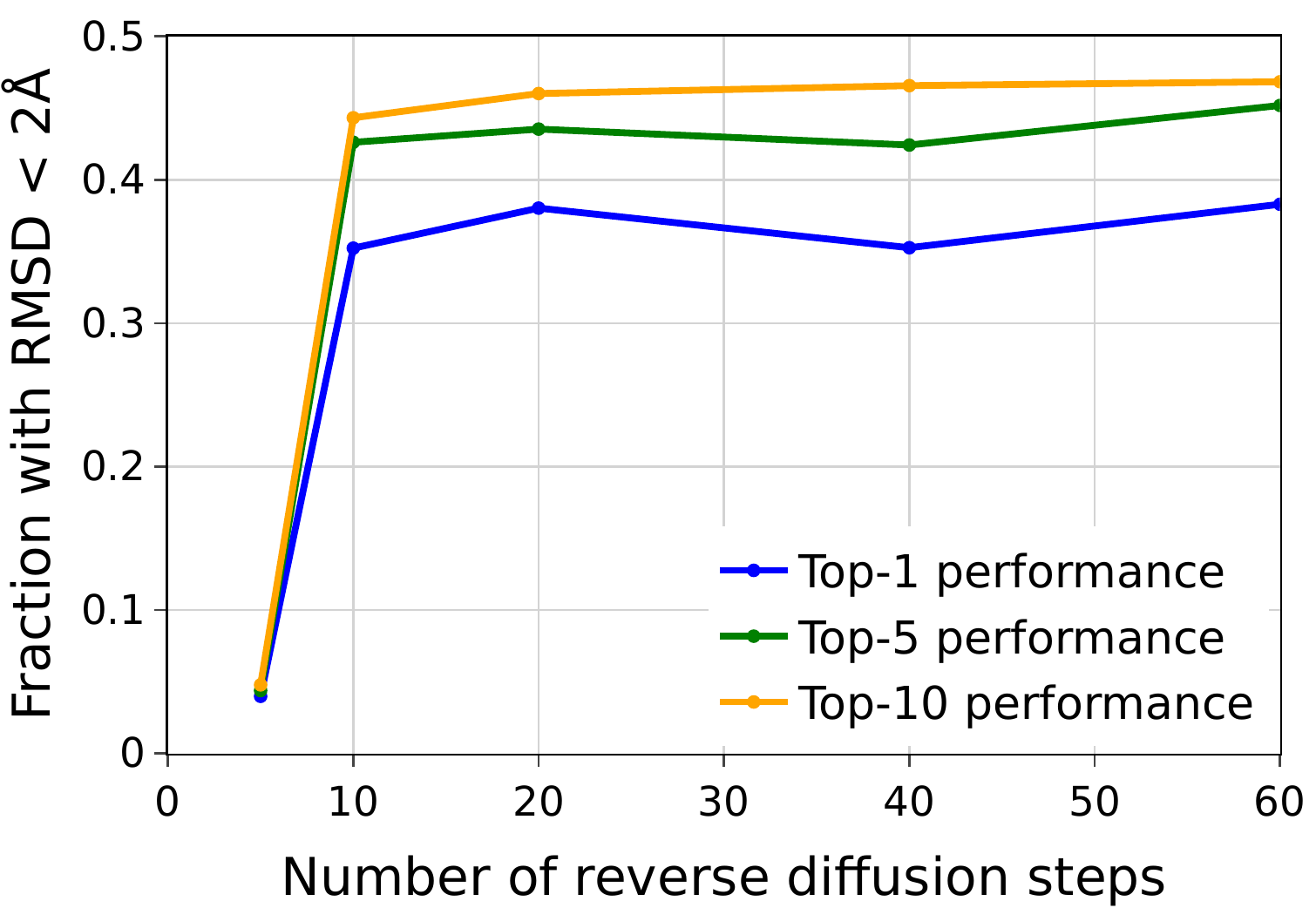}
    \caption{ Ablation study on the number of reverse diffusion steps. }
    \label{fig:diffusion_steps}
\end{figure}

\paragraph{Diffusion samples. } Given a score-based model and a number of steps for the diffusion model, it remains to be determined how many independent samples $N$ to query from the diffusion model and then feed to the confidence model. As expected the more samples the confidence model receives the more likely it is that it will find a pose that it is confident about and, therefore, the higher the performance. The runtime of \textsc{DiffDock} on GPU scales sublinearly until the different samples fit in parallel in the model (depends on the protein size and the GPU memory) and approximately linearly for larger sample sizes (however it can be easily parallelized across different GPUs). In Figure~\ref{fig:results_main} we show how the success rate for the top-1, top-5, and top-10 prediction change as a function of $N$. For example, for the top-1 prediction, the proportion of the prediction with RMSD below 2\AA{} varies between 22\% of a random sample of the diffusion model ($N=1$) to 38\% when the confidence model is allowed to choose between 40 samples.

\subsection{Affinity prediction}

To validate the quality of the predicted poses, we also do some experiments in predicting the binding affinity labels already present in PDBBind. In this section we report some preliminary results on this task that show that a simple approach can already achieve results competitive with the state-of-the-art. We leave a more thorough and sophisticated analysis on how to best use the \textsc{DiffDock} framework for binding affinity to future work. 

\paragraph{Affinity prediction framework.} We train the binding affinity predictor by generating a fixed number of poses with the diffusion model and then feeding them to an affinity prediction model with architecture almost analogous to the confidence model. This affinity prediction model takes in the poses as a single heterogeneous graph with a single receptor but multiple sets of ligand nodes, which have edges to the same receptor but not among themselves. After the final interaction layer, the scalar representations of nodes in each ligand are aggregated with a mean pooling and passed through a set of dense layers (as it is done for the confidence prediction). Then, the representations of the different ligands are aggregated using multiple permutation invariant aggregators (mean, maximum, minimum, and standard deviation) as in Corso et al. \cite{corso2020principal}, and transformed with another set of dense layers producing a single output, the predicted affinity.

\paragraph{Dataset, baselines, and training.} To train we use PDBBind with the same splits used to train the diffusion and confidence models. This provides for each of the complexes an affinity measure that consists of inhibiting concentration ($IC50$), inhibition constant ($K_i$), or dissociation constant ($K_d$) and its conversion to the $-\log K_d/K_i$ metric. As baselines, we use a series of state-of-the-art sequence-based and structure-based methods: TransformerCPI \cite{chen2020transformercpi}, MONN \cite{li2020monn}, IGN \cite{jiang2021interactiongraphnet}, PIGNet \cite{moon2022pignet}, HOLOPTOT \cite{somnath2021multi}, STAMPDPI \cite{wang2022structure} and TANKBind \cite{Lu2022TankBind}. We take the baselines' performances from Lu et al. \cite{Lu2022TankBind}. 

\begin{table}[!htb]
    \caption{\textbf{Binding affinity prediction. } Prediction of $-\log K_d/K_i$ on PDBBind. The baseline numbers are from Lu et al. \cite{Lu2022TankBind}. No hyperparameter tuning was performed for \textsc{DiffDock}'s performance.  }
    \label{tab:results_affinity}
     \begin{small}
     \begin{center}

    \begin{tabular}{lcccc}
    \toprule
    Methods  & RMSE $\downarrow$ & Pearson $\uparrow$ & Spearman $\uparrow$ & MAE $\downarrow$  \\ \midrule
    \textsc{TransCPI} & 1.741 & 0.576    & 0.540      & 1.404 \\
    \textsc{MONN}     & 1.438 & 0.624    & 0.589     & 1.143 \\
    \textsc{PIGNet}   & 2.640  & 0.511    & 0.489     & 2.110  \\
    \textsc{IGN}      & 1.433 & 0.698    & 0.641     & 1.169 \\
    \textsc{HOLOPROT} & 1.546 & 0.602    & 0.571     & 1.208 \\
    \textsc{STAMPDPI} & 1.658 & 0.545    & 0.411     & 1.325 \\
    \textsc{TANKBind} & \textbf{1.346} & \textbf{0.726}    & 0.703     & 1.070 \\  \midrule
    \textbf{\textsc{DiffDock}}  & 1.347 & 0.692 & \textbf{0.718} & \textbf{1.052} \\
    \bottomrule
    \end{tabular}
    \end{center}
    \end{small}
\end{table}

\paragraph{Results.} The results presented in Table~\ref{tab:results_affinity} highlight how even preliminary results with a straightforward way of using \textsc{DiffDock}'s predictions for affinity prediction achieve a performance that is on par with the state-of-the-art. We hope this can motivate future work on better integrating affinity prediction in the method and scaling to larger amounts of data.

\subsection{Visualizations}

\begin{figure}[htb]
    \centering
    \includegraphics[width=0.95\textwidth]{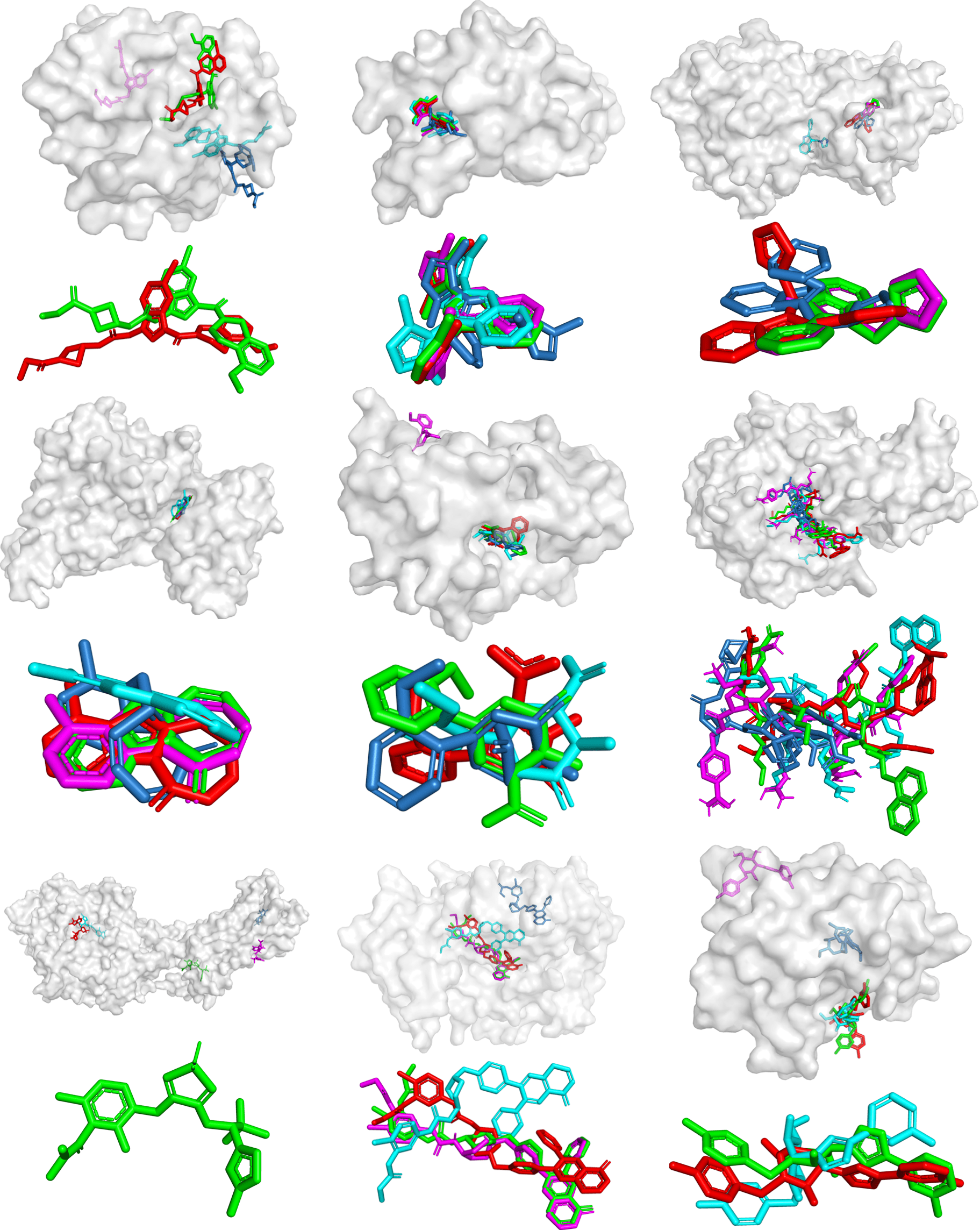}
    \caption{\textbf{Randomly picked examples.} The predictions of TANKBind (blue), EquiBind (cyan), GNINA (magenta), \textsc{DiffDock} (red), and crystal structure (green). Shown are the predictions once with the protein and without it below. The complexes were chosen with a random number generator from the test set. TANKBind often produces self intersections (examples at the top-right; middle-middle; middle-right; bottom-right). \textsc{DiffDock} and GNINA sometimes almost perfectly predict the bound structure (e.g., top-middle). The complexes in reading order are: 6p8y, 6mo8, 6pya, 6t6a, 6e30, 6hld, 6qzh, 6hhg, 6qln.}
    \label{fig:random_examples}
\end{figure}

\begin{figure}[ht]
    \centering
    \includegraphics[width=0.85\textwidth]{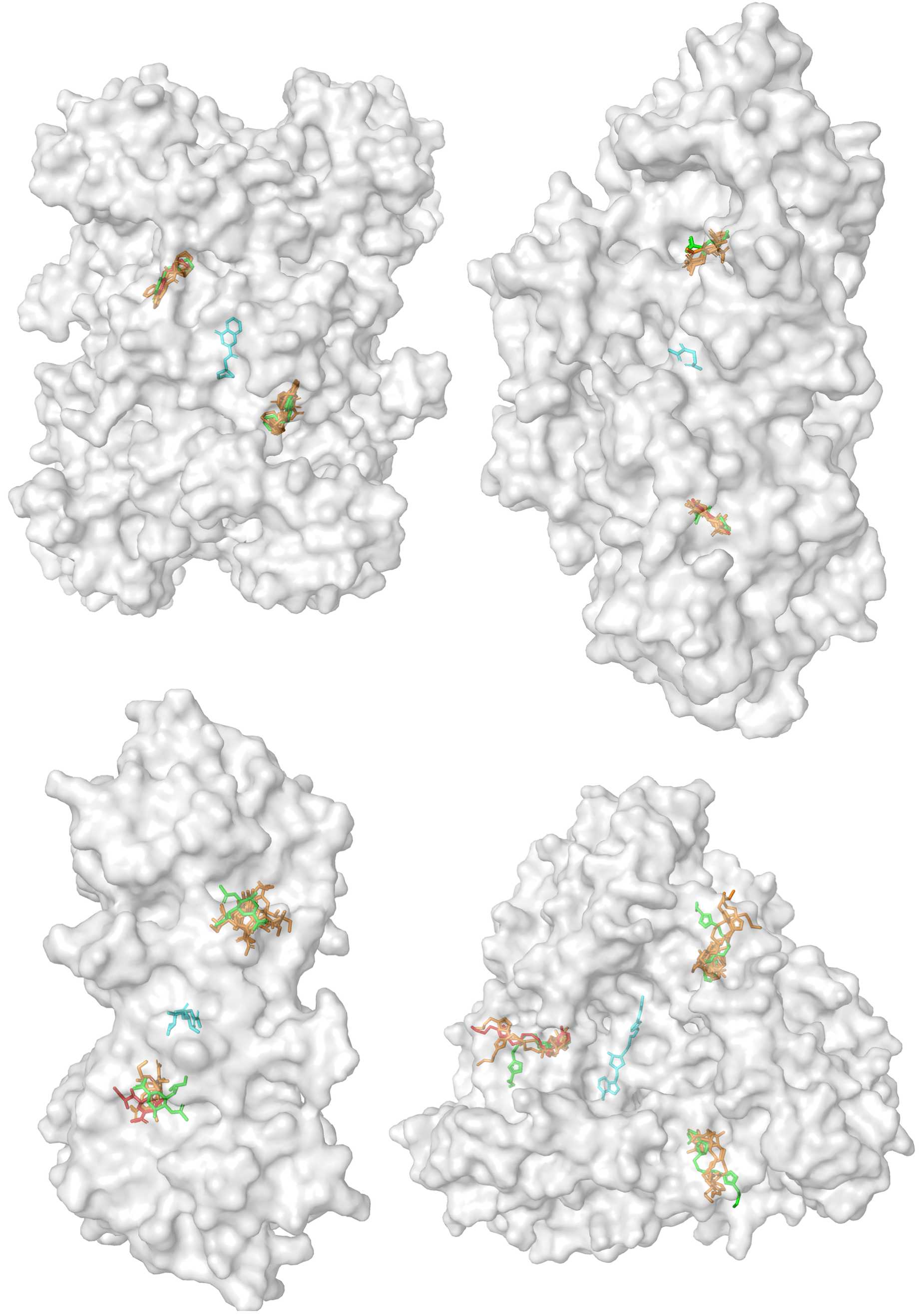}
    \caption{\textbf{Symmetric complexes and multiple modes.} EquiBind (cyan), \textsc{DiffDock} highest confidence sample (red), all other \textsc{DiffDock} samples (orange), and the crystal structure (green). We see that, since it is a generative model, \textsc{DiffDock} is able to produce multiple correct modes and to sample around them. Meanwhile, as a regression-based model, EquiBind is only able to predict a structure at the mean of the modes. The complexes are unseen during training. The PDB IDs in reading order: 6agt, 6gdy, 6ckl, 6dz3.}
    \label{fig:symmetric_complexes}
\end{figure}

\begin{figure}[ht]
    \centering
    \includegraphics[width=\textwidth]{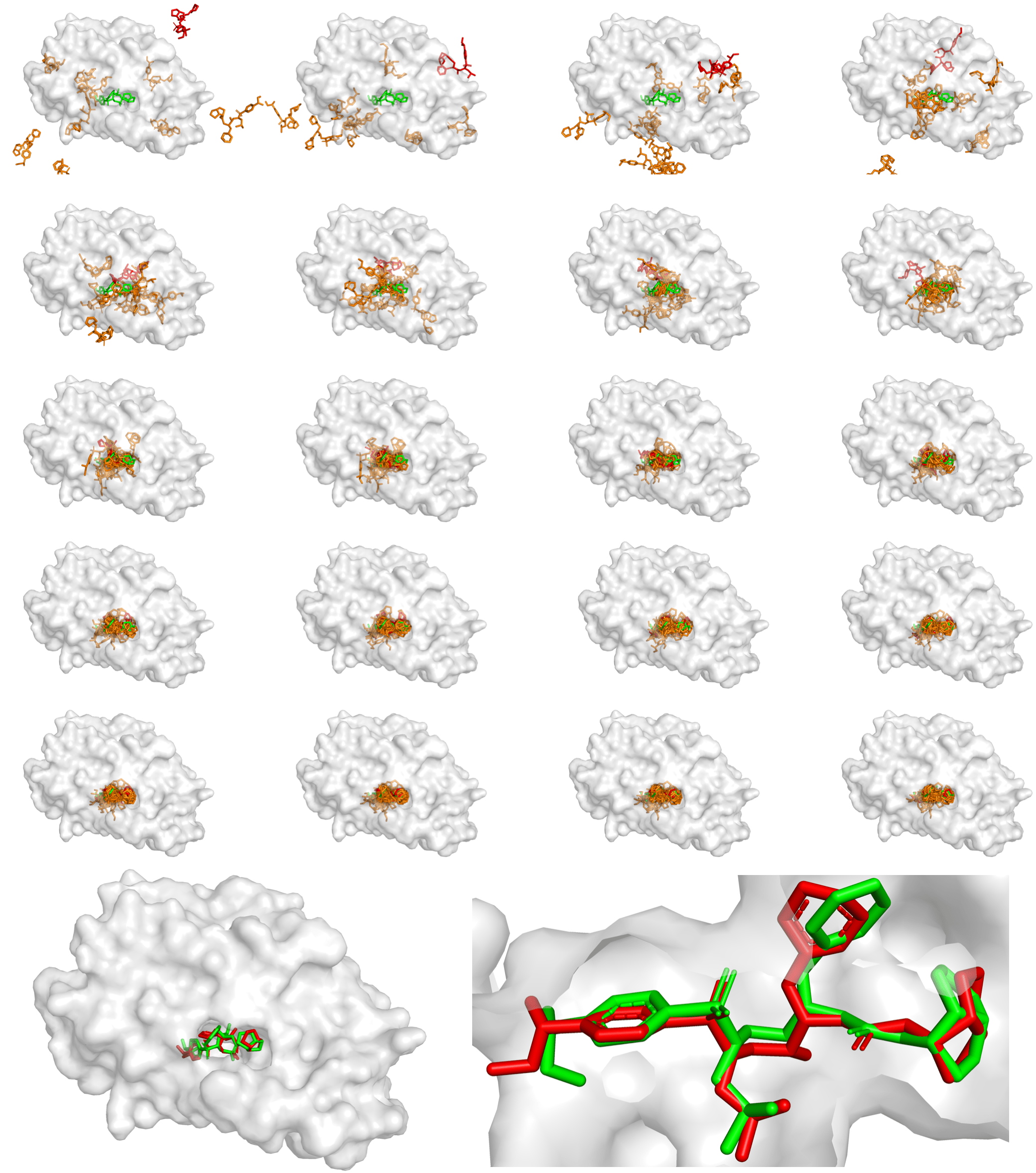}
    \caption{\textbf{Reverse Diffusion.} Reverse diffusion of a randomly picked complex from the test set. Shown are \textsc{DiffDock} highest confidence sample (red), all other \textsc{DiffDock} samples (orange), and the crystal structure (green). Shown are the 20 steps of the reverse diffusion process (in reading order) of \textsc{DiffDock} for the complex 6oxx. Videos of the reverse diffusion are available at \url{https://github.com/gcorso/DiffDock}.}
    \label{fig:reverse_diffusion2}
\end{figure}

\begin{singlespace}
\bibliography{main}
\bibliographystyle{plain}
\end{singlespace}

\end{document}